\documentclass[11pt]{article}
\usepackage[T1]{fontenc}
\usepackage{amssymb, amsmath, amsthm, graphicx, subfigure, mathtools}
\usepackage[margin=1in]{geometry}
\mathtoolsset{showonlyrefs}
\usepackage{tikz}

\usepackage{float}
\usepackage[colorlinks=true, allcolors=blue]{hyperref}

\usepackage{xcolor}
\usepackage{braket}
\usepackage{esvect}
\usepackage{enumerate}
\usepackage{scalefnt}
\usepackage{mdframed}

\usepackage{algorithm}
\usepackage{algorithmicx}
\usepackage[noend]{algpseudocode}

\DeclareMathOperator*{\E}{\mathbb{E}}
\DeclareMathOperator*{\Var}{\mathrm{Var}}

\newcommand{\THRESH}{{\cal THRESH}}

\newcommand{\bone}{\mathbf{1}}

\newcommand{\br}{\mathbf{r}}
\newcommand{\bv}{\mathbf{v}}
\newcommand{\bu}{\mathbf{u}}

\newcommand{\cP}{\mathcal{P}}

\newcommand{\GE}{\;\ge\;}

\newcommand{\true}{\mathtt{true}}
\newcommand{\false}{\mathtt{false}}

\newcommand{\sat}{\mathtt{sat}}
\newcommand{\unsat}{\mathtt{unsat}}
\newcommand{\unassigned}{\ast}

\newcommand{\Cov}{\mathrm{Cov}}

\newcommand{\dd}{\mathrm{d}}
\newcommand{\ee}{\mathrm{e}}

\newcommand{\He}{\mathrm{He}}

\newcommand{\MAXCSP}{\mathrm{MAX\,\,CSP}}

\newcommand{\Val}{\mathrm{Val}}

\newcommand{\comp}{\mathsf{Completeness}}
\newcommand{\sound}{\mathsf{Soundness}}

\newcommand{\supp}{\mathrm{supp}}

\newcommand{\buperp}[2]{\bu_{#1}^{\perp\{#2\}}}
\newcommand{\bvperp}[2]{\bv_{#1}^{\perp\{#2\}}}

\newtheorem{theorem}{Theorem}[section]
\newtheorem{definition}[theorem]{Definition}

\newtheorem{corollary}[theorem]{Corollary}
\newtheorem{lemma}[theorem]{Lemma}

\newtheorem{remark}[theorem]{Remark}

\newtheorem*{conjecture*}{Conjecture}

\title{Threshold Rounding and Bounded-Degree Boolean MAX 2-CSP}

\author{Suprovat Ghoshal\thanks{Indiana University.} \and Neng Huang\thanks{University of Michigan.} \and Euiwoong Lee\thanks{University of Michigan. Supported in part by NSF award CCF-2236669.} \and Konstantin Makarychev\thanks{Northwestern University. Supported by NSF awards CCF-1955351 and EECS-2216970.} \and Yury Makarychev\thanks{Toyota Technological Institute at Chicago. Supported by NSF award EECS-2216899.}}

\begin{document}

\maketitle

\begin{abstract}
    We describe an $\widetilde{\Omega}(1/d^4)$-improvement over threshold rounding schemes for a broad class of Boolean MAX 2-CSP instances in which every variable appears in at most $d$ constraints. In the case of MAX 2-SAT, we improve the ratio further and obtain an $(\beta_\star + \widetilde{\Omega}(1/d^2))$-factor approximation algorithm for bounded-degree MAX 2-SAT instances, where $\beta_\star$ is the UGC-optimal approximation ratio for MAX 2-SAT achieved by the LLZ algorithm~\cite{lewin2002improved}. Our result generalizes an $(\alpha_{GW} + \widetilde{\Omega}(1/d^2))$-factor approximation algorithm for MAX CUT on graphs with degrees bounded by $d$, due to Hsieh and Kothari~\cite{hsieh2023approximating}. Together with the state-of-the-art approximability results for MAX DI-CUT and MAX 2-AND~\cite{brakensiek2023separating}, our result suggests that similar improvements exist for bounded-degree instances of these problems as well. 
\end{abstract}

\section{Introduction}

Maximum Constraint Satisfaction Problems (MAX CSPs) are a fundamental class of combinatorial problems that are studied extensively in theoretical computer science. The design of approximation algorithms for MAX CSPs has been a central topic of research that has seen substantial progress in the past decades. Some of the main highlights include the design of approximation algorithms for MAX CSPs~\cite{MM17CSP} and the development of a framework for hardness reductions to study the limits of approximation for CSPs~\cite{Khot05Guest,KhotUGCSurvey}. These developments have culminated in surprisingly tight characterizations of the optimal approximation factor of every MAX CSP~\cite{Rag08}, under the Unique Games Conjecture (UGC)~\cite{Khot02a}.

In this work, we aim to explore the approximability of MAX CSPs for structurally simpler instances, namely, when the instances are {\em sparse}. Specifically, we explore the question of whether one can improve on the optimal approximation guarantee of a MAX CSP when the degree of the underlying constraint graph is bounded. Formally, we are interested in the following:
\vspace{2pt}
\begin{gather*}
	\textnormal{\it Given a MAX CSP problem $P$, suppose its optimal approximation factor is $\alpha_P$. Then, if the degree} \\
	\textnormal{\it of the instance is bounded by some constant $d$, can we efficiently achieve $\alpha_P + \epsilon_d$ approximation}, \\
	\textnormal{\it where the improvement $\epsilon_d$ depends on the degree $d$?}
\end{gather*}
\vspace{2pt}
The above question has been studied for various specific MAX CSPs in the literature~\cite{BK99,HLZ04,FKL02,AKS09,BMORRSTVWW15,hsieh2023approximating}. Arguably, one of the most interesting lines of work among these involves the approximability of MAX CUT in this setting. It was first studied by Feige, Karpinski, and Langberg~\cite{FKL02}, who showed that the MAX CUT problem on graphs with maximum degree $d$ admits an efficient approximation guarantee of $\alpha_{GW} + {\Omega}(1/d^4)$, where $\alpha_{GW} \approx 0.87856$ denotes the optimal approximation factor for MAX CUT under the UGC~\cite{GW94,KKMO07}. This was then improved by Flor\'en~\cite{Floren16} up to an $\alpha_{GW} + \Omega(1/d^3)$ factor, and by Hsieh and Kothari~\cite{hsieh2023approximating} to a factor of $\alpha_{GW} + \widetilde{\Omega}(1/d^2)$. Another notable work in this context is that of~\cite{BMORRSTVWW15}, who showed that one can bypass the approximation resistance of MAX $k$-LIN and other related problems in bounded-degree instances.

Unfortunately, generalizing this phenomenon beyond the specific examples mentioned above has been challenging. In particular, such results are not known even for well-studied friends of the MAX CUT problem, such as MAX $2$-SAT and MAX $2$-AND. One of the main difficulties in obtaining similar guarantees for general MAX CSPs has been the following: most of the aforementioned improvements in the bounded-degree setting implicitly rely on tight structural characterizations of the hardest-to-round, i.e., integrality gap, instances of the problem\footnote{We elaborate the connection between integrality gaps and improvements in the bounded-degree setting in Section \ref{sec:gaps}.}. Such explicit characterizations are known only for a handful of examples (e.g., MAX CUT~\cite{feige2002optimality}, MAX $k$-LIN~\cite{Has01}, MAX $2$-SAT~\cite{lewin2002improved,brakensiek2024tight}), and even the question of tightly characterizing the gap instances for every Boolean $2$-CSP is a long-standing open question.

To get around this bottleneck, we instead consider improving the approximation guarantee of a family of algorithms in the degree-bounded setting. Specifically, we consider the family of $\THRESH^-$-rounding scheme-based algorithms described informally below\footnote{We refer the reader to Section \ref{sec:prelim} for a more formal description of the SDP relaxation and the $\THRESH^-$ rounding scheme.}:
\begin{figure}[ht!]
	\begin{mdframed}
		\begin{center}
			{\bf\underline{SDP + $\THRESH^-$-Rounding Algorithm}}
		\end{center}
		\begin{enumerate}
			\item Solve the canonical semi-definite programming (SDP) relaxation for the MAX CSP.\\[-20pt]
			\item Round the SDP solution using the $\THRESH^-$ family of rounding functions.
		\end{enumerate}
	\end{mdframed}
	\caption{The $\THRESH^-$-based Approximation Framework}
	\label{alg:thresh}
\end{figure}
\newline
The above framework was first introduced by Lewin, Livnat, and Zwick~\cite{lewin2002improved}, and it vastly generalizes the hyperplane rounding-based algorithm of Goemans and Williamson~\cite{GW94}. This algorithmic framework has been used successfully for obtaining optimal and near-optimal guarantees for a wide variety of MAX CSPs such as MAX $2$-SAT~\cite{lewin2002improved,brakensiek2024tight}, MAX $2$-AND~\cite{brakensiek2023separating}, MAX BISECTION~\cite{austrin2016better}, and several other related problems. Furthermore, Austrin showed that under a certain positivity condition, the above framework is optimal for approximating Boolean MAX $2$-CSPs assuming UGC~\cite{Aus10}. Guided by these considerations, we are therefore motivated to ask --  {\em Can we improve on the approximation guarantee of the $\THRESH^-$-based algorithm, when the degree of the input instance is bounded?} --  which is the main question explored by this work.

\subsection{Our Results}

In this work, we make progress on the above questions and give improved approximation guarantees for bounded-degree Boolean MAX $2$-CSP instances which satisfy a natural ``reasonable'' condition. The formal definition of the reasonable condition is a bit technical, and we instead attempt to give an informal description of it. Let $\mathcal{P}$ denote a set of Boolean predicates, and let MAX CSP($\mathcal{P}$) denote the MAX CSP problem where the predicates corresponding to the constraints belong to $\mathcal{P}$. Then loosely speaking, we say that the MAX CSP($\mathcal{P}$) problem is reasonable if for any hardest-to-round instance\footnote{An instance $I$ of MAX CSP($\cP$) is said to be hardest-to-round if the algorithm from Figure \ref{alg:thresh} achieves its worst approximation guarantee in $I$.} of MAX CSP($\cP$), the local geometric arrangement of the vector solution in the neighborhood of any vertex satisfies certain ``general position'' conditions. In particular, these conditions ensure that certain pairwise moments involving labels of variables in the neighborhood are non-constant under the rounding distribution (see Lemma \ref{lem:cov_lower_bound}). We refer the reader to Definitions \ref{def:aligned} and \ref{def:reasonable_dist} for a formal description of this condition.

We now state our main result, which gives improved approximation guarantees for reasonable CSPs.

\begin{theorem}[Informal version of Theorem \ref{thm:general}]   \label{thm:intro-1}
	Suppose $\mathcal{P}$ is a set of Boolean binary predicates such that the MAX CSP($\cP$) problem is reasonable. Let $\beta_{\cP}$ denote the approximation factor achieved by the algorithm in Figure \ref{alg:thresh} for MAX CSP($\cP$). Let $I$ be an instance of MAX CSP($\cP$) such that every variable appears in at most $d$ constraints. Then, there exists an algorithm that achieves an approximation factor of $\beta_{\cP} + \widetilde{\Omega}(d^{-4})$ on $I$, where the constant hidden in $\widetilde{\Omega}$ only depends on $\mathcal{P}$.
\end{theorem}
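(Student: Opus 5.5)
We follow the Hsieh--Kothari / Feige--Karpinski--Langberg template: run SDP + $\THRESH^-$ rounding to get an assignment $x$, then apply a local improvement step, and show that the local step gains $\widetilde{\Omega}(d^{-4})$ times the instance weight. The gain is only needed on instances that are nearly hardest-to-round. Otherwise the $\THRESH^-$ rounding already beats $\beta_{\cP}$ by a constant on a constant fraction of the weight, and a convex-combination argument handles the rest. Concretely, fix a threshold $\delta>0$ depending on $\cP$. Call a constraint \emph{$\delta$-bad} if its local SDP configuration (biases and pairwise inner products) lies within $\delta$ of a configuration where the rounding ratio equals $\beta_{\cP}$. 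If the $\delta$-good constraints carry a $d^{-4}$ fraction of the SDP value, then by compactness and continuity of the rounding probability we already gain $\Omega(d^{-4})$. So we may assume almost all weight sits on near-worst configurations. The reasonableness conditions (Definitions \ref{def:aligned} and \ref{def:reasonable_dist}) apply exactly to these configurations.

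The local step is a one-shot flip. Each variable $i$ independently becomes \emph{active} with probability $p$. An active variable flips its value if this strictly increases the weight of its satisfied constraints, computed with its neighbours' current values; conflicts between adjacent active variables are resolved by flipping only when no neighbour is active. The expected gain is then at least $p(1-p)^{d}\sum_i \E[\mathrm{gain}_i(x)^+]$. With $p=1/d$ this is $\Omega(d^{-1})\sum_i\E[\mathrm{gain}_i^+]$, while the losses on edges between two flipping variables are already excluded. The quantity to bound below is therefore $\E[\mathrm{gain}_i^+]$, where $\mathrm{gain}_i=\sum_{e\ni i} w_e(\text{change in } e)$ is a sum of at most $d$ terms. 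Each term depends on $x_i$ and one neighbour value.

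The core step, and the main obstacle, is to show $\E[\mathrm{gain}_i^+]\ge \widetilde{\Omega}(d^{-3})\cdot W_i$, where $W_i$ is the weight at $i$. Under threshold rounding, $x_j=\mathrm{sign}(\langle g,\bv_j\rangle - t_j)$, so $\mathrm{gain}_i$ is a function of correlated Gaussians. We would condition on $\langle g,\bv_i\rangle$ and write $\mathrm{gain}_i = \mu_i + Z_i$. Here $\mu_i$ is its conditional mean. At a near-optimal solution, local optimality of the SDP/rounding pair should push $\mu_i$ to be non-positive but small; if it were significantly positive we would gain anyway. $Z_i$ is a fluctuation. Then $\E[\mathrm{gain}_i^+]\gtrsim \E|Z_i| - |\mu_i|$ up to a variance-based anticoncentration, and we need $\Var(Z_i)$ bounded below. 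This is where we invoke Lemma \ref{lem:cov_lower_bound}: reasonableness makes the relevant pairwise moments of neighbour labels non-constant, which gives a lower bound on $\Var(\mathrm{gain}_i)$ of order $\sum_e w_e^2 \ge W_i^2/d$. Two further ingredients are needed. First, a Paley--Zygmund/hypercontractivity bound for degree-2 functions of Gaussian thresholds converts variance into first absolute moment. Threshold functions are not low degree, so we first truncate to a low-degree Hermite approximation, losing only logarithmic factors; this accounts for the tilde. Second, we must handle the correlation between $\mu_i$ and $Z_i$. If $\mu_i\ll -\sqrt{\Var}$, we instead perturb the thresholds $t_i$ slightly; by the paper's argument, this improves the rounding at near-worst configurations.

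Collecting terms, a factor $d^{-1}$ comes from activation, $d^{-1/2}$ from $\sqrt{\Var}\ge W_i/\sqrt d$, and further polynomial losses come from the low-degree truncation and the balancing against the $\delta$-bad case. The final step is to optimise $p$, $\delta$, and the truncation degree so that all losses multiply to $\widetilde{\Omega}(d^{-4})$. The implicit constants depend only on $\cP$, through the compactness and covariance constants.
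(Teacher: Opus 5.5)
Your reduction to near-worst configurations and the FKL-style independent activation are fine in themselves. The core step, however, has a genuine gap: you have no valid control of the conditional mean $\mu_i$.

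\textbf{The conditional mean.} Conditioned on $\br\cdot\bv_i^\perp=t$ with $t$ away from the threshold, the expected gain from flipping $i$ is, in near-worst instances, typically of order $-W_i$, since you are flipping a confidently rounded variable. Nothing about near-optimality of the SDP/rounding pair makes it small. The fallback of perturbing the thresholds has no justification, and it is not what the paper does. The missing idea is the two-sided threshold argument of Lemma~\ref{lem:2csp_local_gain}, which looks only at $t\in(f(b_i)-\epsilon,f(b_i)+\epsilon)$:
\begin{itemize}
\item Just above $f(b_i)$ the gain of flipping is $W_{i,0}-W_{i,1}$; just below it is $W_{i,1}-W_{i,0}$.
\item The map $t\mapsto\E[W_{i,0}-W_{i,1}\mid \br\cdot\bv_i^\perp=t]$ is $O(W_i)$-Lipschitz. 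Hence on at least one side of $f(b_i)$ the conditional mean gain is at least $-O(\epsilon W_i)$.
\item The identity $(X)_+=(|X|+X)/2$ then turns a bound $\E|W_{i,0}-W_{i,1}|\ge\gamma W_i$ into a gain of order $\gamma W_i$, provided $\epsilon\le c\gamma$.
\end{itemize}
The paper uses this window itself as the activation event, which has probability $\Theta(\epsilon)$. It pays for adjacent candidates pessimistically through $\E[W_{i,\unassigned}]=\Theta(\epsilon W_i)$. The total gain is therefore $\Theta(\epsilon\gamma)=\Theta(\gamma^2)$ per unit weight, with no further loss.

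\textbf{The quantitative ingredients.} These are also not available.
\begin{itemize}
\item Lemma~\ref{lem:cov_lower_bound} only says the signed conditional correlations are at least $-5/6$. It does not give $\Var\gtrsim\sum_e w_e^2$.
\item That diagonal domination works for MAX 2-SAT (Lemma~\ref{lem:variance_2sat}) only because all conditional thresholds lie near a common nonzero $s$, which makes every degree-2 Hermite term nonnegative. For general $\cP$ the thresholds can vanish or have mixed signs.
\item In the general case the paper certifies only $\Var\ge W_i^2\cdot\Omega(d^{-2}\log^{-3/2}d)$. This bound is extracted from Hermite degrees $k\gtrsim d^2\log^3 d$, after a Plackett comparison (Lemma~\ref{lem:variance_main}).
\item Truncating to low degree and applying hypercontractivity therefore discards exactly the part of the spectrum you can certify. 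The truncation error is also not logarithmic: beyond degree $m$, the Hermite tail of a threshold indicator has squared mass $\Theta(m^{-1/2})$.
\end{itemize}
The paper instead uses the crude bound $\E|X|\ge\E[X^2]/W_i$, valid since $|X|\le W_i$. This gives $\gamma=\Theta(d^{-2}\log^{-3/2}d)$ and $\epsilon=\Theta(d^{-2}\log^{-2}d)$.

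\textbf{The accounting.} With this $\gamma$, your independent activation at $p=1/d$ multiplies the window gain $\gamma^2$ by another factor $1/d$. That yields only $\widetilde{\Omega}(d^{-5})$. Conversely, if your claimed $\E|Z_i|\gtrsim W_i/\sqrt d$ held, the same bookkeeping would give roughly $d^{-2}$. So your closing paragraph does not derive the exponent $4$; it only asserts that unspecified losses multiply to it.
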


While the above theorem by itself just implies improvement on the approximation guarantee of $\THRESH^-$-based algorithms, it has stronger implications for MAX CSPs that satisfy certain additional properties. As mentioned above, Austrin~\cite{Aus10} showed that if $\cP$ satisfies the positivity condition, then the algorithm from Figure \ref{alg:thresh} obtains the optimal approximation guarantee assuming UGC. Therefore, for MAX CSP($\cP$) problems that satisfy the reasonable and positivity conditions, the above theorem actually implies improvement on the optimal approximation guarantee in the bounded-degree setting. 

Furthermore, by leveraging the structural insights from the recent work \cite{brakensiek2024tight}, we are able to derive a sharper version of the above theorem for the MAX $2$-SAT problem. We formally state this result in the next theorem.

\begin{theorem}  \label{thm:intro-2}
	Let $I$ be an instance of MAX $2$-SAT such that every variable appears in at most $d$ constraints. Then there is an efficient algorithm that can return a $\beta_\star + \Omega(d^{-2}\log^{-1}(d))$-approximation to $I$, where $\beta_\star \approx 0.9401$ is the optimal approximation guarantee for the MAX $2$-SAT problem.
\end{theorem}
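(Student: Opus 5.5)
We follow the Hsieh--Kothari strategy for MAX CUT, adapted to the LLZ/$\THRESH^-$ rounding for MAX 2-SAT. First we run the SDP plus $\THRESH^-$ rounding with the optimal threshold function from \cite{lewin2002improved,brakensiek2024tight}; this yields an assignment $x$ whose expected value is at least $\beta_\star$ times the SDP value. We then post-process $x$ with a local improvement step: every variable that currently lies in more unsatisfied than satisfied clauses, among those it appears in, is flipped. To avoid interactions between flips, flipping is restricted to an independent set in the constraint graph, or to a random subset in which each variable is kept with probability about $1/d$. Such a flip never decreases the value, so the whole task is to lower bound the expected gain by $\Omega(d^{-2}\log^{-1} d)$ times the SDP value.

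The proof splits into two cases.

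\textbf{Case 1: the SDP solution is far from the worst case.} Suppose a constant fraction, weighted by clauses, of the SDP value lies on clauses whose configuration (biases $b_i,b_j$ and pairwise correlation) is $\delta$-far from the finite set of hardest-to-round configurations identified in \cite{brakensiek2024tight}. By continuity and compactness, the rounding already achieves $\beta_\star+\Omega(\delta^{c})$ on those clauses. Taking $\delta$ polynomial in $1/d$ with a suitable exponent makes this case trivial.

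\textbf{Case 2: the SDP solution is near the worst case.} Here almost all clauses are near-tight, so the local geometry around each variable $i$ is highly structured. The structural results of \cite{brakensiek2024tight} pin down these configurations, for example biases near specific values and correlations forced near specific values. We then analyze the probability that $i$ is flippable. The number of unsatisfied clauses at $i$ minus the number of satisfied ones is a function of the Gaussian projections $\langle g,\bv_i\rangle$ and $\langle g,\bv_j\rangle$ for the neighbors $j$.

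Following Hsieh--Kothari, we condition on the component of $g$ orthogonal to $\bv_i$. Then we show that, with probability $\Omega(1/d)$ times a logarithmic factor, $i$ ends up on the "wrong" side: for example, $\langle g,\bv_i\rangle$ lands just beyond its threshold while the neighbors' literals are satisfied in a way that makes flipping $i$ gain at least one clause. The key is an anti-concentration or covariance lower bound of the kind in Lemma~\ref{lem:cov_lower_bound}. The $\sqrt{\log d}$ terms arise from the Gaussian tail when requiring $|\langle g,\bv_i\rangle|$ to be small relative to the neighbors' fluctuations.

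Combining these pieces, each variable is flipped with probability $\Omega(1/(d\log d))$ and gains at least one clause each time. Since a variable's clauses contribute at most $O(d)$ to the objective, summing over variables gives an expected gain of $\Omega(d^{-2}\log^{-1}d)$ relative to the SDP value. The accounting uses that the instance has $\Theta(n d)$ clause weight at most and that restricting to independent flips costs one factor of $d$.

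The main obstacle is Case 2: showing that near-worst-case MAX 2-SAT configurations still leave a non-degenerate flipping event with probability $\widetilde{\Omega}(1/d)$. For general predicates, the reasonableness condition would be needed to rule out degenerate, aligned neighborhoods. For 2-SAT, I would first try to verify directly from the explicit gap structure of \cite{brakensiek2024tight} that neighbor vectors cannot all be aligned with $\bv_i$ or with $\bv_0$. I would then reduce to the one-dimensional MAX CUT-style calculation of Hsieh--Kothari.
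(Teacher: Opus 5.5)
Your Case~2 carries essentially all of the content, and the mechanism you propose for it does not give the $d^{-2}$ rate. In this paper's framework, a covariance bound like Lemma~\ref{lem:cov_lower_bound} (which also needs triangle inequalities on all triples) only feeds Lemma~\ref{lem:variance_main}. That yields conditional variance $\Omega(W_i^2 d^{-2}\log^{-3/2}d)$ and hence only the $\widetilde{\Omega}(d^{-4})$ rate of Theorem~\ref{thm:general}.

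The missing 2-SAT-specific idea is the following. After normalizing $b_i\ge 0$, every near-critical clause at $i$ is $x\vee y$ or $\bar{x}\vee\bar{y}$ near $(b_\star,b_\star,-1+2b_\star)$. So, conditioned on $\br\cdot\bv_i^\perp=t$ in the $\epsilon$-window, complementing indicators writes $W_{i,0}-W_{i,1}$ as a constant plus $\sum_j w_j\bone\{\br\cdot\bvperp{j}{0,i}\ge s_j\}$. Here all $w_j\ge 0$, and every $s_j$ lies within $O(\epsilon+\delta)$ of the same nonzero value $s\approx 0.478$. Each Hermite level of the variance is a PSD form. At level $2$ the off-diagonal terms are nonnegative, because all $s_j\varphi(s_j)$ share a sign. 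Hence the conditional variance is at least $\sum_j w_j^2 s_j^2\varphi(s_j)^2/2=\Omega(W_i^2/d)$, with no assumption at all on the neighbours' residual vectors (Lemma~\ref{lem:variance_2sat}). Then $\E|W_{i,0}-W_{i,1}|\ge \Var[W_{i,0}-W_{i,1}]/W_i=\Omega(W_i/d)$.

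The degeneracy to worry about is not alignment with $\bv_0$ or $\bv_i$. It is pairwise anti-correlated residuals: $\bone\{g\ge s\}+\bone\{-g\ge s\}$ is constant iff $s=0$.

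Relatedly, your flip rule ``more unsatisfied than satisfied clauses'' is the MAX CUT rule. In 2-SAT, clauses already satisfied by the other literal never change status. For $A(i)=\true$ your rule therefore demands $W_{i,0}>W_{i,1}+W_{i,\sat}$, where $\E W_{i,\sat}=\Theta(W_i)$, which is a large-deviation event. The right comparison is $W_{i,0}$ against $W_{i,1}$ only.

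The bookkeeping does not close either. Thinning flips to an independent set, or to a random $1/d$-fraction, costs a multiplicative factor $d$ that your final count drops; with it, your own numbers give only $\Omega(d^{-3}\log^{-1}d)$. The paper avoids this loss in three steps.
\begin{enumerate}
\item Only variables with $|\br\cdot\bv_i^\perp-f(b_i)|<\epsilon$, where $\epsilon=\Theta(1/(d\log d))$, are flip candidates, so candidates are automatically sparse.
\item The robust rule (flip to $\true$ only if $W_{i,1}>W_{i,0}+W_{i,\unassigned}$, symmetrically for $\false$) guarantees a gain even when adjacent candidates flip. The additive cost is only $\E W_{i,\unassigned}=\Theta(\epsilon W_i)$ (Lemmas~\ref{lem:val_diff_local_gain} and~\ref{lem:2-sat_unassigned}).
\item A two-sided window argument uses that $t\mapsto\E[W_{i,1}-W_{i,0}\mid\br\cdot\bv_i^\perp=t]$ is $O(W_i)$-Lipschitz. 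It gives $\E[\Delta_i\mid A_\unassigned(i)=\unassigned]=\Omega(W_i/d)$ (Lemma~\ref{lem:2sat_local_gain}).
\end{enumerate}
Multiplying by $\Pr[A_\unassigned(i)=\unassigned]=\Theta(\epsilon)$ and using $\sum_i W_i=2$ yields $\Omega(d^{-2}\log^{-1}d)$. The logarithm comes from this choice of $\epsilon$, not from a Gaussian tail.

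Finally, your case split is quantified the wrong way round. With a constant-fraction threshold, Case~2 may still contain a constant fraction of arbitrary non-critical clauses. These can cancel the fluctuations of $W_{i,0}-W_{i,1}$, and they cannot be discarded for free. Letting $\delta\to 0$ with $d$ instead requires a quantitative growth bound, not mere continuity and compactness. The paper fixes $\delta$ as a small constant and splits at weight $\eta=c\,d^{-2}\log^{-1}d$. If at least $\eta$ of the weight is not $\delta$-critical, rounding alone gains $\eta\cdot\Omega(\delta^2)$ by Theorem~\ref{thm:2sat_stronger}. Otherwise those clauses are deleted at cost $O(\eta)$, and the local-search analysis runs on a purely $\delta$-critical instance.
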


We point out a couple of interesting features of the above theorem. Firstly, observe that the above theorem does not mention the `reasonable' condition. This is because the structural results from \cite{brakensiek2024tight} directly imply that the hardest-to-round instances of MAX $2$-SAT naturally satisfy the reasonable condition, and hence we are able to apply our framework to obtain an unconditional improvement for degree-bounded MAX $2$-SAT instances. Furthermore, by using the specific structure of the $2$-SAT predicate, we are able to improve the additive gain in the approximation factor from $\widetilde{\Omega}(d^{-4})$ to $\widetilde{\Omega}(d^{-2})$.

\subsection{On Integrality Gaps and Bounded-degree CSPs} 		\label{sec:gaps}

We close out this section by highlighting the connection between hard-to-round instances and improved approximation for bounded-degree CSPs. Most of the previous works in this setting adopt the approach of improving the performance of a base algorithm via a greedy post-processing step that relies on certain anti-concentration properties of the objective function at local neighborhoods (similar to Lemma \ref{lem:2sat_local_gain}). In particular, such an approach has to be capable of interpolating between the following extremal scenarios:
\begin{itemize}
	\item For instances that are far from being an integrality gap, the base algorithm by itself is capable of yielding improved approximation guarantees, 
	\item And for instances that are close to being an integrality gap, the post-processing procedure should be able to leverage the bounded-degree structure of the instance to improve on the worst-case approximation guarantee.
\end{itemize}
Now, note that under this framework, it stands to reason that the greedy post-processing needs to be more effective for instances that are closer to being an integrality gap, whereas it may be less useful for easier instances -- e.g., for instances which are perfectly solvable by the base algorithm, the procedure cannot improve the quality of the solution output by the base algorithm any further. Therefore, the analysis of the  post-processing step necessarily requires a structural understanding of the hardest instances for the base algorithm. For example, for the MAX CUT problem, the analyses of algorithms in \cite{FKL02,Floren16,hsieh2023approximating} use the well-known characterization that the edges that are hardest to round by hyperplane rounding are those for which the inner product of the corresponding SDP vectors is equal to a certain critical value. 

We conclude by noting that while it might be possible to obtain improvements without explicitly characterizing the hardest instances, such a result will require algorithms that go beyond the above two-phase approach.

\subsection{Organization of the Paper}

The rest of the paper is organized as follows. In Section~\ref{sec:prelim}, we formally define MAX CSP, the Basic SDP relaxation, the $\THRESH^-$ rounding family, and related concepts. In Section~\ref{sec:algo}, we present our algorithm and define some algorithmic quantities. In Section~\ref{sec:2-sat}, we analyze the algorithm in the case of MAX 2-SAT, proving Theorem~\ref{thm:intro-2}. In Section~\ref{sec:2csp}, we analyze our algorithm for general Boolean MAX 2-CSPs, where we state and prove the formal version of Theorem~\ref{thm:intro-1}. Finally, Appendix~\ref{app:hermite} collects some facts about Hermite polynomials, which are then used in Appendix~\ref{app:variance} to derive certain variance lower bounds needed in the analysis. 

\section{Preliminaries}         \label{sec:prelim}

For $n \in \mathbb{N}$, let $[n] \coloneqq \{1, 2, \ldots, n\}$. Let $\Phi, \varphi$ be the c.d.f. and p.d.f. of a standard Gaussian variable. Given an event $A$, we use $\bone\{A\}$ to denote the indicator random variable for $A$.
We use $\true, \false$ for the truth values of a Boolean variable. We may also represent the truth values using $\{0, 1\}$ or $\{-1, 1\}$: in the former case 1 stands for $\true$ and 0 stands for $\false$, while in the latter case $-1$ stands for $\true$ and $1$ stands for $\false$.

\subsection{MAX CSP and the Basic SDP Relaxation}

\begin{definition}
    Let $\mathcal{P}$ be a set of Boolean predicates. An instance $I = (V, \mathcal{C}, w)$ of $\MAXCSP(\mathcal{P})$ is defined by
    \begin{itemize}
        \item A finite set $V$ of variables.
        \item A finite set $\mathcal{C}$ of constraints where each constraint $C \in \mathcal{C}$ is obtained by applying some predicate $P$ to some variables from $V$. We use $C = (P, i_1, i_2, \ldots, i_k)$ to denote a constraint $C$ obtained by applying some $k$-ary predicate $P$ to variables $i_1, i_2, \ldots, i_k$.
        \item A weight function $w: \mathcal{C} \to \mathbb{R}^{\geq 0}$.
    \end{itemize}
    Given an assignment $A: V \to \{-1, 1\}$, we say that a constraint $C$ is satisfied if the corresponding predicate evaluates to 1, and we use $\Val(I, A)$ to denote the total weight of constraints satisfied by $A$. The objective is to find an assignment $A$ that maximizes $\Val(I, A)$.
\end{definition}

For simplicity, we will assume without loss of generality that $\sum_{C \in \mathcal{C}} w(C) = 1$. This can always be achieved by renormalizing, which does not meaningfully change the problem. In this case we may also think of $\mathcal{C}$ as a distribution over constraints, with probabilities specified by $w$. We will also assume $V = [n]$ for some $n \in \mathbb{N}$. 

In this paper we only consider the special case of MAX 2-CSP, namely the case where every Boolean predicate in $\mathcal{P}$ is of arity 2. For convenience, we encode truth assignments for variables using $\{-1,1\}$, while predicates take values in $\{0,1\}$, and thus each predicate has the form $P: \{-1, 1\}^2 \to \{0, 1\}$. Given a MAX 2-CSP instance $I = (V, \mathcal{C}, w)$, we can formulate its \emph{Basic SDP relaxation}~\cite{Rag08} as follows (cf.~\cite{brakensiek2024tight}).

\begin{alignat}{3}
&\text{maximize} &\qquad& \sum_{C=(P, i, j) \in \mathcal C}  w(C) \cdot \left( \hat{P}_\varnothing + \hat{P}_i\bv_0\cdot\bv_{i} +\hat{P}_j \bv_0\cdot\bv_{j} + \hat{P}_{i,j} \bv_{i}\cdot\bv_{j}\right)\\
&\text{subject to} &\qquad& \forall i \in \{0, 1, 2, \ldots, n\}\;,\ \ \ \qquad\,\, \bv_i \cdot \bv_i = 1\;,\\
& &\qquad& \forall C = (P, i, j) \in \mathcal{C}\;, \quad
\begin{array}{c}
(\bv_0 - \bv_i) \cdot (\bv_0 - \bv_j) \GE 0\;,\\
(\bv_0 + \bv_i) \cdot (\bv_0 - \bv_j) \GE 0\;,\\
(\bv_0 - \bv_i) \cdot (\bv_0 + \bv_j) \GE 0\;,\\
(\bv_0 + \bv_i) \cdot (\bv_0 + \bv_j) \GE 0\;.\\
\end{array}\label{eq:triangle_0ij}
\end{alignat}

Here $\hat{P}_\varnothing, \hat{P}_i, \hat{P}_j, \hat{P}_{i, j} \in \mathbb{R}$ are the \emph{Fourier coefficients} of $P$, namely, we have
\[
    P(x_i, x_j) = \hat{P}_\varnothing + \hat{P}_ix_i + \hat{P}_jx_j + \hat{P}_{i, j}x_ix_j.
\]
for every $x_i, x_j \in \{-1, 1\}$ (See~\cite{o2014analysis} for more on Fourier analysis of Boolean functions). The constraints in~\eqref{eq:triangle_0ij} are often referred to as \emph{triangle inequalities}. In fact, we may in addition impose triangle inequalities on all triples of vectors:
\begin{equation}
\forall i, j, k \in \{0, 1, \ldots, n\}\;, \quad
\begin{array}{c}
(\bv_i - \bv_j) \cdot (\bv_i - \bv_k) \GE 0\;,\\
(\bv_i + \bv_j) \cdot (\bv_i - \bv_k) \GE 0\;,\\
(\bv_i - \bv_j) \cdot (\bv_i + \bv_k) \GE 0\;,\\
(\bv_i + \bv_j) \cdot (\bv_i + \bv_k) \GE 0\;.\\
\end{array}\label{eq:triangle_ijk}
\end{equation}

\begin{definition}
    We say that unit vectors $\bv_0, \bv_1, \ldots, \bv_n$ are an SDP solution for some MAX 2-CSP instance over the variable set $[n]$, if they satisfy~\eqref{eq:triangle_0ij}. We say that they satisfy all triangle inequalities if they further satisfy~\eqref{eq:triangle_ijk}. For $i,j \in [n]$, we write
    \[
        b_i = \bv_0 \cdot \bv_i
        \qquad\text{and}\qquad
        b_{ij} = \bv_i \cdot \bv_j .
    \]
\end{definition}

We will often need to take projections for an SDP solution. For $i \in [n]$ and $S \subseteq \{0\} \cup [n]$, we use $\bvperp{i}{S}$ to denote the (renormalized) component in $\bv_i$ that is orthogonal to $\mathrm{span}\{\bv_j \mid j \in S\}$. In particular, $\bvperp{i}{0} = \frac{\bv_i - b_i \bv_0}{\sqrt{1 - b_i^2}}$ is the (renormalized) component in $\bv_i$ that is orthogonal to $\bv_0$ (in the case $|b_i| = 1$, we may define $\bvperp{i}{0}$ to be any unit vector that is orthogonal to all other vectors in the SDP solution, but this case won't show up in our paper). We will write $\bv_i^\perp$ for $\bvperp{i}{0}$ to avoid clutter. 

\subsection{Configurations}

To analyze SDP-based approximation algorithms, it is sometimes more convenient to work with the inner products between vectors in an SDP solution rather than the vectors themselves. This leads to the following definition.

\begin{definition}
    A \emph{configuration} is a tuple of the form $(b_i, b_j, b_{ij}, P)$ where $P$ is a Boolean predicate of arity 2, and $b_i, b_j, b_{ij} \in [-1, 1]$ are real numbers such that \begin{equation}\label{eq:valid}-1 + |b_i + b_j| \leq b_{ij} \leq 1 - |b_i - b_j|.\end{equation} We say that a tuple $(b_i, b_j, b_{ij})$ is \emph{valid} if it satisfies \eqref{eq:valid}. Given a configuration $\theta = (b_i, b_j, b_{ij}, P)$, we define $\rho(\theta) = \frac{b_{ij} - b_ib_j}{\sqrt{(1 - b_i^2)(1 - b_j^2)}}$ if $\sqrt{(1 - b_i^2)(1 - b_j^2)} \neq 0$ and $\rho(\theta) = 0$ otherwise.
\end{definition}

It can be easily verified that the inequality \eqref{eq:valid} corresponds exactly to the triangle inequalities in~\eqref{eq:triangle_0ij}, and therefore a tuple $(b_i, b_j, b_{ij})$ is valid if and only if it can be obtained as inner products between $\bv_0, \bv_i, \bv_j$ from an SDP solution. In this case, $\rho(\theta)$ is exactly the inner product between $\bv_i^\perp$ and $\bv_j^\perp$. 

\begin{definition}          \label{def:aligned}
    Given a MAX 2-CSP instance $I = (V, \mathcal{C}, w)$ and an SDP solution $\bv_0, \cdots, \bv_n$, we say that a distribution $\mathcal{D}$ of configurations \emph{represents} the SDP solution if it is obtained as follows:
    \begin{itemize}
        \item Sample a constraint $C = (P, i, j) \sim \mathcal{C}$ from the instance.
        \item Return the configuration $\theta = (b_i, b_j, b_{ij}, P)$.
    \end{itemize}
    We may also say that $\theta$ as defined above represents the constraint $C$.
\end{definition}

Now we define \emph{completeness}, which corresponds to the value of the Basic SDP relaxation.
\begin{definition}[Completeness]
    Given a configuration $\theta = (b_i, b_j, b_{ij}, P)$, we define its \emph{completeness} to be 
    \begin{equation}
        \comp(\theta) = \hat{P}_\varnothing + \hat{P}_{i} b_i + \hat{P}_j b_j + \hat{P}_{i,j} b_{ij}.
    \end{equation}
    For a distribution $\mathcal{D}$ over configurations, we define $\comp(\mathcal{D}) = \E_{\theta \sim \mathcal{D}}[\comp(\theta)]$.
\end{definition}

By construction, if a distribution $\mathcal{D}$ represents some SDP solution, then its completeness corresponds exactly to the SDP objective value achieved by that SDP solution.

\subsection{\texorpdfstring{The $\THRESH^-$ Rounding Family}{The THRESH- Rounding Family}}

The $\THRESH^-$ rounding family, introduced in~\cite{lewin2002improved}, is a powerful family of rounding algorithms for Boolean MAX 2-CSPs. A rounding algorithm from the $\THRESH^-$ rounding family is specified by a \emph{threshold function} $f: [-1, 1] \to \mathbb{R}$. Given an SDP solution $\bv_0, \bv_1, \ldots, \bv_n$, it computes the projection $\bv_i^\perp = \frac{\bv_i - b_i \bv_0}{\sqrt{1 - b_i^2}}$ for every $i \in [n]$, samples a standard Gaussian vector $\br$, and sets the variable $i$ to be $\true$ if $\br \cdot \bv_i^\perp \geq f(b_i)$ and $\false$ otherwise.

We now define \emph{soundness}, which corresponds to the probability that a configuration (or more precisely, any constraint represented by the configuration) is satisfied using a $\THRESH^-$ rounding algorithm.

\begin{definition}[Soundness]
    Let $f: [-1, 1] \to \mathbb{R}$ be a threshold function. Given a configuration $\theta = (b_i, b_j, b_{ij}, P)$, let $(g_i, g_j)$ be a pair of standard Gaussian random variables with
    correlation $\rho(\theta)$, and define
    \[
        X_i = 
        \begin{cases}
            -1, & \text{if } g_i \geq f(b_i),\\
            1, & \text{otherwise,}
        \end{cases}
        \qquad
        X_j = 
        \begin{cases}
            -1, & \text{if } g_j \geq f(b_j),\\
            1, & \text{otherwise,}
        \end{cases}
    \]
    Then, the \emph{soundness} of $\theta$ with respect to $f$ is defined to be
    \[
        \sound_f(\theta)
        =
        \Pr[P(X_i,X_j)=\true].
    \]
    For a distribution $\mathcal{D}$ over configurations, we define $\sound_f(\mathcal{D}) = \E_{\theta \sim \mathcal{D}}[\sound_f(\theta)]$.
\end{definition}

By construction, if a distribution $\mathcal{D}$ represents some SDP solution, then its soundness with respect to $f$ corresponds to the expected satisfied fraction if we use $\THRESH^-$ with $f$ to round that SDP solution.

\section{Description of the Algorithm}\label{sec:algo}

We now state our main algorithm. It is adapted from the bounded-degree MAX CUT algorithm due to~\cite{hsieh2023approximating}. The quantities $W_{i, 1}, W_{i, 0}$, and $W_{i, \unassigned}$ will be defined momentarily after the algorithm.

\begin{algorithm}[H]
    \caption{Algorithm for Bounded-Degree Boolean MAX 2-CSP}\label{alg:main}
    \begin{algorithmic}[1]
    \State \textbf{Input:} Boolean MAX 2-CSP instance $I = (V, \mathcal{C}, w)$ with variable set $V = [n]$, parameter $\epsilon > 0$, and an SDP solution $\{\bv_i \mid i \in V \cup \{0\}\}$.
    \State \textbf{Output:} an assignment $A: V \to \{\true, \false\}$.
    \medskip
    \State Choose a standard Gaussian vector $\br$.
    \For{$i \in V$}
        \State Compute $b_i = \bv_0 \cdot \bv_i$ and $\bv_i^\perp = \frac{\bv_i - b_i \bv_0}{\sqrt{1 - b_i^2}}$.
        \State Set $A'(i) = \true$ if $\br \cdot \bv_i^\perp \geq f(b_i)$ and $\false$ otherwise.
        \State Set $A(i) = A'(i)$ if $\left|\br \cdot \bv_i^\perp - f(b_i)\right| \geq \epsilon$ and $\unassigned$ otherwise.
    \EndFor
    \For{$i \in V$} \label{Line:flip_for_loop}
        \If{$A(i) = \unassigned$}
            \State Set $A(i) = \left\{\begin{array}{ll}
                \true & \text{if } W_{i, 1} > W_{i, 0} + W_{i, \unassigned}, \\
                \false & \text{if } W_{i, 0} > W_{i, 1} + W_{i, \unassigned}, \\
                A'(i) & \text{otherwise.}
            \end{array}\right.$\label{Line:flipping}
        \EndIf
    \EndFor
    \State \Return $A$
    \end{algorithmic}
\end{algorithm}

Informally, the algorithm first computes an assignment $A'$ using $\THRESH^-$ with the threshold function $f$. It then greedily flips any variable which lies approximately on the boundary of the $\THRESH^-$ rounding algorithm (more specifically, $\left|\br \cdot \bv_i^\perp - f(b_i)\right| < \epsilon$) if there is some \emph{local gain}, which we define below.

For every variable $i$, let $\mathcal{C}_i \subseteq \mathcal{C}$ be the set of constraints that involve $i$. We define the following subsets of $\mathcal{C}_i$ with respect to the (partial) assignment $A$ before executing the for-loop at Line~\ref{Line:flip_for_loop}:
\begin{align*}
\mathcal{C}_{i, \unassigned} & = \{C \in \mathcal{C}_i \mid \text{the other variable in }C \text{ is unassigned}\} \\
\mathcal{C}_{i, \sat} & = \{C \in \mathcal{C}_i \setminus \mathcal{C}_{i, \unassigned} \mid C \text{ is satisfied regardless of the truth value of } i\} \\
\mathcal{C}_{i, \unsat} & = \{C \in \mathcal{C}_i\setminus \mathcal{C}_{i, \unassigned}  \mid C \text{ is unsatisfied regardless of the truth value of } i\} \\
\mathcal{C}_{i, 1} & = \{C \in \mathcal{C}_i \setminus \mathcal{C}_{i, \unassigned} \mid C \text{ is satisfied if } i \text{ is } \true \text{ and unsatisfied if } i \text{ is } \false\} \\
\mathcal{C}_{i, 0} & = \{C \in \mathcal{C}_i \setminus \mathcal{C}_{i, \unassigned} \mid C \text{ is satisfied if } i \text{ is } \false \text{ and unsatisfied if } i \text{ is } \true\}.
\end{align*}
We use $W_{i, \unassigned}, W_{i, \sat}, W_{i, \unsat}, W_{i, 1}, W_{i, 0}$ to denote the total weights of clauses in these subsets, and we define
\[
W_i \coloneqq \sum_{C \in \mathcal{C}_i}w(C) = W_{i, \unassigned} + W_{i, \sat} + W_{i, \unsat} + W_{i, 1} + W_{i, 0}
\]
to be the total weight of constraints which contain $i$.
We would like to understand how the for-loop starting at Line~\ref{Line:flip_for_loop} affects the constraints in these subsets. By definition, constraints in $\mathcal{C}_{i, \sat}$ and $\mathcal{C}_{i, \unsat}$ won't be affected by the truth value for $i$ since they are already fixed, and therefore they (as well as the weights $W_{i, \sat}, W_{i, \unsat}$) won't play a role in our analysis. The only constraints whose status can be affected in the for-loop are the ones in $\mathcal{C}_{i, \unassigned}, \mathcal{C}_{i, 0}$ and $\mathcal{C}_{i, 1}$. If we set $i$ to be $\true$, then we satisfy every constraint in $\mathcal{C}_{i, 1}$, possibly some constraints in $\mathcal{C}_{i, \unassigned}$, and certainly none in $\mathcal{C}_{i, 0}$. Similarly, if we set $i$ to be $\false$, then we satisfy every constraint in $\mathcal{C}_{i, 0}$, possibly some constraints in $\mathcal{C}_{i, \unassigned}$, and certainly none in $\mathcal{C}_{i, 1}$. It follows that if we flip $i$ from $\true$ to $\false$, then the total weight of satisfied constraints changes by $W_{i, 0} - W_{i, 1} - p\cdot W_{i, \unassigned}$, for some $p \in [-1, 1]$. If we have $W_{i, 0} - W_{i, 1} - W_{i, \unassigned} > 0$, then this change is guaranteed to be positive. In particular, this means when $W_{i, 0} - W_{i, 1} - W_{i, \unassigned} > 0$ and $A'(i) = \true$, we obtain an improvement over $A'$ by setting $A(i) = \false$. Analogously, when $W_{i, 1} - W_{i, 0} - W_{i, \unassigned} > 0$ and $A'(i) = \false$, we also obtain an improvement over $A'$ by setting $A(i) = \true$.
This motivates the formal definition of \emph{local gains} as follows: 
\begin{definition}
    We define the local gain $\Delta_i$ of the variable $i$ to be 
\begin{align*}
\Delta_i & = (W_{i, 0} - W_{i, 1} - W_{i, \unassigned})_+  \cdot \bone\{\br \cdot \bv_i^\perp \in (f(b_i), f(b_i) + \epsilon)\} \\
& \qquad + (W_{i, 1} - W_{i, 0} - W_{i, \unassigned})_+  \cdot \bone\{\br \cdot \bv_i^\perp \in (f(b_i) - \epsilon, f(b_i))\}.
\end{align*}
Here $(z)_+$ is defined to be $z$ if $z \geq 0$ and $0$ if $z < 0$.
\end{definition}

Note that there is tension over the choice of $\epsilon$, which controls how many vertices we potentially flip in the for-loop: on one hand we would like to flip more vertices in order to get more local gain, on the other hand we would like $W_{i, \unassigned}$ to be small, for otherwise it may overwhelm the difference between $W_{i, 0}$ and $W_{i, 1}$ and we won't get any local gain at all. 

The following lemma formalizes the motivation for local gains.

\begin{lemma}\label{lem:val_diff_local_gain}
    Let $\mathcal{D}$ be the distribution of configurations that represents the SDP solution in the input to Algorithm~\ref{alg:main}. Then, we have $\E[\Val(I, A)] \geq \sound_f(\mathcal{D}) + \sum_{i \in V} \E[\Delta_i]$.
\end{lemma}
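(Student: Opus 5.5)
The plan is to prove the pointwise inequality $\Val(I, A) \geq \Val(I, A') + \sum_{i \in V} \Delta_i$ for every fixed Gaussian vector $\br$, and then take expectations. Since the rounding producing $A'$ is exactly $\THRESH^-$ with threshold function $f$, and $\mathcal{D}$ represents the SDP solution, we have $\E[\Val(I, A')] = \sound_f(\mathcal{D})$ by the remark following the definition of soundness. Linearity of expectation then gives the lemma.

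For the pointwise inequality, fix $\br$ and let $U = \{i : A(i) = \unassigned\}$ be the set of variables left unassigned after the first loop. Every variable outside $U$ keeps its value from $A'$. The key observation is that the decision for each $i \in U$ in Line~\ref{Line:flipping} depends only on $W_{i,1}, W_{i,0}, W_{i,\unassigned}$. These quantities are computed with respect to the partial assignment before the second loop, so the loop order is irrelevant. The final assignment $A$ therefore differs from $A'$ exactly on the set $F \subseteq U$ of variables $i$ where the decision is strict and disagrees with $A'(i)$. For $i \in F$, say with $A'(i) = \true$ and $A(i) = \false$, we have $W_{i,0} > W_{i,1} + W_{i,\unassigned}$. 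Moreover $\br\cdot\bv_i^\perp \in [f(b_i), f(b_i)+\epsilon)$, so $\Delta_i = W_{i,0} - W_{i,1} - W_{i,\unassigned}$, up to the measure-zero boundary event, which can be ignored. The symmetric statement holds when $A'(i) = \false$. For $i \in U \setminus F$ (and for $i \notin U$), we have $\Delta_i = 0$ almost surely: either neither strict inequality holds and both positive parts vanish, or the strict inequality agrees with $A'(i)$ and the indicator attached to the nonzero positive part is off.

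It remains to compare $\Val(I,A)$ with $\Val(I,A')$ constraint by constraint, partitioning the constraints according to how many of their variables lie in $F$. A constraint with no variable in $F$ has the same status under $A$ and $A'$. A constraint with exactly one variable $i \in F$ has its other variable fixed identically in both assignments. If that other variable is assigned (not in $U$), the constraint lies in $\mathcal{C}_{i,1}$, $\mathcal{C}_{i,0}$, $\mathcal{C}_{i,\sat}$ or $\mathcal{C}_{i,\unsat}$, and its change is as determined by that class. If the other variable is in $U \setminus F$, the constraint lies in $\mathcal{C}_{i,\unassigned}$ and changes by at least $-w(C)$. A constraint with both variables $i, j \in F$ lies in both $\mathcal{C}_{i,\unassigned}$ and $\mathcal{C}_{j,\unassigned}$, and its change is at least $-w(C)$.

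Summing, the total change is at least
\[
\sum_{i\in F}\bigl(W_{i,0}-W_{i,1}\bigr)\ \text{(or its mirror)} \;-\; \sum_{i \in F}\ \sum_{C \in \mathcal{C}_{i,\unassigned}} w(C)\cdot c_C,
\]
where $c_C \le 1$ apportions each doubly-flipped constraint's loss between its two endpoints. Charging it fully to both endpoints is valid since losses are nonnegative. The result is at least $\sum_{i\in F}\Delta_i = \sum_{i \in V} \Delta_i$.

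The main obstacle is exactly this bookkeeping for constraints with both endpoints flipped or with an unassigned partner. Their loss must be bounded by the $W_{\cdot,\unassigned}$ terms without double counting in the wrong direction. This works because each such constraint's weight is subtracted at least once in $\sum_{i \in F} W_{i,\unassigned}$, while its actual loss is at most $w(C)$.
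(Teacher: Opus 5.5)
Your proposal is correct and follows the paper's proof: both show $\Val(I,A) - \Val(I,A') \geq \sum_{i} \Delta_i$ pointwise, by noting that each flipped variable gains at least $\Delta_i$ and that $\Delta_i = 0$ for variables that are not flipped, and then take expectations using $\E[\Val(I,A')] = \sound_f(\mathcal{D})$. Your constraint-by-constraint accounting for constraints whose other endpoint is unassigned or also flipped simply spells out what the paper compresses into one sentence, and it is sound.
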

\begin{proof}
    Observe that for any $i \in V$, we have $A'(i) = A(i)$ unless 
    \begin{itemize}
        \item $W_{i, 1} > W_{i, 0} + W_{i, \unassigned}$ and $\br \cdot \bv_i^\perp \in (f(b_i) - \epsilon, f(b_i))$, or
        \item $W_{i, 0} > W_{i, 1} + W_{i, \unassigned}$ and $\br \cdot \bv_i^\perp \in (f(b_i), f(b_i) + \epsilon)$.
    \end{itemize}
    In these two cases, we flip the value of variable $i$ and the weight of satisfied constraints increases by at least $\Delta_i$. So we have
    \[\Val(I, A) - \Val(I, A') \geq \sum_{i \in V} \Delta_i.\]
    The lemma follows by taking expectation and observing that $\E[\Val(I, A')] = \sound_f(\mathcal{D})$.
\end{proof}

\section{Analysis for MAX 2-SAT}\label{sec:2-sat}

We first analyze Algorithm~\ref{alg:main} in the case of MAX 2-SAT. Let $\mathcal{P}_{2SAT} = \{x \vee y, \overline{x} \vee y, \overline{x} \vee \overline{y}\}$, then MAX 2-SAT is exactly $\MAXCSP(\mathcal{P}_{2SAT})$. We start by stating some known results for MAX 2-SAT. For $\beta \in [0, 1]$, define the threshold function $f_\beta: [-1, 1] \to \mathbb{R}$ by
\begin{equation}
    f_\beta(t) = \Phi^{-1}\left(\frac{1 + \beta t}{2}\right),
\end{equation}
where $\Phi^{-1}$ is the inverse CDF for a standard Gaussian variable. Observe that $f_\beta$ is odd for any $\beta \in [0, 1]$, in the sense that $f_\beta(-t) = -f_\beta(t)$ for every $t \in [-1, 1]$. 

\begin{theorem}[\cite{lewin2002improved,austrin2007balanced,brakensiek2024tight}]
    There exists some constant $\beta_\star \approx 0.94016567$ such that for any MAX~2-SAT configuration $\theta$,
    \begin{equation}
        \sound_{f_{\beta_\star}}(\theta) - \beta_\ast \cdot \comp(\theta) \geq 0.
    \end{equation}Consequently, the $\THRESH^-$ algorithm with the $f = f_{\beta_\star}$ achieves a $\beta_\star$-approximation for MAX~2-SAT. On the other hand, assuming the Unique Games Conjecture, for any $\epsilon > 0$ it is NP-hard to approximate MAX 2-SAT within a factor of $\beta_\star + \epsilon$.
\end{theorem}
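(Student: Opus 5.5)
This theorem combines three earlier results, and I would treat its three claims separately. The main inequality, that $\sound_{f_{\beta_\star}}(\theta) - \beta_\star \comp(\theta) \geq 0$ for every MAX 2-SAT configuration $\theta$, I would not reprove from scratch. It is the content of the LLZ analysis as made rigorous in~\cite{brakensiek2024tight}. The approximation guarantee is a direct averaging consequence of this inequality. The hardness part is Austrin's UGC reduction~\cite{austrin2007balanced}, whose integrality-gap instances are built from the worst configurations of this same inequality. So the plan is to (i) reduce the inequality to a finite-dimensional statement, (ii) indicate how that statement is verified, and (iii) derive the consequences.

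For (i), I first reduce the predicate set. Negating a variable $x_i$ corresponds to replacing $\bv_i$ by $-\bv_i$, which maps $(b_i, b_{ij}) \mapsto (-b_i, -b_{ij})$ and flips the sign of $\hat P_i$ and $\hat P_{i,j}$. Because $f_\beta$ is odd, the threshold event $g_i \geq f(b_i)$ becomes its complement up to measure zero, so both soundness and completeness are invariant under this transformation. Hence it suffices to treat $P = x \vee y$. There $\comp(\theta) = (3 - b_i - b_j - b_{ij})/4$ in the $\pm1$ convention with $-1 = \true$. Soundness is $1 - \Pr[g_i < f(b_i),\, g_j < f(b_j)]$, which equals $1 - \Phi_2(f(b_i), f(b_j); \rho)$ with $\rho = \rho(\theta)$. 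The claim therefore becomes an inequality about the bivariate Gaussian CDF over the compact set of valid triples $(b_i, b_j, b_{ij})$ defined by~\eqref{eq:valid}.

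For (ii), the main obstacle is exactly this global inequality. It is a nonconvex three-parameter function whose minimum is $0$, attained at nontrivial interior configurations; $\beta_\star$ is defined as the largest constant for which the inequality holds. I would follow the approach of~\cite{brakensiek2024tight}. First, use symmetry and boundary analysis to rule out the edges of the validity polytope. Then, in the interior, apply interval arithmetic with rigorous Lipschitz bounds on $\Phi_2$ and $\Phi^{-1}$, away from a small neighborhood of the minimizers. Near the minimizers, establish local positivity via a second-order expansion. In the paper I would simply cite this verification.

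For (iii), the approximation consequence is immediate. Let $\mathcal{D}$ represent an optimal SDP solution. Taking expectation of the pointwise inequality over $\theta \sim \mathcal{D}$ gives $\sound_{f_{\beta_\star}}(\mathcal{D}) \geq \beta_\star \comp(\mathcal{D}) = \beta_\star \cdot \mathrm{SDP} \geq \beta_\star \cdot \mathrm{OPT}$. The left-hand side is the expected value of the $\THRESH^-$ rounding. For hardness, Austrin shows that any configuration distribution, satisfying a positivity and balance condition, on which $\sound_f(\mathcal{D}) / \comp(\mathcal{D})$ is minimized over all $f$ yields a UG-hardness reduction with the corresponding ratio. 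He further shows that the LLZ-type worst distributions meet this condition, and~\cite{brakensiek2024tight} identifies the matching value as $\beta_\star$.
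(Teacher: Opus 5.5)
This matches the paper, which gives no proof of its own and simply cites \cite{lewin2002improved,austrin2007balanced,brakensiek2024tight}; your reduction to $x \vee y$ via $\bv_i \mapsto -\bv_i$ (using oddness of $f_\beta$), your formulas for $\comp$ and $\sound$, and your averaging step for the approximation guarantee are all correct. Two slips in your paraphrase of the cited work are harmless, since you cite rather than reprove it: the minimizers $(\pm b_\star, \pm b_\star, -1+2b_\star)$ from Theorem~\ref{thm:2sat_stronger} lie on the facet $b_{ij} = -1 + |b_i + b_j|$ of the valid region rather than in its interior, and for a hard distribution $\mathcal{D}$ meeting Austrin's positivity condition the UG-hardness ratio is the best-rounding value $\max_f \sound_f(\mathcal{D})/\comp(\mathcal{D})$, not a minimum over $f$ (there is also no balance condition, since the hardest instances are unbalanced).
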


By inspecting the proofs in~\cite{brakensiek2024tight} we in fact have the following slightly stronger statement:
\begin{theorem}[\cite{brakensiek2024tight}]\label{thm:2sat_stronger}
    Given a valid triple $(b_i, b_j, b_{ij})$, define 
    \[
    g(b_i, b_j, b_{ij}) = \sound_{f_{\beta_\star}}(\theta) - \beta_\ast \cdot \comp(\theta),
    \]
    where $\theta = (b_i, b_j, b_{ij}, x \vee y)$.
    Then $g$ has exactly two minimizers 
    \[g(b_\star, b_\star, -1 + 2b_\star) = g(-b_\star, -b_\star, -1 + 2b_\star) = 0\] where $b_\star \approx 0.162478$. Furthermore, there exists $c > 0$ such that for any valid tuple $(b_i, b_j, b_{ij})$,
    \[
    g(b_i, b_j, b_{ij}) \geq c \cdot \left(\min\left(\|(b_i, b_j, b_{ij}) - (b_\star, b_\star, -1 + 2b_\star)\|, \|(b_i, b_j, b_{ij}) - (-b_\star, -b_\star, -1 + 2b_\star)\|\right)\right)^2.
    \] 
\end{theorem}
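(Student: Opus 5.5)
The plan is to treat Theorem~\ref{thm:2sat_stronger} as two claims and handle them in order: first the exact location of the zeros of $g$, then the quadratic lower bound near those zeros. The starting point is the explicit formula for $g$. Let $\theta = (b_i,b_j,b_{ij},x\vee y)$. The clause $x\vee y$ is violated exactly when both $X_i$ and $X_j$ are false. In our encoding false is $+1$, which happens when $g_i < f(b_i)$ and $g_j < f(b_j)$. Writing $t_i = f_{\beta_\star}(b_i) = \Phi^{-1}((1+\beta_\star b_i)/2)$, this gives
\[
g(b_i,b_j,b_{ij}) = 1 - \Phi_2(t_i,t_j;\rho) - \beta_\star\Bigl(1 - \tfrac{1 - b_i - b_j + b_{ij}}{4}\Bigr).
\]
Here $\Phi_2$ is the bivariate Gaussian c.d.f. and $\rho = (b_{ij}-b_ib_j)/\sqrt{(1-b_i^2)(1-b_j^2)}$. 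We also use the symmetry $g(b_i,b_j,b_{ij}) = g(-b_i,-b_j,b_{ij}) + (\text{linear correction})$; this comes from $f_\beta$ being odd together with the three clause types. This symmetry explains why the two minimizers come as a pair.

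For the first claim, nonnegativity $g\ge 0$ is the cited LLZ/Austrin/BHSZ theorem, so what remains is to show that the zero set is exactly the two stated points. The first step is to restrict to the boundary face $b_{ij} = -1 + |b_i+b_j|$ of the valid region. The derivative of $g$ in $b_{ij}$ is $-\partial_\rho\Phi_2 \cdot \partial\rho/\partial b_{ij} + \beta_\star/4$. On the region that matters this is positive, so $g$ decreases toward that face. The only exception is when $\rho$ is near the critical value. In the cited work this reduction is carried out with interval arithmetic on a compact grid, and I would follow the same route, reducing to a two-variable function of $(b_i,b_j)$. Symmetry under swapping $i$ and $j$, plus a concavity-type argument on the lines $b_i+b_j = s$, then reduces further to the diagonal $b_i = b_j = b$. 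There $g$ is a one-variable function whose unique zero is at $b_\star$ (for $b_i+b_j>0$; the negated point handles the other sign). Uniqueness is checked by computer-assisted verification that $g>0$ off small neighbourhoods of the two points.

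For the quadratic bound, I would compute the Hessian of $g$ at $(b_\star,b_\star,-1+2b_\star)$, working in the tangent cone of the valid polytope. The point sits on a facet, so the bound has two parts. In the normal direction (increasing $b_{ij}$), $g$ has a strictly positive directional derivative. This gives a linear lower bound, which dominates the quadratic one near the point. In the tangent directions along the facet, the gradient vanishes because the point is an interior minimum on the facet. It then suffices to show that the tangential Hessian is positive definite, which is a numerical check at one explicit point. Together these give $g \ge c\|\cdot\|^2$ on a neighbourhood $U$ of each minimizer. Outside $U$, the valid region is compact, $g$ is continuous, and $g>0$, so $g \ge \delta > 0$ there. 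Since the distance is bounded by a constant, the quadratic bound holds globally after shrinking $c$.

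I expect the main obstacle to be the uniqueness of the minimizers, that is, the global reduction to the facet and the diagonal. This is not analytic; it depends on the rigorous numerical verification in \cite{brakensiek2024tight}. I would therefore inspect their verification and note that its certified inequalities are strict outside neighbourhoods of the two points, which is exactly the strengthening that is claimed here. Nondegeneracy of the Hessian is the secondary risk. If it failed, the bound would be quartic rather than quadratic, so this step must be confirmed numerically.
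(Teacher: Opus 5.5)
Your overall route is the paper's. The paper does not prove this statement independently; it reads it off from the certified computer-assisted analysis in \cite{brakensiek2024tight}. You likewise delegate uniqueness, and strict positivity away from the two points, to that verification. Your local argument for the quadratic bound is sound in structure. At $(b_\star,b_\star,-1+2b_\star)$ only the facet $b_{ij}=-1+b_i+b_j$ is active, and the multiplier is strictly positive: $\partial g/\partial b_{ij}=\beta_\star/4-\varphi_{\rho_\star}(t_\star,t_\star)/(1-b_\star^2)\approx 0.235-0.206>0$, where $t_\star=f_{\beta_\star}(b_\star)\approx 0.19$ and $\rho_\star=-\frac{1-b_\star}{1+b_\star}$. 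So linear growth in the normal direction plus a positive-definite tangential Hessian gives local quadratic growth. You rightly flag the Hessian as the one thing that must actually be verified.

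Several specifics in your sketch are wrong, however.

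(i) Your formula for $g$ uses the wrong completeness. With the paper's encoding ($+1$ is false), $\comp(\theta)=\frac{3-b_i-b_j-b_{ij}}{4}$ for $x\vee y$. Your expression equals $g-\beta_\star(b_i+b_j)/2$ and gives $-\beta_\star b_\star<0$ at the claimed minimizer. Any numerical Hessian or positivity check run on it would therefore be meaningless.

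(ii) The symmetry is exact, and that exactness is the whole point. Negating $b_i,b_j$ leaves $\rho$ unchanged. It raises the soundness by $\Phi(t_i)+\Phi(t_j)-1=\beta_\star(b_i+b_j)/2$ and the completeness by $(b_i+b_j)/2$. Hence $g(-b_i,-b_j,b_{ij})=g(b_i,b_j,b_{ij})$, precisely because the parameter in $f_{\beta_\star}$ equals the ratio $\beta_\star$. A nonzero ``linear correction'' would not explain why both points are zeros; the exact identity does. It also lets you restrict to $b_i+b_j\ge 0$ and a single minimizer.

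(iii) The reduction to the lower facet is false as stated. The derivative $\partial g/\partial b_{ij}=\beta_\star/4-\varphi_\rho(t_i,t_j)/\sqrt{(1-b_i^2)(1-b_j^2)}$ is negative on that facet for small biases.
\begin{itemize}
\item For $b_i=b_j=b$ on the facet, $\rho=-\frac{1-b}{1+b}$. The density term equals $\exp(-t^2(1+b)/(2b))/(4\pi\sqrt{b}\,(1-b))$ with $t=f_{\beta_\star}(b)$, and this exceeds $\beta_\star/4$ for all $0<b\le 0.1$. At $b=0.1$ the derivative is about $-0.024$.
\item At $b_i=b_j=0$ the derivative is negative for all $0.74<|b_{ij}|<1$.
\end{itemize}
So the minimum over $b_{ij}$ is not always on that face. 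Your ``exception near the critical value'' is also backwards: at $\rho_\star$ the derivative is positive. The unstated concavity argument for reducing to the diagonal has nothing to rest on. The global step must come from a certified verification over the whole valid region, which is exactly what the paper appeals to.

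(iv) A minor point: $g$ is not continuous on the closed valid region, because $\rho$ has no limit at the vertices with $|b_i|=|b_j|=1$. Your compactness step still survives. Near each such vertex, either $\comp\to 0$ and $\sound\ge\frac{1-\beta_\star}{2}-o(1)$, or $\comp\to 1$ and $\sound\ge\frac{1+\beta_\star}{2}-o(1)$. Either way $g\ge\frac{1-\beta_\star}{2}-o(1)$ there, but you need to say so.
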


Let $\theta = (b_i, b_j, b_{ij}, P)$ be a MAX 2-SAT configuration (i.e., $P \in \mathcal{P}_{2SAT}$) and $\delta > 0$. We say that $\theta$ is $\delta$-critical, if one of the following holds:
\begin{itemize}
    \item $P = x \vee y$ or $P = \overline{x} \vee \overline{y}$, and
    \begin{equation}
    \min\left(\|(b_i, b_j, b_{ij}) - (b_\star, b_\star, -1 + 2b_\star)\|, \|(b_i, b_j, b_{ij}) - (-b_\star, -b_\star, -1 + 2b_\star)\|\right) \leq \delta.
    \end{equation}
    \item $P = \overline{x} \vee y$ and
    \begin{equation}
    \min\left(\|(b_i, b_j, b_{ij}) - (b_\star, -b_\star, 1 - 2b_\star)\|, \|(b_i, b_j, b_{ij}) - (-b_\star, b_\star, 1 - 2b_\star)\|\right) \leq \delta.
    \end{equation}
\end{itemize}

Theorem~\ref{thm:2sat_stronger} shows that for non-critical configurations we may hope for a better-than-$\beta_\star$ approximation.
\begin{lemma}\label{lem:not_critical}
    Let $\theta$ be a MAX 2-SAT configuration $\theta$ that is not $\delta$-critical. Then
    \begin{equation}
        \sound_{f_{\beta_\star}}(\theta) - \beta_\star \cdot \comp(\theta) = \Omega(\delta^2).
    \end{equation}
\end{lemma}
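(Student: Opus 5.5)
The plan is to reduce every predicate in $\mathcal{P}_{2SAT}$ to the case $P = x \vee y$ by negating variables, and then apply Theorem~\ref{thm:2sat_stronger} directly.

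The case $P = x \vee y$ is immediate. Here $\sound_{f_{\beta_\star}}(\theta) - \beta_\star \comp(\theta) = g(b_i,b_j,b_{ij})$. Since $\theta$ is not $\delta$-critical, the distance from $(b_i,b_j,b_{ij})$ to both minimizers exceeds $\delta$. Theorem~\ref{thm:2sat_stronger} then gives the lower bound $c\delta^2$.

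For the other two predicates we use a negation symmetry. Negating variable $i$ corresponds to replacing $\bv_i$ by $-\bv_i$, i.e. $(b_i, b_{ij}) \mapsto (-b_i, -b_{ij})$, together with replacing the literal $x$ by $\overline{x}$. We check that both quantities are invariant under this joint change.
\begin{itemize}
\item \emph{Completeness.} The Fourier coefficients transform as $\hat{P}_i \mapsto -\hat{P}_i$ and $\hat{P}_{i,j} \mapsto -\hat{P}_{i,j}$, so $\comp$ is unchanged.
\item \emph{Soundness.} The sign of $\rho(\theta)$ flips, since $\bv_i^\perp \mapsto -\bv_i^\perp$. Because $f_{\beta_\star}$ is odd, the event $g_i \ge f(b_i)$ becomes $-g_i \ge -f(b_i)$, i.e. $g_i \le f(b_i)$. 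This is exactly the complemented rounding of $x_i$ (up to the measure-zero boundary). So the probability that $P$ is satisfied is preserved.
\item \emph{Validity.} Validity of the triple is preserved, since \eqref{eq:valid} is symmetric under this map.
\end{itemize}

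Applying the symmetry gives the two remaining cases.
\begin{itemize}
\item For $P = \overline{x} \vee \overline{y}$, negate both variables. This maps $(b_i,b_j,b_{ij})$ to $(-b_i,-b_j,b_{ij})$ with predicate $x \vee y$. The criticality set $\{(\pm b_\star,\pm b_\star,-1+2b_\star)\}$ is invariant under this map, so distances to it are preserved.
\item For $P = \overline{x} \vee y$, negate $i$. This gives $(-b_i, b_j, -b_{ij})$ with predicate $x\vee y$. The critical points $(b_\star,-b_\star,1-2b_\star)$ and $(-b_\star,b_\star,1-2b_\star)$ map to $(-b_\star,-b_\star,-1+2b_\star)$ and $(b_\star,b_\star,-1+2b_\star)$ respectively. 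The map is an isometry, so non-criticality again transfers.
\end{itemize}

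In both cases, invoking Theorem~\ref{thm:2sat_stronger} on the transformed triple yields the bound $c\delta^2 = \Omega(\delta^2)$.

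The only step needing care is the soundness invariance, since it relies on the oddness of $f_{\beta_\star}$ and on tracking the sign change of $\rho$. Everything else is bookkeeping.
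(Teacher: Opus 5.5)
Your proposal is correct and follows the paper's route. The paper's proof is the single line ``follows directly from Theorem~\ref{thm:2sat_stronger}''. Your negation-symmetry argument makes explicit the reduction the paper leaves implicit: you use the oddness of $f_{\beta_\star}$, the sign flip of $\rho$, invariance of validity and completeness, and the fact that the negation maps are isometries carrying the critical points of $\overline{x}\vee y$ and $\overline{x}\vee\overline{y}$ onto those of $x\vee y$.
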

\begin{proof}
    Follows directly from Theorem~\ref{thm:2sat_stronger}.
\end{proof}

On the other hand, for $\delta$-critical configurations with sufficiently small $\delta > 0$, we may use Algorithm~\ref{alg:main} to obtain an improvement if the instance is bounded degree.

\begin{theorem}\label{thm:2-SAT-critical}
    There exists some constant $\delta > 0$ such that the following holds. Let $I = (V, \mathcal{C}, w)$ be an instance of MAX 2-SAT such that every variable appears in at most $d$ constraints. Let $\mathcal{D}$ be a distribution of configurations representing some SDP solution $\{\bv_i \mid i \in V \cup \{0\}\}$ for $I$ such that $b_i \geq 0$ for every variable $i$, and every $\theta \in \supp(\mathcal{D})$ is $\delta$-critical. Then, for $f = f_{\beta_\star}$ and $\epsilon = d^{-1}\log^{-1}d$, Algorithm~\ref{alg:main} returns an assignment $A$ such that $\E[\Val(I, A)] = \sound_{f}(\mathcal{D}) + \Omega(d^{-2}\log^{-1}(d))$.
\end{theorem}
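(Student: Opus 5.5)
By Lemma~\ref{lem:val_diff_local_gain}, it suffices to show $\sum_{i\in V}\E[\Delta_i] = \Omega(d^{-2}\log^{-1} d)$. The plan is to prove, for each variable $i$ with $W_i>0$, a lower bound of the form $\E[\Delta_i] \ge \Omega(\epsilon\cdot W_i/d)$ minus an error term. We then sum and use $\sum_i W_i = 2$, which holds since every constraint contains two variables and the weights sum to $1$.

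\textbf{Step 1: the local picture.} Because every configuration is $\delta$-critical and $b_j\ge 0$ for all $j$, each constraint $C=(P,i,j)$ has $b_i,b_j$ within $\delta$ of $b_\star$. Each clause is then either $x\vee y$ with $b_{ij}\approx -1+2b_\star$, or $\overline{x}\vee\overline{y}$ at $(b_\star,b_\star,-1+2b_\star)$. The negated-mixed case $\overline x\vee y$ needs $b_i\approx -b_j$, which is impossible here up to $\delta$. So $\rho(\theta)\approx (-1+2b_\star-b_\star^2)/(1-b_\star^2)=:\rho_\star$, a fixed constant strictly inside $(-1,0)$. Hence for each neighbor $j$ of $i$, the vector $\bv_j^\perp$ has correlation about $\rho_\star$ with $\bv_i^\perp$, and the threshold $f(b_j)$ is close to $f(b_\star)$.

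\textbf{Step 2: conditioning on $g_i=\br\cdot\bv_i^\perp \approx f(b_i)$.} Condition on $g_i\in(f(b_i)-\epsilon,f(b_i)+\epsilon)$, an event of probability $\Theta(\epsilon)$. The neighbor projections $g_j$ are then Gaussians with mean about $\rho_\star f(b_\star)$ and variance bounded below by $1-\rho_\star^2-O(\delta)$. Each neighbor is $\unassigned$ with probability $O(\epsilon)$, so $\E[W_{i,\unassigned}]\le O(\epsilon W_i)$. By Markov, with constant probability $W_{i,\unassigned}\le O(\epsilon W_i)$.

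For each constraint, the contribution to $W_{i,1}-W_{i,0}$ is $\pm w(C)$ or $0$, depending on whether the other literal is true. Write $S_i=W_{i,1}-W_{i,0}=\sum_j w_j\sigma_j(g_j)$, where $\sigma_j\in\{0,\pm1\}$ depends on which side of its threshold $g_j$ lies. The key anti-concentration claim is that, conditioned on $g_i$, we have $\Pr\bigl[|S_i|\ge c\,W_i/d\bigr]\ge \Omega(1)$. The $1/d$ arises because the largest weight among at most $d$ constraints is at least $W_i/d$. The sign must also be compatible with the side of the threshold, and here we use symmetry. Replacing $\br$ by its reflection is not a symmetry in general, but the two intervals $(f,f+\epsilon)$ and $(f-\epsilon,f)$ have nearly identical conditional laws for the $g_j$: the conditional means differ by $O(\epsilon)$. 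Whichever sign $S_i$ takes, one of the two intervals gives gain $(|S_i|-W_{i,\unassigned})_+$. So
\[
\E[\Delta_i]\ \gtrsim\ \epsilon\cdot \E\bigl[(|S_i|-W_{i,\unassigned})_+\bigr] - O(\epsilon^2 W_i).
\]

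\textbf{Step 3: anti-concentration (the main obstacle).} We need $\E|S_i| \ge \Omega(W_i/d)$, or better. The approach is to fix $j^\ast$ as the maximum-weight neighbor, with $w_{j^\ast}\ge W_i/d$, and condition on all Gaussian information except the component of $\br$ along the direction of $\bv_{j^\ast}^\perp$ orthogonal to $\bv_i^\perp$ and to the other neighbors. This requires a general-position property: the residual variance of $g_{j^\ast}$ given $g_i$ and the other $g_j$ must not be tiny. The property can fail, for example when two neighbors have nearly identical vectors. To handle this, merge neighbors whose vectors are nearly parallel (or antiparallel) into a single effective term, since they flip together. Then use the triangle inequalities, which pin the Gram matrix near $\rho_\star$ up to $\delta$, to argue that some cluster flips with constant probability. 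Flipping that cluster changes $S_i$ by at least its weight, giving $\E|S_i|\ge \Omega(W_i/d)$.

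This step also requires $W_{i,\unassigned}\ll W_i/d$. That is exactly why $\epsilon=d^{-1}\log^{-1}d$ is chosen. A Chernoff/Markov bound, possibly losing a $\log d$ to obtain high-probability control over the at most $d$ neighbors, gives $W_{i,\unassigned}\le O(\epsilon\log d)W_i = o(W_i/d)$ with probability $1-o(1)$.

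\textbf{Conclusion.} Combining the steps gives $\E[\Delta_i]\ge\Omega(\epsilon W_i/d)$. Summing over $i$ yields $\Omega(\epsilon/d)=\Omega(d^{-2}\log^{-1}d)$.
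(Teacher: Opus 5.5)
Your outer reduction is sound and is essentially the paper's: Lemma~\ref{lem:val_diff_local_gain}, the bound $\E[(|S_i|-W_{i,\unassigned})_+]\ge\E|S_i|-\E[W_{i,\unassigned}]$ with $\E[W_{i,\unassigned}\mid\br\cdot\bv_i^\perp=t]=O(\epsilon W_i)$, the transfer between the two half-windows (your shift/coupling argument does the job of the paper's Lipschitz case analysis), and summing with $\sum_i W_i=2$. Chernoff does not apply to these correlated indicators and is not needed; also $O(\epsilon\log d)W_i=O(W_i/d)$, not $o(W_i/d)$. Linearity of expectation suffices.

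\textbf{The gap is in Step~3.} Step~3 must prove $\E[|S_i|\mid\br\cdot\bv_i^\perp=t]=\Omega(W_i/d)$, which is the real content of the theorem. Your residual-variance argument needs a general-position property that you do not have:
\begin{itemize}
\item Only the triangle inequalities~\eqref{eq:triangle_0ij} for constraint pairs are assumed. Criticality fixes $b_j$ and $b_{ij}$ for neighbors $j$ of $i$, but says nothing about $\bv_j\cdot\bv_k$ for two neighbors. The vectors $\bvperp{j}{0,i}$ can be arbitrary unit vectors, so nothing pins their Gram matrix near $\rho_\star$.
\item Merging nearly parallel or antiparallel pairs does not repair this. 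A residual variance can vanish with no pair close together: take three neighbors whose vectors $\bvperp{j}{0,i}$ lie in a plane at angles $0^\circ,60^\circ,120^\circ$. Each lies in the span of the other two, yet pairwise inner products are $\pm\tfrac12$.
\item Antiparallel members of a cluster do not flip together. So the claim that flipping a cluster moves $S_i$ by its weight is unjustified.
\end{itemize}

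\textbf{The missing idea.} In the 2-SAT critical regime you need no control of the neighbors' Gram matrix at all, because every effective threshold is bounded away from $0$ with a common sign.
\begin{itemize}
\item Using $\bone\{x\le\tau\}=1-\bone\{x>\tau\}$, write $W_{i,0}-W_{i,1}=\sum_j w_j\bone\{g_j\ge s_j\}+\mathrm{const}$. Here $w_j\ge0$, $\sum_j w_j=W_i$, $g_j=\br\cdot\bvperp{j}{0,i}$, and each $s_j$ is within $O(\epsilon+\delta)$ of $s=f(b_\star)\sqrt{(1-\rho_\star)/(1+\rho_\star)}\approx0.478$.
\item By Lemma~\ref{lem:indicator_hermite}, the conditional variance equals $\sum_{m\ge1}\sum_{j,k}w_jw_k\He_{m-1}(s_j)\varphi(s_j)\He_{m-1}(s_k)\varphi(s_k)\rho_{jk}^m/m!$.
\item Since $(\rho_{jk}^m)_{j,k}$ is positive semidefinite, every level $m$ contributes a nonnegative amount, so you may keep only $m=2$.
\item At $m=2$ every entry $w_jw_k\,s_j\varphi(s_j)\,s_k\varphi(s_k)\,\rho_{jk}^2/2$ is nonnegative. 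Dropping the off-diagonal terms leaves $\Omega(\sum_j w_j^2)=\Omega(W_i^2/d)$; this is Lemma~\ref{lem:variance_2sat}.
\item Finally, $|S_i|\le W_i$ gives $\E|S_i|\ge\E[S_i^2]/W_i=\Omega(W_i/d)$.
\end{itemize}
Your argument never uses $s\ne0$, and it must. At threshold $0$ an antiparallel pair contributes $\bone\{g\ge0\}+\bone\{-g\ge0\}\equiv1$, which is exactly the cancellation that has to be ruled out.
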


We remark that the assumption that $b_i \geq 0$ for every variable $i$ is for convenience only. The proof of Theorem~\ref{thm:2-SAT-critical} is deferred to Section~\ref{subsec:local_2sat}. Combining Lemma~\ref{lem:not_critical} and Theorem~\ref{thm:2-SAT-critical} we obtain the following theorem:

\begin{theorem}[Restatement of Theorem \ref{thm:intro-2}]\label{thm:2sat_bded_degree}
    Let $I$ be an instance of MAX 2-SAT such that every variable appears in at most $d$ constraints. Let $\mathcal{D}$ be a distribution of configurations representing an SDP solution for $I$. Then, for $f = f_{\beta_\star}$, we can find an assignment $A$ such that \[\Val(I, A) \geq \beta_\star \cdot \comp(\mathcal{D}) + \Omega(d^{-2}\log^{-1}(d)).\]
\end{theorem}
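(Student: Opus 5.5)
We reduce to the two regimes handled by Lemma~\ref{lem:not_critical} and Theorem~\ref{thm:2-SAT-critical} by splitting the instance according to whether each constraint is $\delta$-critical.

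\textbf{Normalization.} First we make every $b_i \ge 0$. For each variable $i$ with $b_i < 0$, replace $\bv_i$ by $-\bv_i$ and negate every literal of $i$ in all clauses. The resulting instance is still MAX 2-SAT and has the same degrees. The new vectors form an SDP solution of the new instance with the same objective value, because the Fourier expansion changes covariantly and the triangle inequalities are symmetric under sign flips. Assignments of the two instances correspond bijectively and preserve value, so we may assume $b_i \ge 0$ for all $i$.

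\textbf{Splitting.} Fix the constant $\delta$ from Theorem~\ref{thm:2-SAT-critical}. Partition $\mathcal{C} = \mathcal{C}_{crit} \cup \mathcal{C}_{nc}$ according to whether the represented configuration is $\delta$-critical, and let $w_c$ and $w_n = 1 - w_c$ be their total weights. We run two algorithms and output the better assignment, or equivalently compare against an average.
\begin{itemize}
\item Algorithm (a): plain $\THRESH^-$ with $f_{\beta_\star}$ on the whole instance. Every configuration satisfies $\sound - \beta_\star\comp \ge 0$, and by Lemma~\ref{lem:not_critical} non-critical ones gain $\Omega(\delta^2)$. Hence the expected value is at least $\beta_\star\comp(\mathcal{D}) + \Omega(w_n)$.
\item Algorithm (b): Algorithm~\ref{alg:main} with $\epsilon = d^{-1}\log^{-1}d$ on the whole instance.
\end{itemize}

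The main obstacle is analysing (b), because Theorem~\ref{thm:2-SAT-critical} assumes every configuration is critical. One route is to apply it to the subinstance $\mathcal{C}_{crit}$ after renormalizing its weights by $w_c$; degrees there are still at most $d$, and the restricted SDP solution is feasible. Its output has expected value on $\mathcal{C}_{crit}$ at least $w_c\bigl(\sound(\mathcal{D}_{crit}) + \Omega(d^{-2}\log^{-1}d)\bigr)$. However, the clauses in $\mathcal{C}_{nc}$ can lose value when variables are flipped. Each flip changes the status of at most $d$ clauses, so this loss is bounded by roughly the total weight of $\mathcal{C}_{nc}$ clauses touching flipped variables.

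A cleaner route is to run Algorithm~\ref{alg:main} on the full instance and use Lemma~\ref{lem:val_diff_local_gain}, which gives $\E[\Val] \ge \sound(\mathcal{D}) + \sum_i \E[\Delta_i]$ with no loss at all. It then suffices to extract from the proof of Theorem~\ref{thm:2-SAT-critical} a bound $\sum_i \E[\Delta_i] \ge \Omega(d^{-2}\log^{-1}d)\cdot w_c - O(w_n)$. I expect the local-gain analysis to be per-variable. Variables whose incident constraints are mostly critical keep their gain up to an error proportional to the non-critical weight they touch, and each non-critical constraint touches only two variables.

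\textbf{Combining.} Algorithm (b) then yields $\beta_\star\comp(\mathcal{D}) + \Omega(d^{-2}\log^{-1}d)\,w_c - O(w_n)$. Take the better of (a) and (b), or a convex combination that puts weight $\Theta(d^{-2}\log^{-1}d)$ on (b). The $-O(w_n)$ term from (b) is then absorbed by the $\Omega(w_n)$ gain of (a), and we get
\[
\beta_\star\comp(\mathcal{D}) + \Omega(d^{-2}\log^{-1}d)\,(w_c + w_n).
\]
Since $w_c + w_n = 1$, this is the claimed bound. To state it for a single assignment, take the better of the two outputs, using the fact that the maximum is at least the average. Derandomization, or repetition to achieve the bound with high probability, is standard.

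The delicate step is the per-variable robustness of the local-gain bound under a small mass of non-critical constraints. If it fails, I would fall back on a coarse dichotomy:
\begin{itemize}
\item if $w_n \ge c\,d^{-2}\log^{-1}d$, use (a);
\item otherwise, delete $\mathcal{C}_{nc}$, apply Theorem~\ref{thm:2-SAT-critical}, and pay at most $w_n$, which is smaller than the gain once constants are chosen appropriately.
\end{itemize}
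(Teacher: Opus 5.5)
Your fallback dichotomy is exactly the paper's proof, and it is complete on its own. Set $\eta = c\,d^{-2}\log^{-1}d$. If $w_n \ge \eta$, Lemma~\ref{lem:not_critical} gives plain $\THRESH^-$ a gain of $\Omega(\delta^2\eta)$. Otherwise, discard the non-critical constraints, renormalize, and apply Theorem~\ref{thm:2-SAT-critical}; its hypotheses survive the deletion. Combining $\sound_f(\mathcal{D}') \ge \beta_\star\comp(\mathcal{D}')$ with $w_c\,\comp(\mathcal{D}') \ge \comp(\mathcal{D}) - w_n$ shows the loss is $O(\eta)$, which is absorbed once $c$ is small. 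So your worry in the first route, that non-critical clauses get damaged by flips, is moot: you never track them, because discarding them costs at most $w_n$ in total.

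Your primary route is genuinely different: run Algorithm~\ref{alg:main} once on the whole instance. There you do not even need algorithm (a). Since $\Delta_i \ge 0$, Lemma~\ref{lem:val_diff_local_gain} already gives (b) the bound $\E[\Val] \ge \sound_f(\mathcal{D}) \ge \beta_\star\comp(\mathcal{D}) + \Omega(w_n)$.

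As written, however, that route has two defects.

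\textbf{The robustness bound is only conjectured.} It does go through if you rerun Lemmas~\ref{lem:2sat_1norm}, \ref{lem:2-sat_unassigned} and \ref{lem:2sat_local_gain} on the critical part of each neighbourhood. Every non-critical contribution must then be bounded crudely by its weight $W_i^{\mathrm{nc}}$. Crudeness is forced here: a non-critical configuration can have $|\rho|$ near $1$, which breaks both the $\Theta(\epsilon)$ bound on $j$ being unassigned and the $O(W_i)$-Lipschitz bound. Multiplying by $\Pr[A_\unassigned(i)=\unassigned] = \Theta(\epsilon)$ then shrinks the error to $O(\epsilon\, w_n)$.

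\textbf{The convex combination is mis-weighted.} Putting weight $\Theta(d^{-2}\log^{-1}d)$ on (b) scales its gain down to $\Theta(d^{-4}\log^{-2}d)\,w_c$, which is too small when $w_n \approx 0$. You need a small \emph{constant} weight on (b), or simply the better of the two outputs together with a threshold on $w_n$.

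Once repaired, your route yields a single algorithm with no case split. The paper's route instead buys a short black-box reduction to Theorem~\ref{thm:2-SAT-critical}, without reopening the local-gain analysis.
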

\begin{proof}
    Let $\delta > 0$ be chosen as in Theorem~\ref{thm:2-SAT-critical} and $\eta = cd^{-2}\log^{-1}(d)$ for some small constant $c > 0$. If there are at least $\eta$-fraction (in the weighted sense) of configurations in $\mathcal{D}$ that are not $\delta$-critical, then in this case by Lemma~\ref{lem:not_critical} we have that the $\THRESH^-$ algorithm with the $f = f_{\beta_\star}$ already achieves $\sound_{f}(\mathcal{D}) \geq \beta_\star\cdot \comp(\mathcal{D}) +\Omega(\eta) = \beta_\star\cdot \comp(\mathcal{D}) +\Omega(d^{-2}\log^{-1}(d))$.

    Now assume at least $(1 - \eta)$-fraction of the configurations are $\delta$-critical, then we may ignore the constraints corresponding to configurations that are not $\delta$-critical and form a new instance $I'$ and a new distribution of configurations $\mathcal{D}'$ supported only on $\delta$-critical configurations by renormalizing the weights.
    Applying Theorem~\ref{thm:2-SAT-critical} to $I'$  and $\mathcal{D}'$ (after possibly flipping variables and SDP vectors so that $b_i \geq 0$ for every $i$), we obtain an assignment achieving a satisfying fraction of at least
    \begin{align*}
        & \left(1 - \eta\right)\cdot\left( \sound_{f}(\mathcal{D}') + \Omega(d^{-2}\log^{-1}(d))\right) \\
        \geq\, & \left(1 - \eta\right)\cdot\left( \beta_\star \cdot \comp(\mathcal{D}') + \Omega(d^{-2}\log^{-1}(d))\right)\\
        \geq\, & \left(1 - \eta\right)\cdot\left( \beta_\star \cdot (\comp(\mathcal{D}) - \eta) + \Omega(d^{-2}\log^{-1}(d))\right) \\
        =\, & \beta_\star \cdot \comp(\mathcal{D}) + \Omega(d^{-2}\log^{-1}(d)) - \eta \cdot O(1)\\
        =\, & \beta_\star \cdot \comp(\mathcal{D}) + \Omega(d^{-2}\log^{-1}(d)).\qedhere
    \end{align*}
\end{proof}

\subsection{Analyzing a Local Neighborhood}\label{subsec:local_2sat}
In this subsection we aim to prove Theorem~\ref{thm:2-SAT-critical}. Since we assume $b_i \geq 0$ for every variable $i$, our $\delta$-critical configurations $(b_i, b_j, b_{ij}, P)$ must all satisfy that $\|(b_i, b_j, b_{ij}) - (b_\star, b_\star, -1 + 2b_\star)\| \leq \delta$ and $P$ is either $x \vee y$ or $\overline{x} \vee \overline{y}$, if we take $\delta$ to be sufficiently small. 

To prove Theorem~\ref{thm:2-SAT-critical}, we need a few lemmas on the algorithmic quantities $W_{i, 0}, W_{i, 1}, W_{i, \unassigned}, \Delta_i$ in the neighborhood of $i$ defined in Section~\ref{sec:algo}. To avoid ambiguity, we use $A_\unassigned$ to denote the (partial) assignment $A$ before the execution of the for-loop at Line~\ref{Line:flip_for_loop}. The overall idea is to show that conditioned on $i$ being unassigned, $W_{i, 0} - W_{i, 1}$ must have a large variance, leading to a nontrivial local gain.

\begin{lemma}\label{lem:2sat_1norm}
For every $i$ and $t \in (f(b_i) - \epsilon, f(b_i) + \epsilon)$, we have \[\E[|W_{i, 0} - W_{i, 1}| \mid \br \cdot \bv_i^\perp = t] = W_i \cdot  \Omega(d^{-1}),\]
where the constant in $\Omega$ can be chosen independently from $t$.
\end{lemma}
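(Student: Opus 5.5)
The plan is to reduce the statement to a variance lower bound for $X \coloneqq W_{i,1}-W_{i,0}$, conditioned on $\br\cdot\bv_i^\perp=t$. I would first write $X$ as an explicit function of Gaussians. As remarked at the start of Section~\ref{subsec:local_2sat}, every constraint $C\in\mathcal{C}_i$ is $x\vee y$ or $\overline{x}\vee\overline{y}$, and its configuration is within $\delta$ of $(b_\star,b_\star,-1+2b_\star)$. Let $j$ be the other variable of $C$ and $g_j=\br\cdot\bv_j^\perp$.
\begin{itemize}
\item If $C=x\vee y$, then $C\in\mathcal{C}_{i,1}$ exactly when $g_j<f(b_j)-\epsilon$.
\item If $C=\overline{x}\vee\overline{y}$, then $C\in\mathcal{C}_{i,0}$ exactly when $g_j>f(b_j)+\epsilon$.
\end{itemize}
Hence
\[
X=\sum_{C\,:\,x\vee y} w(C)\,\bone\{g_j<f(b_j)-\epsilon\}-\sum_{C\,:\,\overline{x}\vee\overline{y}} w(C)\,\bone\{g_j>f(b_j)+\epsilon\},
\]
and $|X|\le W_i$. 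Conditioned on $g_i=t$, each $g_j$ is Gaussian with mean $\rho_j t$ and variance $1-\rho_j^2$. Here $\rho_j$ is within $O(\delta)$ of $\rho_\star=\rho(b_\star,b_\star,-1+2b_\star)$, which is bounded away from $\pm1$. The thresholds $f(b_j)\pm\epsilon$ lie within $O(\delta+\epsilon)$ of $f(b_\star)$, and $t$ lies within $\epsilon$ of $f(b_i)$. So every indicator has conditional probability bounded away from $0$ and $1$, and the relevant Gaussian densities at the thresholds are all $\Theta(1)$, uniformly in $t$.

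Next I would use the elementary inequality
\[
\E|X|\;\ge\;\tfrac12\,\E|X-\E X|\;\ge\;\frac{\Var(X)}{2\max|X-\E X|}\;\ge\;\frac{\Var(X)}{4W_i}.
\]
The first step holds because $\E|X|\ge|\E X|$. It therefore suffices to prove
\[
\Var(X\mid g_i=t)\;\ge\;c\sum_{C\in\mathcal{C}_i}w(C)^2 .
\]
Since $|\mathcal{C}_i|\le d$, Cauchy--Schwarz gives $\sum_C w(C)^2\ge W_i^2/d$, and this yields $\E|X|\ge (c/4)\,W_i/d$, as claimed.

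To prove the variance bound, I would expand $X$ in multivariate Hermite polynomials of the Gaussian vector $\br$ after removing the $\bv_i^\perp$ direction. This is where I expect to invoke the variance lower bounds of Appendix~\ref{app:variance}. The first-order part already has a clean structure. By Gaussian integration by parts, the linear component of $X$ is the vector $\sum_C s_C\,w(C)\,\varphi_C\,\bu_j$, where $\bu_j$ is the normalized component of $\bv_j^\perp$ orthogonal to $\bv_i^\perp$ and $\varphi_C=\Theta(1)$ is the relevant density. The signs $s_C$ are the same for every constraint, because $X$ is monotone nonincreasing in every $g_j$.

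The main obstacle is that this vector can be short: the $\bu_j$ may be nearly linearly dependent or cancel one another, so the first Hermite level alone does not yield $\Omega(\sum w(C)^2)$. My first attempt would be to sum over all Hermite levels $k$. The level-$k$ coefficient contributes a Gram-type quantity $\sum_{C,C'}w(C)w(C')a_k(C)a_k(C')\langle\bu_j,\bu_{j'}\rangle^k$. Because the thresholds are all near the common value $f(b_\star)$, the one-dimensional coefficients $a_k(C)$ are approximately equal across $C$. The full sum therefore reassembles into $\sum_{C,C'}w(C)w(C')\Cov(\text{indicator}_C,\text{indicator}_{C'})$. This covariance matrix dominates $c$ times its diagonal, since one-dimensional threshold indicators of Gaussians with correlation bounded away from $1$ are nondegenerate. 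Equivalently, each indicator retains $\Omega(1)$ conditional variance given the others.

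Handling distinct neighbors with nearly identical vectors $\bu_j$ is the delicate part. For that I would lean on the triangle inequalities together with the near-critical structure to keep correlations bounded away from $1$. Failing that, I would fall back on the appendix's explicit Hermite variance bounds.
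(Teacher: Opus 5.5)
Your reduction is correct and matches the paper's: the formula for $X$, the bound $\E|X|\ge\Var(X)/(4W_i)$ (the paper uses the equivalent $\E|X|\ge\E[X^2]/W_i$), and the closing step $\sum_C w(C)^2\ge W_i^2/d$ all work. The gap is the variance bound $\Var(X\mid\br\cdot\bv_i^\perp=t)\ge c\sum_C w(C)^2$, which is the whole content of the lemma. Summing the Hermite levels back up only reproduces the identity $\Var(X)=\sum w(C)w(C')\Cov(\cdot,\cdot)$. So everything rests on your claim that this covariance matrix dominates $c$ times its diagonal (equivalently, that each indicator keeps $\Omega(1)$ variance given the others).

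That claim is false here. Nothing prevents two distinct neighbors $j,k$ of $i$ from having $\bv_j=\bv_k$; this preserves $\delta$-criticality and all triangle inequalities. Two such constraints with the same predicate then give identical indicators, so the covariance matrix is singular. Hence your plan to keep correlations away from $1$ via the triangle inequalities cannot work. Falling back on Lemmas~\ref{lem:variance_2} or~\ref{lem:variance_main} would only give a $d^{-2}$-type rate.

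The properties you do isolate (probabilities bounded away from $0,1$, densities $\Theta(1)$) also hold with all thresholds at $0$, where the bound genuinely fails. If the $\bvperp{j}{0,i}$ were the vertices of a regular $d$-gon in a plane with $d$ odd, then $\sum_j\bone\{\br\cdot\bvperp{j}{0,i}\ge0\}\in\{(d-1)/2,(d+1)/2\}$ has variance $1/4$, not $\Omega(d)$.

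The missing idea is to use the common sign of the weights together with the fact that the common threshold is \emph{nonzero}; with these, no control of correlations is needed. Conditioned on $\br\cdot\bv_i^\perp=t$, write $g_j=\rho_j t+\sqrt{1-\rho_j^2}\,h_j$, where $h_j=\br\cdot\bvperp{j}{0,i}$ are standard Gaussians independent of the conditioning. Using $\bone\{h<a\}=1-\bone\{h\ge a\}$, $X$ is a constant minus a nonnegative combination of indicators $\bone\{h_j\ge s_j\}$. Every threshold $s_j=(f(b_j)\pm\epsilon-\rho_jt)/\sqrt{1-\rho_j^2}$ lies within $O(\epsilon+\delta)$ of $s=f(b_\star)\sqrt{(1-\rho_\star)/(1+\rho_\star)}\approx0.478$.

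In the expansion of Lemma~\ref{lem:indicator_hermite}, every level $m$ gives a nonnegative quadratic form, because $(\rho_{jk}^m)_{j,k}$ is PSD, where $\rho_{jk}=\bvperp{j}{0,i}\cdot\bvperp{k}{0,i}$. So you may keep only the $m=2$ level. Its matrix $w_jw_k\,s_j\varphi(s_j)\,s_k\varphi(s_k)\,\rho_{jk}^2/2$ is \emph{entrywise} nonnegative, since all $s_j$ share the sign of $s$. Dropping its off-diagonal entries leaves $\sum_j w_j^2 s_j^2\varphi(s_j)^2/2\ge c_s\sum_j w_j^2$. This is Lemma~\ref{lem:variance_2sat}, and it holds for arbitrary correlations, including $\rho_{jk}=\pm1$.
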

\begin{proof}
Let $\mathcal{C}_i^+ \subseteq \mathcal{C}_i$ be the set of constraints involving $i$ that are defined using the predicate $x \vee y$, and $\mathcal{C}_i^- \subseteq \mathcal{C}_i$ be the set of those defined using $\overline{x} \vee \overline{y}$.
 For $C \in \mathcal{C}_i$, let $j_C$ be the variable other than $i$ involved in $C$. For every $C \in \mathcal{C}_i^+$, we have 
\[
C\in \left\{\begin{array}{ll}
\mathcal{C}_{i, \sat} & \text{if } A_\unassigned(j_C) = \true,\\
\mathcal{C}_{i, 1} & \text{if } A_\unassigned(j_C) = \false,\\
\mathcal{C}_{i, \unassigned} & \text{if } A_\unassigned(j_C) = \unassigned.
\end{array}\right.
\]
Similarly, for every $C \in \mathcal{C}_i^-$, we have
\[
C\in \left\{\begin{array}{ll}
\mathcal{C}_{i, \sat} & \text{if } A_\unassigned(j_C) = \false,\\
\mathcal{C}_{i, 0} & \text{if } A_\unassigned(j_C) = \true,\\
\mathcal{C}_{i, \unassigned} & \text{if } A_\unassigned(j_C) = \unassigned.
\end{array}\right.
\]
So we have 
\begin{equation}\label{eq:w1-w0}
    W_{i, 0} - W_{i, 1} = \sum_{C \in \mathcal{C}_i^+} w(C)\bone\{A_\unassigned(j_C) = \true\} - \sum_{C \in \mathcal{C}_i^-} w(C)\bone\{A_\unassigned(j_C) = \false\}.
\end{equation}

Given a configuration $\theta = (b_i, b_j, b_{ij}, P)$, we have $A_\unassigned(j) = \true$ if and only if $\br \cdot \bv_j^\perp \geq f(b_j) + \epsilon$, or equivalently $\br \cdot \bv_{j}^{\perp \{0, i\}} \geq \frac{-\rho(\theta)\br \cdot \bv_i^\perp + f(b_j) + \epsilon}{\sqrt{1 - \rho(\theta)^2}}$. Similarly $A_\unassigned(j) = \false$ if and only if $\br \cdot \bv_{j}^{\perp \{0, i\}} \leq \frac{-\rho(\theta)\br \cdot \bv_i^\perp + f(b_j) - \epsilon}{\sqrt{1 - \rho(\theta)^2}}$. Since $\br \cdot \bv_i^\perp \in (f(b_i) - \epsilon, f(b_i) + \epsilon)$ and $\theta$ is $\delta$-critical, we have
$\|\frac{-\rho(\theta)\br \cdot \bv_i^\perp + f(b_j) \pm \epsilon}{\sqrt{1 - \rho(\theta)^2}} - s\| = O(\epsilon + \delta)$, where
\begin{align*}
    s \coloneqq \frac{-\rho_\star f(b_\star) + f(b_\star)}{\sqrt{1 - \rho_\star^2}} = f(b_\star) \cdot \frac{1 - \rho_\star}{\sqrt{1 - \rho_\star^2}} = f(b_\star) \cdot \sqrt{\frac{1 - \rho_\star}{1 + \rho_\star}}
\end{align*}
and $\rho_\star = \frac{-1 + 2b_\star - b_\star^2}{1 - b_\star^2} = -\frac{1-b_\star}{1 + b_\star}$. By plugging in the value of $b_\star$ and $\beta_\star$, we get $s \approx 0.4779$. So we may rewrite
\begin{align*}
    W_{i, 0} - W_{i, 1} & = \sum_{C \in \mathcal{C}_i^+} w(C)\bone\{\br \cdot \bvperp{j_C}{0, i} \geq s \pm O(\epsilon + \delta)\} - \sum_{C \in \mathcal{C}_i^-} w(C)\bone\{\br \cdot \bvperp{j_C}{0, i} \leq s \pm O(\epsilon + \delta)\} \\
    & = \sum_{C \in \mathcal{C}_i} w(C)\bone\{\br \cdot \bvperp{j_C}{0, i} \geq s \pm O(\epsilon + \delta)\} - \sum_{C \in \mathcal{C}_i^- } w(C).
\end{align*}
In the last step we used the fact that for a continuous random variable $X$ we have $\bone\{X \leq s\} = 1 - \bone\{X \geq s\}$. In the above calculations, the terms $O(\epsilon + \delta)$ may be different depending on $C$ but we may choose them small enough so that they are all less than $s / 2$. Note that $\sum_{C \in \mathcal{C}_i^- } w(C)$ is a constant which does not affect the variance. So, we can apply Lemma~\ref{lem:variance_2sat} and obtain 
\[\Var[W_{i, 0} - W_{i, 1} \mid \br \cdot \bv_i^\perp = t] = W_i^2 \cdot \Omega(d^{-1}).\] It follows that 
\begin{align*}
    \E[|W_{i, 0} - W_{i, 1}| \mid \br \cdot \bv_i^\perp = t] & \geq \frac{1}{W_i}\cdot \E[(W_{i, 0} - W_{i, 1})^2 \mid \br \cdot \bv_i^\perp = t]  \\ & \geq \frac{1}{W_i}\cdot \Var[W_{i, 0} - W_{i, 1} \mid \br \cdot \bv_i^\perp = t] \\
    & = W_i \cdot  \Omega(d^{-1}).\qedhere
\end{align*}
\end{proof}

\begin{lemma}\label{lem:2-sat_unassigned}
    For every $i$ and $t \in (f(b_i) - \epsilon, f(b_i) + \epsilon)$ we have $\E\left[W_{i, \unassigned} \mid \br \cdot \bv_i^\perp = t\right] = \Theta(\epsilon \cdot W_i)$ where the constant in $\Theta$ can be chosen independently from $t$.
\end{lemma}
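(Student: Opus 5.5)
By linearity of expectation, $\E[W_{i,\unassigned} \mid \br\cdot\bv_i^\perp = t] = \sum_{C \in \mathcal{C}_i} w(C)\Pr[A_\unassigned(j_C) = \unassigned \mid \br\cdot\bv_i^\perp = t]$. It therefore suffices to show that for every $C \in \mathcal{C}_i$ with other variable $j = j_C$, this conditional probability is $\Theta(\epsilon)$, with constants uniform over $C$ and over $t \in (f(b_i)-\epsilon, f(b_i)+\epsilon)$. Summing with weights $w(C)$ then gives $\Theta(\epsilon W_i)$.

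Fix such a $C$ and let $\theta$ be the configuration representing it, with $\rho = \rho(\theta)$. Decompose $\bv_j^\perp = \rho\,\bv_i^\perp + \sqrt{1-\rho^2}\,\bvperp{j}{0,i}$. Then $\br\cdot\bvperp{j}{0,i}$ is a standard Gaussian independent of $\br\cdot\bv_i^\perp$, since it is the inner product of $\br$ with a unit vector orthogonal to $\bv_i^\perp$. The event $A_\unassigned(j)=\unassigned$ is $|\br\cdot\bv_j^\perp - f(b_j)| < \epsilon$. Conditioned on $\br\cdot\bv_i^\perp = t$, this becomes
\[\br\cdot\bvperp{j}{0,i} \in \left(\frac{f(b_j)-\rho t-\epsilon}{\sqrt{1-\rho^2}},\ \frac{f(b_j)-\rho t+\epsilon}{\sqrt{1-\rho^2}}\right).\]
This is an interval of length $2\epsilon/\sqrt{1-\rho^2}$, and its Gaussian probability is at most $\frac{2\epsilon}{\sqrt{2\pi}\sqrt{1-\rho^2}}$. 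It is also at least that length times the minimum of $\varphi$ over the interval.

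We use $\delta$-criticality (with $b_i,b_j\ge 0$) as in the proof of Lemma~\ref{lem:2sat_1norm}. Every configuration is within $\delta$ of $(b_\star,b_\star,-1+2b_\star)$, so $\rho$ is within $O(\delta)$ of $\rho_\star = -\frac{1-b_\star}{1+b_\star}$, and $|\rho_\star|<1$ because $b_\star\approx 0.1625$. For $\delta$ small, $\sqrt{1-\rho^2}$ is therefore bounded above and below by positive absolute constants. The interval endpoints are $s \pm O(\epsilon+\delta)$ with $s\approx 0.4779$. This uses that $f=f_{\beta_\star}$ is smooth near $b_\star$, so that $f(b_j)$ and $t$ are within $O(\epsilon+\delta)$ of $f(b_\star)$, together with $\epsilon = d^{-1}\log^{-1} d \le$ a small constant. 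Hence the whole interval lies in a bounded region such as $[s/2, 2s]$, on which $\varphi$ is bounded below by a constant. Both bounds are then $\Theta(\epsilon)$ with constants independent of $t$ and $C$.

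The main obstacle is only the uniformity. We must check that $1-\rho^2$ stays bounded away from $0$ and that the interval stays in a compact set. Both follow from choosing $\delta$ small and from continuity of $f_{\beta_\star}$ and $\rho$ near the critical point. We also need $d$ large enough (or $\epsilon$ at most a constant) so that the $O(\epsilon)$ shifts are small; for bounded $d$ the statement can be absorbed into the constants.
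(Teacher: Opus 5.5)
Your proposal is correct and follows essentially the same argument as the paper. Both use linearity of expectation over $\mathcal{C}_i$, rewrite ``$j$ is unassigned'' as $\br\cdot\bvperp{j}{0,i}$ lying in an interval of length $2\epsilon/\sqrt{1-\rho(\theta)^2}$ with $\br\cdot\bvperp{j}{0,i}$ independent of $\br\cdot\bv_i^\perp$, and conclude that this interval has Gaussian measure $\Theta(\epsilon)$; you additionally spell out the uniformity (via $\delta$-criticality, $\rho(\theta)$ stays bounded away from $\pm 1$ and the interval stays in a bounded region), which the paper leaves implicit.
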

\begin{proof}
    Consider a constraint $C \in \mathcal{C}_i$ involving $i$ and some other variable $j$, represented by the configuration $\theta = (b_i, b_j, b_{ij}, P)$. As computed in the proof of Lemma~\ref{lem:2sat_1norm}, we have that $j$ is unassigned if and only if
    \begin{equation}\label{eq:threshold_for_bv_j_perp}
     \frac{-\rho(\theta)\br \cdot \bv_i^\perp + f(b_j) - \epsilon}{\sqrt{1 - \rho(\theta)^2}} < \br \cdot \bvperp{j}{0, i} < \frac{-\rho(\theta)\br \cdot \bv_i^\perp + f(b_j) + \epsilon}{\sqrt{1 - \rho(\theta)^2}}.
    \end{equation}
    This is an interval of length $\Theta(\epsilon)$. It follows that 
    \[\E\left[\bone\{C \in \mathcal{C}_{i, \unassigned}\} \mid \br\cdot\bv_i^\perp = t\right] = \Theta(\epsilon),
    \] and therefore
    \begin{equation*}
    \E\left[W_{i, \unassigned} \mid \br \cdot \bv_i^\perp = t\right] = \E\left[\sum_{C \in \mathcal{C}}w(C)\bone\{C \in \mathcal{C}_{i, \unassigned}\} \mid \br \cdot \bv_i^\perp = t\right] =\Theta(\epsilon\cdot W_i). \qedhere
    \end{equation*}
\end{proof}

\begin{lemma}[Local gains]\label{lem:2sat_local_gain}
    Choose $\epsilon = cd^{-1}\log^{-1}(d)$ for some sufficiently small constant $c > 0$. Then, for every variable $i$, we have $\E[\Delta_i \mid A_\unassigned(i) = \unassigned] = W_i \cdot \Omega(d^{-1})$. 
\end{lemma}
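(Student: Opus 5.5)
We condition on $\br\cdot\bv_i^\perp = t$ for a fixed $t \in (f(b_i)-\epsilon, f(b_i)+\epsilon)$, prove the bound uniformly in $t$, and then integrate over $t$ to obtain the statement conditioned on $A_\unassigned(i)=\unassigned$. Given $t$, the two indicator terms in $\Delta_i$ are deterministic: if $t\in(f(b_i),f(b_i)+\epsilon)$ then $\Delta_i = (W_{i,0}-W_{i,1}-W_{i,\unassigned})_+$, and symmetrically with $0,1$ swapped. So we need to lower bound $\E[(X - W_{i,\unassigned})_+ \mid t]$, where $X = \pm(W_{i,0}-W_{i,1})$. Lemma~\ref{lem:2sat_1norm} controls $|X|$ but not its sign, so the main task is to get a one-sided bound.

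\textbf{Step 1 (both signs of $X$ are large).} From \eqref{eq:w1-w0}, $W_{i,0}-W_{i,1} = \sum_{C\in\mathcal{C}_i} w(C)\bone\{\br\cdot\bvperp{j_C}{0,i}\ge s_C\} - \sum_{C\in\mathcal{C}_i^-}w(C)$, with thresholds $s_C = s\pm O(\epsilon+\delta)$. The vectors $\bvperp{j_C}{0,i}$ are independent of $\br\cdot\bv_i^\perp$. Flipping $\br\mapsto$ its reflection does not directly give symmetry, because the thresholds $s_C$ are nonzero. I would argue instead as follows. $\E[X\mid t]$ can be positive or negative, but $\E|X - \E X| \ge \Omega(W_i/d)$; this follows from the variance bound of Lemma~\ref{lem:variance_2sat} together with $|X|\le W_i$, via $\E|Y|\ge \E[Y^2]/\sup|Y|$. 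Since $\E[X-\E X]=0$, we get $\E[(X-\E X)_+] = \tfrac12 \E|X-\E X| = \Omega(W_i/d)$. If $\E[X\mid t]\ge 0$ we directly get $\E[X_+]\ge \Omega(W_i/d)$. The bad case is $\E[X\mid t]$ very negative. Here I would instead use that at most $d$ terms appear and that each indicator has probability bounded away from $0$ and $1$ (because $s\approx 0.478$ is fixed). Then the event that all indicators favor the positive sign (the $j_C$ with $C\in\mathcal{C}_i^+$ assigned \true, those in $\mathcal{C}_i^-$ assigned \false) has probability at least $c_0^{d}$, which is too small. I therefore expect this step to be the main obstacle. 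What I would try first is to check whether the ``$\pm$'' sides are really symmetric in this setting: which sign is rewarded flips exactly when $t$ crosses $f(b_i)$, and $X$'s distribution at $t$ and at $t'$ on the other side differ by only $O(\epsilon)$. So averaging over $t$ in the window should make $\E[(W_{i,0}-W_{i,1})_+]$ and $\E[(W_{i,1}-W_{i,0})_+]$ both appear with comparable weight, giving $\E[\Delta_i \mid \unassigned] \gtrsim \tfrac12\E[|W_{i,0}-W_{i,1}| - 2W_{i,\unassigned}]$ minus an $O(\epsilon W_i)$ coupling error. That error is fine because $\epsilon \ll 1/d$.

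\textbf{Step 2 (subtract unassigned weight).} We have $(X-W_{i,\unassigned})_+\ge X_+ - W_{i,\unassigned}$. By Lemma~\ref{lem:2-sat_unassigned}, $\E[W_{i,\unassigned}\mid t]=\Theta(\epsilon W_i) = O(cW_i/(d\log d))$, which is negligible against $\Omega(W_i/d)$ once $c$ is small. There is one subtlety: $W_{i,\unassigned}$ also perturbs $X$, since unassigned neighbors are excluded from $\mathcal{C}_{i,0},\mathcal{C}_{i,1}$. However, the thresholds in Lemma~\ref{lem:2sat_1norm} already account for this through the $\pm\epsilon$ shifts, so no extra loss arises.

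\textbf{Step 3 (combine).} Putting Steps 1 and 2 together gives $\E[\Delta_i\mid t]\ge \Omega(W_i/d) - O(\epsilon W_i) = \Omega(W_i/d)$ after averaging $t$ over the window, uniformly in $i$. Integrating over $t$ then yields the lemma.
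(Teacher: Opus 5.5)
Your final argument, the averaging idea at the end of Step 1, is correct. It rests on the same insight as the paper: the rewarded sign of $W_{i,0}-W_{i,1}$ flips when $\br\cdot\bv_i^\perp$ crosses $f(b_i)$, while $W_{i,0}-W_{i,1}$ itself changes by only $O(\epsilon W_i)$ across the window. The execution differs.

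The paper does a case split on the upper half-window. If $\E[W_{i,0}-W_{i,1}\mid \text{upper}]\ge -W_i/(d\sqrt{\log d})$, it applies $x_+=(|x|+x)/2$ on that half together with Lemmas~\ref{lem:2sat_1norm} and~\ref{lem:2-sat_unassigned}. Otherwise it moves the positive mean of $W_{i,1}-W_{i,0}$ to the lower half, using only that $t\mapsto\E[W_{i,1}-W_{i,0}\mid\br\cdot\bv_i^\perp=t]$ is $O(W_i)$-Lipschitz.

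Write $Y_t$ for $W_{i,0}-W_{i,1}$ conditioned on $\br\cdot\bv_i^\perp=t$. Your averaging amounts to pairing $t$ with $t'=2f(b_i)-t$ and using $(Y_t)_+ + (-Y_{t'})_+\ge |Y_t|-|Y_t-Y_{t'}|$. This needs an $L^1$ coupling of the random variable rather than just Lipschitzness of its mean. It is proved the same way, and it removes both the case split and the auxiliary threshold $W_i/(d\sqrt{\log d})$. It also makes your centering discussion and the $c_0^{d}$ detour unnecessary.

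Three things to make precise when you write it out:
\begin{enumerate}
\item The coupling must be pathwise and weighted. Fix the component of $\br$ orthogonal to $\bv_0,\bv_i$. Each threshold in \eqref{eq:threshold_for_bv_j_perp} then moves by $O(\epsilon)$, since $\rho(\theta)$ stays bounded away from $\pm1$ for $\delta$-critical configurations. This gives $\E|Y_t-Y_{t'}|\le\sum_{C\in\mathcal{C}_i}w(C)\,O(\epsilon)=O(\epsilon W_i)$. A total-variation bound on the law of $Y_t$, which your phrase ``the distributions differ by only $O(\epsilon)$'' suggests, would only give $O(d\epsilon W_i)=O(cW_i/\log d)$. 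That does not beat $W_i/d$.
\item The bound holds only after averaging over the window, not uniformly in $t$ as your opening paragraph announces.
\item The averaging uses that the two half-windows carry comparable Gaussian mass. This holds because $f(b_i)$ is bounded and $\epsilon\to 0$.
\end{enumerate}
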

\begin{proof}
    It is sufficient to show that  
    \[\E[\Delta_i \mid \br \cdot \bv_i^\perp = t] = W_i \cdot \Omega(d^{-1})\]
    either for every $t \in (f(b_i) - \epsilon, f(b_i))$ or for every $t \in (f(b_i), f(b_i) + \epsilon)$.

    From the definition
    \begin{align*}\Delta_i & = (W_{i, 0} - W_{i, 1} - W_{i, \unassigned})_+  \cdot \bone\{\br \cdot \bv_i^\perp \in (f(b_i), f(b_i) + \epsilon)\} \\
& \qquad + (W_{i, 1} - W_{i, 0} - W_{i, \unassigned})_+  \cdot \bone\{\br \cdot \bv_i^\perp \in (f(b_i) - \epsilon, f(b_i))\}\end{align*}
    we have 
    \begin{align*}
        & \E[\Delta_i \mid \br \cdot \bv_i^\perp \in (f(b_i), f(b_i) + \epsilon)] \\
        =\, &\E\left[(W_{i, 0} - W_{i, 1} - W_{i, \unassigned})_+ \mid \br \cdot \bv_i^\perp \in (f(b_i), f(b_i) + \epsilon)\right] \\
        \geq \, &\E\left[(W_{i, 0} - W_{i, 1})_+ - W_{i, \unassigned} \mid \br \cdot \bv_i^\perp \in (f(b_i), f(b_i) + \epsilon)\right] \\
        = \, &\E\left[\frac{|W_{i, 0} - W_{i, 1}| + (W_{i, 0} - W_{i, 1})}{2}- W_{i, \unassigned} \mid \br \cdot \bv_i^\perp \in (f(b_i), f(b_i) + \epsilon)\right],
    \end{align*}
    and similarly
    \begin{align*}
        & \E[\Delta_i \mid \br \cdot \bv_i^\perp \in (f(b_i) - \epsilon, f(b_i))] \\
        \geq \, &\E\left[\frac{|W_{i, 1} - W_{i, 0}| + (W_{i, 1} - W_{i, 0})}{2}- W_{i, \unassigned} \mid \br \cdot \bv_i^\perp \in (f(b_i) - \epsilon, f(b_i))\right].
    \end{align*}

    Assume that $\E[W_{i, 0} - W_{i, 1} \mid \br \cdot \bv_i^\perp \in (f(b_i), f(b_i) + \epsilon)] \geq -W_i / (d \sqrt{\log d})$. Then by Lemma~\ref{lem:2sat_1norm} and Lemma~\ref{lem:2-sat_unassigned} we have 
    \begin{align*}
        & \E\left[\frac{|W_{i, 0} - W_{i, 1}| + (W_{i, 0} - W_{i, 1})}{2}- W_{i, \unassigned} \mid \br \cdot \bv_i^\perp \in (f(b_i), f(b_i) + \epsilon)\right] \\
        \geq\, & W_i \cdot \left(\Omega(d^{-1}) - d^{-1}\log^{-1/2}(d) - O(d^{-1}\log^{-1}(d))\right) \\
        =\, & W_i \cdot \Omega(d^{-1}).
    \end{align*}
    Otherwise, we have $\E[W_{i, 0} - W_{i, 1} \mid \br \cdot \bv_i^\perp \in (f(b_i), f(b_i) + \epsilon)] < -W_i / (d \sqrt{\log d})$, or equivalently $\E[W_{i, 1} - W_{i, 0} \mid \br \cdot \bv_i^\perp \in (f(b_i), f(b_i) + \epsilon)] > W_i / (d \sqrt{\log d})$. Observe that by Equation~\eqref{eq:w1-w0} the function $\E[W_{i, 1} - W_{i, 0} \mid \br \cdot \bv_i^\perp = t]$ is $O(W_i)$-Lipschitz in $t$, so we have
    \[
    \E[W_{i, 1} - W_{i, 0} \mid \br \cdot \bv_i^\perp \in (f(b_i) -\epsilon, f(b_i))] > W_i / (d \sqrt{\log d}) - \epsilon \cdot O(W_i) = W_i \cdot \Omega(d^{-1}\log^{-1/2}(d)).
    \]
    It follows that 
    \begin{align*}
    & \E\left[\frac{|W_{i, 1} - W_{i, 0}| + (W_{i, 1} - W_{i, 0})}{2}- W_{i, \unassigned} \mid \br \cdot \bv_i^\perp \in (f(b_i) - \epsilon, f(b_i))\right] \\
        \geq\, & W_i \cdot \left(\Omega(d^{-1}) + \Omega(d^{-1}\log^{-1/2}(d)) - O(d^{-1}\log^{-1}(d))\right) \\
        =\, & W_i \cdot \Omega(d^{-1}).
    \end{align*}
    Thus, we conclude that $\E[\Delta_i \mid A_\unassigned(i) = \unassigned] = W_i \cdot \Omega(d^{-1})$.
\end{proof}

We are now ready to prove Theorem~\ref{thm:2-SAT-critical}.

\begin{proof}[Proof of Theorem~\ref{thm:2-SAT-critical}]
    By Lemma~\ref{lem:val_diff_local_gain}, we have
    \begin{align*}
        \E[\Val(I, A)] & \geq \sound_f(\mathcal{D}) + \sum_{i \in V}\E[\Delta_i] \\
        & \geq \sound_f(\mathcal{D}) + \sum_{i \in V}\E[\Delta_i \mid A_\unassigned(i) = \unassigned] \cdot \Pr[A_\unassigned(i) = \unassigned] \\
        & \geq \sound_f(\mathcal{D}) + \sum_{i \in V}W_i \cdot \Omega(d^{-1}) \cdot \Omega(\epsilon) \\
        & \geq \sound_f(\mathcal{D}) + \Omega(d^{-2}\log^{-1}d).
    \end{align*}
    Here we used the assumption that $\sum_{i \in V}W_i = 2$ (since total weight is 1 and each constraint is counted twice) and Lemma~\ref{lem:2sat_local_gain}.
\end{proof}

\section{Analysis for General Boolean MAX 2-CSPs}\label{sec:2csp}

In this section we analyze general Boolean MAX 2-CSPs. To this end we need the distribution of configurations to be \emph{reasonable} as introduced below.

\begin{definition}
    We say that two predicates $P_1, P_2: \{-1, 1\}^2 \to \{0, 1\}$ are \emph{aligned} if there exist $\sigma_1, \sigma_2 \in \{-1, 1\}$ and $s \in \{0, 1\}$ such that 
    \begin{equation}
        P_1(-1, \sigma_1) = P_2(-\sigma_2, \sigma_1\sigma_2) = s \quad\text{ and }\quad P_1(1, \sigma_1) = P_2(\sigma_2, \sigma_1\sigma_2) = 1-s.
    \end{equation}
    We say that they are \emph{anti-aligned} if there exist $\sigma_1, \sigma_2 \in \{-1, 1\}$ and $s \in \{0, 1\}$ such that 
    \begin{equation}
        P_1(-1, \sigma_1) = P_2(\sigma_2, \sigma_1\sigma_2) = s \quad\text{ and }\quad P_1(1, \sigma_1) = P_2(-\sigma_2, \sigma_1\sigma_2) = 1-s.
    \end{equation}
\end{definition}

Informally, if we look at two constraints with one variable in common, then they are aligned if there is a way to fix the other variables to induce the same unary preference on the shared variable, and they are anti-aligned if they induce opposite unary preferences.

\begin{definition}\label{def:reasonable_dist}
    We say that a distribution of configurations $\mathcal{D}$ representing some SDP solution is \emph{reasonable} if the following conditions hold:
    \begin{itemize}
        \item For every $\theta = (b_i, b_j, b_{ij}, P) \in \supp(\mathcal{D})$, we have $|b_i|, |b_j| \leq 0.4$ and $|b_{ij}| \in [0.5, 0.9]$.
        \item If $\theta_1 = (b_i, b_j, b_{ij}, P_1), \theta_2 = (b_i, b_k, b_{ik}, P_2) \in \supp(\mathcal{D})$ represent some constraints $(P_1, i, j)$ and $(P_2, i, k)$, then $\rho(\theta_1)\rho(\theta_2) \geq 0$ if $P_1$ and $P_2$ are aligned and $\rho(\theta_1)\rho(\theta_2) \leq 0$ if $P_1$ and $P_2$ are anti-aligned.
    \end{itemize}
\end{definition}

\begin{remark}
    The conditions for reasonableness may seem somewhat arbitrary. Indeed, the constants are chosen only for convenience: with these values we are able to prove bounds on the covariances as in Lemma~\ref{lem:cov_lower_bound}. That said, it may be verified that the hardest (optimal under UGC) distributions for MAX CUT and MAX 2-SAT are both reasonable. Furthermore, the known (numerically) hardest distributions (though not known to be optimal) for MAX 2-AND and MAX DI-CUT due to~\cite{brakensiek2023separating} are also reasonable. 
\end{remark}

The main benefit of reasonableness is the following bound on covariances in a local neighborhood. The proof of this lemma is deferred to the end of this section.

\begin{lemma}\label{lem:cov_lower_bound}
    Let $\mathcal{D}$ be a reasonable distribution of configurations representing some SDP solution $\{\bv_i \mid i \in \{0\} \cup [n]\}$ which satisfies all triangle inequalities. Let $\theta_1 = (b_i, b_j, b_{ij}, P_1), \theta_2 = (b_i, b_k, b_{ik}, P_2) \in \supp(\mathcal{D})$ be two configurations representing some constraints $(P_1, i, j)$ and $(P_2, i, k)$. If $P_1$ and $P_2$ are aligned, then $\bvperp{j}{0, i} \cdot \bvperp{k}{0, i} \geq -5/6$. If $P_1$ and $P_2$ are anti-aligned, then $\bvperp{j}{0, i} \cdot \bvperp{k}{0, i} \leq 5/6$. 
\end{lemma}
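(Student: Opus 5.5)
The plan is to write the quantity in the statement in terms of the pairwise correlations of the once-projected vectors and then bound it using the triangle inequalities. Put $\rho_1 = \rho(\theta_1) = \bv_i^\perp\cdot\bv_j^\perp$, $\rho_2 = \rho(\theta_2) = \bv_i^\perp\cdot \bv_k^\perp$ and $\rho_{jk} = \bv_j^\perp\cdot\bv_k^\perp = \frac{b_{jk}-b_jb_k}{\sqrt{(1-b_j^2)(1-b_k^2)}}$. Projecting $\bv_j^\perp,\bv_k^\perp$ away from $\bv_i^\perp$ gives
\[
\bvperp{j}{0,i}\cdot\bvperp{k}{0,i} = \frac{\rho_{jk}-\rho_1\rho_2}{\sqrt{(1-\rho_1^2)(1-\rho_2^2)}}.
\]

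\textbf{Reducing the anti-aligned case to the aligned case.} Replace $\bv_k$ by $-\bv_k$, i.e.\ negate the literal $x_k$ and change $P_2$ to $P_2'(x,y)=P_2(x,-y)$. This keeps the SDP solution valid, including all triangle inequalities and the first reasonableness condition. It negates $b_k$, $b_{ik}$, $\rho_2$ and $\rho_{jk}$, hence negates the target inner product. Directly from the definitions (substitute $\sigma_1\sigma_2\mapsto -\sigma_1\sigma_2$ in the second coordinate), $P_1,P_2$ anti-aligned means $P_1,P_2'$ aligned. The reasonableness sign condition only enters through $\rho_1\rho_2$, so it suffices to prove the bound $\ge -5/6$ in the aligned case, with $\rho_1\rho_2\ge 0$.

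\textbf{Signs and the lower bound on $b_{jk}$.} Since $|b_ib_j|\le 0.16<0.5\le |b_{ij}|$, the sign of $\rho_1$ is the sign of $b_{ij}$, and likewise for $\rho_2$ and $b_{ik}$. So $\rho_1\rho_2\ge0$ forces $b_{ij}$ and $b_{ik}$ to have the same sign, and then $|b_{ij}+b_{ik}|\ge 1$. The triangle inequalities \eqref{eq:triangle_ijk} for the triple $(i,j,k)$ give $1\mp b_{ij}\mp b_{ik}+b_{jk}\ge 0$, i.e.\ $b_{jk}\ge |b_{ij}+b_{ik}|-1$. The target expression is increasing in $b_{jk}$, so I may set $b_{jk}=|b_{ij}+b_{ik}|-1$. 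By the symmetry $\bv_j,\bv_k\mapsto-\bv_j,-\bv_k$, assume $b_{ij},b_{ik}\in[0.5,0.9]$. The target then becomes an explicit function $F(b_i,b_j,b_k,b_{ij},b_{ik})$ on a compact region: $|b_i|,|b_j|,|b_k|\le 0.4$, $b_{ij},b_{ik}\in[0.5,0.9]$, plus the validity constraints \eqref{eq:valid}. It remains to show $F\ge -5/6$ there, by calculus or a careful case analysis. The expected picture is that the numerator is at least about $-0.19-\rho_1\rho_2$, while $\rho_1\rho_2$ large pushes the denominator small only together with a compensating numerator.

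\textbf{Main obstacle.} The difficulty is the degenerate regime $\rho_1\to 1$ (or $\rho_2\to1$). This is possible since $b_{ij}\le 1-|b_i-b_j|$ allows $\rho_1\approx 1$ when $b_i\approx b_j$ and $b_{ij}\approx 0.9$, and there the denominator vanishes. To handle it, I would not use the lower bound on $b_{jk}$ crudely. Instead I would use the triangle inequality in the projected form: $\rho_1$ near $1$ means $\bv_j^\perp\approx\bv_i^\perp$, which forces $\rho_{jk}\approx\rho_2$. Quantitatively, $\|\bv_j^\perp-\bv_i^\perp\|^2=2(1-\rho_1)$ gives $\rho_{jk}-\rho_1\rho_2 \ge -\sqrt{(1-\rho_1^2)(1-\rho_2^2)}$ only trivially (Cauchy--Schwarz), so this step needs the actual inequalities on the $b$'s. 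I would split into cases: if $\max(|\rho_1|,|\rho_2|)\le \rho_0$ for a suitable constant $\rho_0$, bound crudely with the numerator estimate above. Otherwise, expand $F$ near the boundary $b_i=b_j$, $b_{ij}=1-|b_i-b_j|$, and verify directly that the numerator vanishes at least as fast as the denominator with the right sign. The constant $5/6$ leaves enough slack that a (possibly computer-assisted) check on a grid with Lipschitz bounds would close the argument.
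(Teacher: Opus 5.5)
Your preliminary reductions are correct and match the paper's. These are the anti-aligned case via negating $\bv_k$, the sign normalization, and the triangle inequality $1-b_{ij}-b_{ik}+b_{jk}\ge 0$, which is exactly the one the paper uses. (The paper negates $\bv_i$ rather than $\bv_j,\bv_k$; either works.) But the proposal stops where the content of the lemma begins. ``$F\ge -5/6$ by calculus or a careful case analysis'' is the whole lemma, and your plan for it rests on two false premises.

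First, the degenerate regime does not occur. The bounds $b_{ij}\le 0.9$ and $|b_i|,|b_j|\le 0.4$, together with validity, keep $\rho_1$ below about $0.91$. Your own example $b_i=b_j=b$, $b_{ij}=0.9$ gives $\rho_1=1-0.1/(1-b^2)\le 0.9$. So the denominators are bounded away from $0$, and there is no boundary expansion to do.

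Second, and more seriously, there is no slack in $5/6$. Take $b_i=-0.1$, $b_j=b_k=0.4$, $b_{ij}=b_{ik}=0.5$, $b_{jk}=0$. This is realized by unit vectors satisfying all triangle inequalities, with both configurations reasonable and $\rho_1\rho_2>0$. At this point $\bvperp{j}{0,i}\cdot\bvperp{k}{0,i}=-5/6$ exactly, with $\rho_1=\rho_2\approx 0.592$ and $\rho_{jk}=-0.16/0.84$. Your crude numerator bound $-0.19-\rho_1\rho_2$ is itself tight here. So it certifies $-5/6$ only for $\rho_1\rho_2$ up to about $0.35$, which is precisely the value at this extremal point, and everything beyond is left without an argument. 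A grid check with Lipschitz bounds can only certify $-5/6-\eta$, never the attained value $-5/6$.

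The missing idea is the paper's mixture decomposition. Set $p_\sigma=(1+\sigma b_i)/2$ and $q_{\sigma,\ell}=\frac{1+\sigma b_i-\sigma b_\ell-b_{i\ell}}{2(1+\sigma b_i)}$ for $\sigma\in\{-1,1\}$, $\ell\in\{j,k\}$. Let $\buperp{\ell}{0,i}$ be the unnormalized component of $\bv_\ell$ orthogonal to $\bv_0,\bv_i$. Then one checks
\[
\|\buperp{\ell}{0,i}\|^2=4\sum_\sigma p_\sigma q_{\sigma,\ell}(1-q_{\sigma,\ell}),\qquad \buperp{j}{0,i}\cdot\buperp{k}{0,i}=(1-b_{ij}-b_{ik}+b_{jk})-4\sum_\sigma p_\sigma q_{\sigma,j}q_{\sigma,k}.
\]
The triangle inequalities on $(0,i,\ell)$ plus the reasonableness bounds give $0\le q_{\sigma,\ell}\le\alpha=5/11$. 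Then use $q_{\sigma,j}q_{\sigma,k}\le\frac{\alpha}{1-\alpha}\sqrt{q_{\sigma,j}(1-q_{\sigma,j})q_{\sigma,k}(1-q_{\sigma,k})}$ and Cauchy--Schwarz over $\sigma$, together with $1-b_{ij}-b_{ik}+b_{jk}\ge 0$. This gives the bound $-\frac{\alpha}{1-\alpha}=-\frac56$ directly, with no case analysis. In the tight example above, $q_{+1,\ell}=0$ and $q_{-1,\ell}=5/11$, so every inequality in that chain holds with equality.
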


Under the reasonableness condition, we prove the following theorem.

\begin{theorem}\label{thm:general}
    Let $\mathcal{P}$ be a set of Boolean predicates of arity 2 such that every $P \in \mathcal{P}$ depends on both coordinates. Let $d \in \mathbb{N}$, and $B > 0$. Let $I$ be an instance of $\MAXCSP(\mathcal{P})$ such that every variable appears in at most $d$ constraints. Let $\mathcal{D}$ be a reasonable distribution of configurations corresponding to an SDP solution satisfying all triangle inequalities. Then, for some $\epsilon = {\Theta}(d^{-2}\log^{-2}(d))$ and any $f: [-1, 1] \to [-B, B]$, Algorithm~\ref{alg:main} returns an assignment to $I$ that satisfies at least $\sound_{f}(\mathcal{D}) + {\Omega}(d^{-4}\log^{-7/2}d)$ fraction of constraints, where the constant inside the ${\Omega}$ notation depends only on $\mathcal{P}$ and $B$.
\end{theorem}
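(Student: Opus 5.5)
The plan follows the template of the MAX 2-SAT analysis: Lemma~\ref{lem:val_diff_local_gain} reduces everything to lower-bounding $\sum_i \E[\Delta_i]$. For each variable $i$ we show $\E[\Delta_i \mid A_\unassigned(i)=\unassigned] = W_i\cdot \Omega(d^{-2}\log^{-3/2} d)$. Since $f$ takes values in $[-B,B]$, we have $\Pr[A_\unassigned(i)=\unassigned]=\Theta_B(\epsilon)$, and $\sum_i W_i = 2$. With $\epsilon=\Theta(d^{-2}\log^{-2}d)$ this gives a total gain of $\Omega(d^{-4}\log^{-7/2}d)$. As in Lemma~\ref{lem:2sat_local_gain}, it suffices to prove three things, conditioned on $\br\cdot\bv_i^\perp=t$ with $t\in(f(b_i)-\epsilon,f(b_i)+\epsilon)$.

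\emph{Step 1.} We need $\E[W_{i,\unassigned}\mid t]=\Theta(\epsilon W_i)$. The argument is that of Lemma~\ref{lem:2-sat_unassigned}. The reasonableness bound $|\rho(\theta)|\le$ some constant $<1$ follows from $|b_i|,|b_j|\le 0.4$ and $|b_{ij}|\le 0.9$. Together with $|f|\le B$, this makes the window for $\br\cdot\bvperp{j}{0,i}$ an interval of length $\Theta(\epsilon)$ located at bounded position.

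\emph{Step 2.} We need $\E[W_{i,0}-W_{i,1}\mid t]$ to be $O(W_i)$-Lipschitz in $t$, which is immediate from Gaussian densities. Then the case split on the sign of the conditional mean, copied verbatim from Lemma~\ref{lem:2sat_local_gain}, reduces everything to a lower bound $\E[|W_{i,0}-W_{i,1}|\mid t]\ge W_i\cdot\Omega(\gamma)$ with $\gamma\gg\epsilon$.

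\emph{Step 3, the main obstacle.} This is the anti-concentration bound for $W_{i,0}-W_{i,1}$. Since every $P$ depends on both coordinates, each $C=(P,i,j_C)\in\mathcal{C}_i$ contributes to $W_{i,0}-W_{i,1}$ a term of the form $w(C)\,\sigma_C\,\bone\{\br\cdot\bvperp{j_C}{0,i}\ge s_C\}$ plus a constant (up to $\epsilon$-boundary effects where $j_C$ is unassigned, which cost $O(\epsilon W_i)$). Here $\sigma_C\in\{-1,0,1\}$ records whether the two values of $j_C$ induce opposite or identical unary preferences on $i$, and $s_C$ is bounded in terms of $B$ and the $\rho$ bound. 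We discard constraints with $\sigma_C=0$, or bound their weight from below some other way if they dominate.

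For the remaining constraints, the signs are exactly governed by alignment. After replacing the threshold event by its complement where necessary, so that all coefficients are positive, the half-space directions become $\pm\bvperp{j_C}{0,i}$. Lemma~\ref{lem:cov_lower_bound} then guarantees that any two of these directions have inner product at least $-5/6$. This is precisely the hypothesis needed for a variance lower bound on a positive combination of Gaussian threshold indicators with bounded thresholds. We would invoke, or prove via the Hermite expansion of the appendix, the lemma that
\[
\Var\Bigl[\sum_C w_C\bone\{\br\cdot\bu_C\ge s_C\}\Bigr]\ \ge\ \Omega\Bigl(\sum_C w_C^2\Bigr)\cdot \mathrm{poly}\log^{-1}(d),
\]
valid when $\bu_C\cdot\bu_{C'}\ge -5/6$. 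The idea is that the covariance of two such indicators is $\sum_k c_k(s_C)c_k(s_{C'})(\bu_C\cdot\bu_{C'})^k$. Low-order terms can be made nonnegative in aggregate, and negative correlations are controlled using the $5/6$ margin. In the worst case, where the coefficients $c_k(s)$ have varying signs, we truncate at degree $O(\log d)$, which is where the logarithmic losses enter.

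Given this, $\sum_C w_C^2\ge W_i^2/d$, so $\Var\ge W_i^2\,\Omega(d^{-1}\log^{-O(1)}d)$. Converting variance to first absolute moment via $|X|\le W_i$ gives $\E|W_{i,0}-W_{i,1}|\ge W_i\,\Omega(d^{-1}\log^{-O(1)} d)$. This is weaker than the $\Omega(d^{-2}\log^{-3/2}d)$ needed in the first paragraph, so the budget leaves room for a cruder variance bound such as $\Omega(W_i^2/d^2)$ if the Hermite argument only yields that. Finally, the choice of $\epsilon$ makes the $W_{i,\unassigned}$ loss and the Lipschitz slack lower order than this bound.
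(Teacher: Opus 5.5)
\textbf{Where the proposal breaks.} Most of your proposal matches the paper. The shared parts are: the reduction via Lemma~\ref{lem:val_diff_local_gain}; the estimates $\Pr[A_\unassigned(i)=\unassigned]=\Theta_B(\epsilon)$ and $\E[W_{i,\unassigned}\mid t]=\Theta(\epsilon W_i)$; the Lipschitz case split; and the bookkeeping that writes $W_{i,0}-W_{i,1}$ as a constant plus a positive combination of half-space indicators with directions $\pm\bvperp{j}{0,i}$, pairwise at inner product $\ge -5/6$ by Lemma~\ref{lem:cov_lower_bound}. The gap is the variance lemma in Step~3. It carries the whole quantitative content of the theorem, and your sketch does not prove it.

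Truncating the Hermite expansion at degree $m=O(\log d)$ lower-bounds the variance by the diagonal tail $\sum_C w_C^2\,\Theta(m^{-1/2})$ only if every off-diagonal tail $\sum_{k\ge m}c_k(s_C)c_k(s_{C'})\rho_{CC'}^k$ is negligible, i.e.\ only if $|\rho_{CC'}|^m\ll 1/d$. The hypothesis $\rho\ge -5/6$ gives this on the negative side only. Correlations near $+1$ are allowed: neighbours $j,k$ may have nearly equal $\bvperp{j}{0,i},\bvperp{k}{0,i}$, or nearly opposite ones for an anti-aligned pair after complementing. Such pairs generally have different thresholds. Then $c_k(s_C)c_k(s_{C'})$ oscillates in sign (Corollary~\ref{cor:he_fact_asymptotics}), so the cross tail does not decay like $(5/6)^m$ and can be negative. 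For thresholds at distance about $m^{-1/2}$ it is as large as a diagonal tail, and with up to $d$ such partners the diagonal no longer dominates. Monotonicity in $\rho$ does not fix this directly. It makes the full covariance of a positively correlated pair nonnegative, but not its truncated tail. Meanwhile the negative pairs can only be discarded degree by degree, using positive semidefiniteness of the whole degree-$k$ matrix.

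\textbf{The missing idea} is the smoothing comparison of Lemma~\ref{lem:variance_main}. Replace $g_C$ by $\sqrt{1-\tau}\,g_C+\sqrt{\tau}\,g_C''$, with fresh independent $g_C''$ and $\tau=d^{-2}\log^{-2}d$. By Lemma~\ref{lem:plackett}, the indicator covariance of a positively correlated pair can only decrease. That of a negatively correlated pair increases by at most $\tau\max_{\rho\in[-1+\eta,0],\,|t_1|,|t_2|\le B}\varphi_\rho(t_1,t_2)$; this density bound is where the $-5/6$ margin is really used. After smoothing, all correlations lie in $[-(1-\tau),1-\tau]$. One can then truncate at $m\approx d^2\log^3 d$ (Lemma~\ref{lem:variance_2}) and obtain $\Var\ge(\sum w)^2\,\Omega(d^{-2}\log^{-3/2}d)$. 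Dividing by $W_i$, this is exactly the $\Omega(d^{-2}\log^{-3/2}d)$ your first paragraph needs.

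\textbf{Why your stated bound is out of reach here.} The loss $\tau(\sum w)^2$ from the negative pairs forces $\tau\approx d^{-2}$, hence $m\approx d^2$. Your claimed $\Omega(\sum_C w_C^2)/\mathrm{polylog}(d)$ would need a genuinely different argument, and it would upgrade the theorem to $\widetilde{\Omega}(d^{-2})$. Your fallback $\Omega(W_i^2/d^2)$ is actually slightly stronger than what the paper proves, so it is not a safe ``cruder'' option, and no argument is given for it.

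Minor slips: $d^{-1}\log^{-O(1)}d$ is stronger, not weaker, than the required rate. The case $\sigma_C=0$ is vacuous, since $P$ depends on both coordinates.
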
 
We will implicitly assume the assumptions in Theorem~\ref{thm:general} in our analysis for the remainder of the section. We use $A_\unassigned$ to denote the (partial) assignment $A$ before the execution of the for-loop at Line~\ref{Line:flip_for_loop}.  The overall idea is similar to Theorem~\ref{thm:2-SAT-critical}: we show that $W_{i, 0} - W_{i, 1}$ has a large variance which overwhelms the expected value of $W_{i, \unassigned}$, and therefore at least one of the two terms in $\Delta_i$ will have a nontrivial contribution. The variance lower bound for $W_{i, 0} - W_{i, 1}$ is shown in the following lemma.

\begin{lemma}\label{lem:W0-W1}
    There exist $c, \epsilon_0 > 0$ (depending only on the configurations and the threshold function $f$) such that for any $\epsilon < \epsilon_0$, $i \in V$ and  any $t \in (f(b_i) - \epsilon, f(b_i) + \epsilon)$, we have
    \[
    \E[|W_{i, 0} - W_{i, 1}| \mid \br \cdot \bv_i^\perp = t] \geq \frac{c\cdot W_i}{d^2\log^{3/2}d}.
    \]
\end{lemma}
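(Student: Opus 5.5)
We follow the template of Lemma~\ref{lem:2sat_1norm}. We write $W_{i,0}-W_{i,1}$ as a sum over $C\in\mathcal{C}_i$ of terms depending only on $A_\unassigned(j_C)$, lower-bound its conditional variance, and convert that into an $L^1$ bound. Fix $i$ and condition on $\br\cdot\bv_i^\perp=t$. For each $C=(P,i,j)\in\mathcal{C}_i$, the contribution of $C$ to $W_{i,0}-W_{i,1}$ is $w(C)\,\psi_C(A_\unassigned(j))$, where $\psi_C\in\{-1,0,1\}$ records whether $P(\cdot,\text{value of }j)$ prefers $\false$, is indifferent, or prefers $\true$ for $i$ (and $\psi_C=0$ if $j$ is unassigned). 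Because $P$ depends on both coordinates, the two settings of $j$ induce restrictions of $P$ that are not both constant-equal. So $\psi_C$ takes different values on the events $\{\br\cdot\bv_j^\perp\ge f(b_j)+\epsilon\}$ and $\{\br\cdot\bv_j^\perp\le f(b_j)-\epsilon\}$, or at least one of them is nonzero.

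Conditioned on $\br\cdot\bv_i^\perp=t$, these events become threshold events $\{\br\cdot\bvperp{j}{0,i}\gtrless \tau_C\pm O(\epsilon)\}$ with $\tau_C=(f(b_j)-\rho(\theta)t)/\sqrt{1-\rho(\theta)^2}$. The conditions $|b_{ij}|\le 0.9$ and $|b_i|,|b_j|\le0.4$ make $|\rho(\theta)|$ bounded away from $1$. Together with $|f|\le B$, this gives $|\tau_C|=O(1)$, with constants depending only on $\mathcal{P}$ and $B$. Up to an $O(\epsilon)$ perturbation, which is handled by requiring $\epsilon<\epsilon_0$, each summand is therefore an affine function $a_C+c_C\,\bone\{g_C\ge\tau_C\}$ of a threshold indicator on a standard Gaussian $g_C=\br\cdot\bvperp{j_C}{0,i}$, with $|c_C|\ge w(C)\cdot\Omega(1)$. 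The sign of $c_C$ is dictated by whether $P$ has "$j=\true$ favoring $i=\true$" or the opposite. This is exactly what the aligned/anti-aligned definition tracks: with the signs chosen as in that definition, two constraints with coefficients of the same sign (after the natural sign flips of $g_C$) are aligned, and those with opposite sign are anti-aligned.

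The key step is the variance lower bound for $\sum_C c_C\bone\{g_C\ge\tau_C\}$. After replacing $g_C$ by $-g_C$ where needed to make all $c_C$ positive, Lemma~\ref{lem:cov_lower_bound} guarantees that every pair of (sign-adjusted) Gaussians has correlation at least $-5/6$. This is where reasonableness and the triangle inequalities enter. The Hermite expansion of threshold functions in Appendix~\ref{app:hermite} gives the covariance of two threshold indicators as $\sum_{k\ge1}\hat h_k(\tau)\hat h_k(\tau')\rho^k$. I would like a bound of the form
\[\Var\Big[\sum_C c_C\bone\{g_C\ge\tau_C\}\Big]\ \ge\ \Omega\Big(\frac{1}{d\,\mathrm{polylog}\, d}\Big)\Big(\sum_C c_C\Big)^2\cdot\frac{1}{d},\]
presumably the variance lemma proved in Appendix~\ref{app:variance}, analogous to Lemma~\ref{lem:variance_2sat}. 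To prove such a bound, I would truncate the Hermite expansion at degree $k\approx \log d$. The tail terms are controlled by $(5/6)^k$ and the $O(1)$ bounds on $\tau_C$, and are negligible compared with $1/d^{O(1)}$. The low-degree part is handled degree by degree: for odd or even $k$ the matrix $(\rho_{CC'}^k)$ is PSD, so each degree contributes nonnegatively when all coefficients share sign. The diagonal contributions then sum to at least $\sum_C c_C^2\ge (\sum c_C)^2/d$. Negative correlations down to $-5/6$ can hurt only the odd-degree terms, and there I would use the PSD/Gram structure together with the bound on how negative the correlations can be.

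I expect this variance step to be the main obstacle. The losses of $\sqrt{\log d}$-type factors from truncation and from Hermite coefficients decaying like $k^{-1/2}$ are where the $d^{-2}\log^{-3/2}d$ comes from. Finally, the $L^1$ conversion is the one used in Lemma~\ref{lem:2sat_1norm}. Since $|W_{i,0}-W_{i,1}|\le W_i$, we have $\E|X|\ge \E[X^2]/W_i\ge \Var[X]/W_i$. Combined with $\Var\ge W_i^2\cdot\Omega(d^{-2}\log^{-3/2}d)$, this yields the claimed bound, uniformly in $t$ in the window.
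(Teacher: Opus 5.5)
Your reduction matches the paper's proof. You condition on $\br\cdot\bv_i^\perp=t$ and write $W_{i,0}-W_{i,1}$ as a constant plus a nonnegative combination of indicators $\bone\{g\ge s\}$ with $|s|=O(1)$, using $-\bvperp{j}{0,i}$ for the opposite-sign terms. You get pairwise correlations $\ge -5/6$ from Lemma~\ref{lem:cov_lower_bound} via the aligned/anti-aligned bookkeeping, and finish with $\E|X|\ge\Var[X]/W_i$. The paper closes the remaining step by citing Lemma~\ref{lem:variance_main}; citing it would complete your proof. The gap is in your own sketch of that variance bound, which is where the content lies.

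\emph{Why the sketch fails.} Lemma~\ref{lem:cov_lower_bound} is one-sided. Positive correlations can be arbitrarily close to, or equal to, $1$: take two neighbors $j,k$ with nearly parallel $\bvperp{j}{0,i}$ and $\bvperp{k}{0,i}$. So there is no $(5/6)^k$ damping in general. Moreover, your thresholds $\tau_C$ differ across $C$ and are negated by the sign flips. Hence for $k\ge2$ the degree-$k$ coefficients $a_C=c_C\He_{k-1}(\tau_C)\varphi(\tau_C)/\sqrt{k!}$ have varying signs. A pair with $\rho_{CC'}\approx 1$ and $a_Ca_{C'}<0$ contributes about $(a_C+a_{C'})^2$ at degree $k$, which can be far below the diagonal $a_C^2+a_{C'}^2$. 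Positive semidefiniteness of $(\rho_{CC'}^k)$ only says each degree contributes $\ge 0$, for every $k$ and regardless of signs. It does not say a degree dominates its diagonal. So the step ``the diagonal contributions sum to at least $\sum_C c_C^2$'' fails, both at low degree and in the tail. Your argument does work when all thresholds are equal. Then at each degree all $a_C$ share a sign, only negative correlations hurt, and keeping degrees $k\ge C\log d$ even gives $\Omega\bigl((\sum_C c_C)^2/(d\sqrt{\log d})\bigr)$. It does not work in the heterogeneous-threshold setting of this lemma.

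\emph{The missing idea} is the noise-and-compare argument of Lemmas~\ref{lem:variance_2} and~\ref{lem:variance_main}.
\begin{enumerate}
\item Replace $g_C$ by $g_C'=\sqrt{1-\delta}\,g_C+\sqrt{\delta}\,g_C''$, with fresh independent $g_C''$ and $\delta=d^{-2}\log^{-2}d$. All correlations then lie in $[-1+\delta,1-\delta]$.
\item Keep only Hermite degrees $k\ge m=\Theta(d^2\log^3 d)$. There the off-diagonal part is suppressed by $(1-\delta)^m$, and the diagonal tail is $\Theta(m^{-1/2})\sum_C c_C^2$ by Corollary~\ref{cor:hermite_tail_sum}.
\item Remove the noise using Plackett's identity (Lemma~\ref{lem:plackett}). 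For pairs with nonnegative correlation, the covariance of the indicators can only increase. For negative pairs, the correlation is $\ge -5/6$, so the bivariate density is bounded and the loss is $O(\delta)\,c_Cc_{C'}$. The total loss is $O(d^{-2}\log^{-2}d)(\sum_C c_C)^2$.
\end{enumerate}
The $d^{-2}\log^{-3/2}d$ rate comes from this trade-off: the $O(\delta)$ loss must be smaller than the gain $d^{-1}m^{-1/2}$, where $m\approx\delta^{-1}\log d$. It does not come from truncating at degree $\log d$.

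\emph{A smaller point.} You merge the two indicators of an XOR-type constraint into one ``up to $O(\epsilon)$''. This costs $O(\epsilon W_i)$ in $L^1$, which is negligible only if $\epsilon=o(d^{-2}\log^{-3/2}d)$, so your $\epsilon_0$ would depend on $d$, contrary to the statement. The paper instead keeps both indicators $\bone\{g\ge s_j\}$ and $\bone\{g\ge s_j'\}$. They land on the same side, so they have correlation $+1$, which is harmless for Lemma~\ref{lem:variance_main}.
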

\begin{proof}
    We have 
    \begin{equation}\label{eq:W_as_indicators}
    W_{i, 0} - W_{i, 1} = \sum_{C \in \mathcal{C}_i}w(C)\left(\bone\{C \in \mathcal{C}_{i, 0}\} - \bone\{C \in \mathcal{C}_{i, 1}\}\right).
    \end{equation}
    Let $C \in \mathcal{C}_i$ be a constraint on $i$ and some other variable $j$, represented by the configuration $\theta = (b_i, b_j, b_{ij}, P)$. 
    Observe that since by assumption $P$ depends on both variables, the indicator functions $\bone\{C \in \mathcal{C}_{i, 0}\}$ and  $\bone\{C \in \mathcal{C}_{i, 1}\}$ are two distinct functions among the three functions $\bone\{A_\unassigned(j) = \true\}$, $\bone\{A_\unassigned(j) = \false\}$, and 0 (as a constant function). Conditioned on $\br\cdot\bv_i^\perp = t$, we may write the first two indicators as $\bone\{\br \cdot \bvperp{j}{0, i}\geq s_j\}$, $\bone\{\br \cdot \bvperp{j}{0, i}\leq s_j'\}$, where $s_j$, $s_j'$ are some thresholds whose absolute values are bounded due to our reasonableness assumption (the expressions for these thresholds are the same as those in~\eqref{eq:threshold_for_bv_j_perp}). By using the fact that $\bone\{\br \cdot \bvperp{j}{0, i}\leq s_j'\} = 1 - \bone\{\br \cdot \bvperp{j}{0, i}\geq s_j'\}$, we may therefore rewrite \eqref{eq:W_as_indicators} as 
    \begin{equation}
    W_{i, 0} - W_{i, 1} = 
        \sum_{j \in J_1} w_j \bone\{\br \cdot \bvperp{j}{0, i} \geq s_j\} - 
        \sum_{j \in J_2} w_j \bone\{\br \cdot \bvperp{j}{0, i} \geq s_j'\} + c
    \end{equation}
    where $J_1, J_2$ are some (multi-)sets of variables and $c$ is some constant. We also have $\sum_{j \in J_1} w_j + \sum_{j \in J_2} w_j \in [W_i, 2W_i]$ since each constraint contributes either 1 or 2 nonzero indicator functions.

    Now we aim to bound the inner products between the $\bvperp{j}{0, i}$ vectors, which are equivalently the covariances between the standard Gaussian variables $\br \cdot \bvperp{j}{0, i}$. Take another constraint $C'$ involving $i$ and some third variable $k$, represented by $\theta' = (b_i, b_k, b_{ik}, P')$. Then $P$ and $P'$ are aligned if there exist $s \in \{0, 1\}$ and $\mathtt{s} \in \{\true, \false\}$ such that 
    \begin{equation}
        \bone\{C \in \mathcal{C}_{i, s}\} = \bone\{A_\unassigned(j) = \mathtt{s}\} \quad \text{ and } \quad
        \bone\{C' \in \mathcal{C}_{i, s}\} = \bone\{A_\unassigned(k) = \mathtt{s}\},
    \end{equation}
    or
    \begin{equation}
        \bone\{C \in \mathcal{C}_{i, s}\} = \bone\{A_\unassigned(j) = \mathtt{s}\} \quad \text{ and } \quad
        \bone\{C' \in \mathcal{C}_{i, 1-s}\} = \bone\{A_\unassigned(k) = \mathtt{not\,\,s}\},
    \end{equation}
    where we used $\mathtt{not\,\,s}$ to mean the truth value other than $\mathtt{s}$. These are exactly the cases where we get both $j, k \in J_1$ or $j, k \in J_2$.
    By Lemma~\ref{lem:cov_lower_bound}, we have $\bvperp{j}{0, i} \cdot \bvperp{k}{0, i} \geq -5/6$ if $P$ and $P'$ are aligned. On the other hand, $P$ and $P'$ are anti-aligned if there exist $s \in \{0, 1\}$ and $\mathtt{s} \in \{\true, \false\}$ such that 
    \begin{equation}
        \bone\{C \in \mathcal{C}_{i, s}\} = \bone\{A_\unassigned(j) = \mathtt{s}\} \quad \text{ and } \quad
        \bone\{C' \in \mathcal{C}_{i, s}\} = \bone\{A_\unassigned(k) = \mathtt{not\,\,s}\},
    \end{equation}
    or
    \begin{equation}
        \bone\{C \in \mathcal{C}_{i, s}\} = \bone\{A_\unassigned(j) = \mathtt{s}\} \quad \text{ and } \quad
        \bone\{C' \in \mathcal{C}_{i, 1-s}\} = \bone\{A_\unassigned(k) = \mathtt{s}\}.
    \end{equation}
    These are exactly the cases where one of $\{j, k\}$ is in $J_1$ and the other is in $J_2$.
    Again by Lemma~\ref{lem:cov_lower_bound}, we have $\bvperp{j}{0, i} \cdot \left(-\bvperp{k}{0, i}\right) \geq -5/6$ if $P$ and $P'$ are anti-aligned.
    
    Hence, by rewriting
    \begin{align*}
    W_{i, 0} - W_{i, 1} & = 
        \sum_{j \in J_1} w_j \bone\{\br \cdot \bvperp{j}{0, i} \geq s_j\} - 
        \sum_{j \in J_2} w_j \bone\{\br \cdot \bvperp{j}{0, i} \geq s_j'\} + c\\ 
        & = 
        \sum_{j \in J_1} w_j \bone\{\br \cdot \bvperp{j}{0, i} \geq s_j\} + 
        \sum_{j \in J_2} w_j \bone\{\br \cdot (-\bvperp{j}{0, i}) \geq -s_j'\} - \sum_{j \in J_2}w_j + c,
    \end{align*}
    we obtain an expression which is the weighted sum of indicators of the form $\bone\{g_j \geq s_j''\}$ (plus some constant) where $g_j$'s are correlated standard Gaussians and the covariances between them are all at least $-5/6$. We can therefore apply Lemma~\ref{lem:variance_main}, which gives us 
    \[
    \Var[W_{i, 0} - W_{i, 1} \mid \br \cdot \bv_i^\perp = t] \geq \Omega\left(\frac{ W_i^2}{d^2\log^{3/2}d}\right).
    \]
    Since $|W_{i, 0} - W_{i, 1}| \leq W_i$, we have
    \begin{equation*}
    \E[|W_{i, 0} - W_{i, 1}| \mid \br \cdot \bv_i^\perp = t] \geq \frac{1}{W_i}\E[(W_{i, 0} - W_{i, 1})^2 \mid \br \cdot \bv_i^\perp = t] \geq \Omega\left(\frac{W_i}{d^2\log^{3/2}d}\right). \qedhere
    \end{equation*}
\end{proof}

\begin{lemma}[Local gains]\label{lem:2csp_local_gain}
    Choose $\epsilon = cd^{-2}\log^{-2}(d)$ for some sufficiently small constant $c > 0$. Then, for every variable $i$, we have $\E[\Delta_i \mid A_\unassigned(i) = \unassigned] = W_i \cdot \Omega(d^{-2}\log^{-3/2}(d))$. 
\end{lemma}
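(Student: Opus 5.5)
The argument follows the proof of Lemma~\ref{lem:2sat_local_gain}, with Lemma~\ref{lem:W0-W1} playing the role of Lemma~\ref{lem:2sat_1norm}. Set $\gamma = d^{-2}\log^{-3/2}(d)$, so Lemma~\ref{lem:W0-W1} gives $\E[|W_{i,0}-W_{i,1}| \mid \br\cdot\bv_i^\perp = t] \ge c_1 \gamma W_i$ for every $t$ in the window. It suffices to show that $\E[\Delta_i \mid \br\cdot\bv_i^\perp \in I] = W_i\cdot\Omega(\gamma)$ for at least one of the half-windows $I \in \{(f(b_i)-\epsilon,f(b_i)),\,(f(b_i),f(b_i)+\epsilon)\}$. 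Each half-window carries a constant fraction of the conditional probability of $A_\unassigned(i)=\unassigned$, since $|f(b_i)|\le B$ and the Gaussian density is bounded below near $[-B,B]$.

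First I need an analogue of Lemma~\ref{lem:2-sat_unassigned}, namely $\E[W_{i,\unassigned}\mid \br\cdot\bv_i^\perp=t] = O(\epsilon W_i)$. For a neighbor $j$, unassignedness means $\br\cdot\bvperp{j}{0,i}$ lies in an interval of length $2\epsilon/\sqrt{1-\rho(\theta)^2}$, and this probability is at most that length times $1/\sqrt{2\pi}$. I must bound $|\rho(\theta)|$ away from $1$. This follows from reasonableness: $|b_i|,|b_j|\le 0.4$ and $|b_{ij}|\le 0.9$ give $|\rho| \le (0.9+0.16)/0.84$. That is not below $1$, so this crude bound fails and I would refine it. The cleaner route is the identity $1-\rho^2 = \det(\mathrm{Gram}(\bv_0,\bv_i,\bv_j))/((1-b_i^2)(1-b_j^2))$ together with the triangle-inequality/validity constraints. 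Alternatively, the proof of Lemma~\ref{lem:cov_lower_bound} presumably already yields $|\rho|\le$ some constant $<1$ under these parameters, and I would use that. This gives a constant $c_2$, depending only on $\mathcal{P}$, with $\E[W_{i,\unassigned}\mid t]\le c_2\epsilon W_i$.

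Next comes the case split. Using $(x-y)_+ \ge x_+ - y$ and $x_+ = (|x|+x)/2$:
\begin{align*}
\E[\Delta_i \mid I_+] &\ge \E\left[\tfrac{|W_{i,0}-W_{i,1}| + (W_{i,0}-W_{i,1})}{2} - W_{i,\unassigned} \,\middle|\, I_+\right],
\end{align*}
and symmetrically on $I_-$. If $\E[W_{i,0}-W_{i,1}\mid I_+] \ge -\tfrac{c_1}{2}\gamma W_i$, the right side is at least $(\tfrac{c_1}{2}-\tfrac{c_1}{4}-c_2\epsilon/\gamma)\gamma W_i$. With $\epsilon = c d^{-2}\log^{-2} d$ we have $\epsilon/\gamma = c\log^{-1/2}d$, so choosing $c$ small gives $\Omega(\gamma W_i)$.

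Otherwise $\E[W_{i,1}-W_{i,0}\mid I_+] > \tfrac{c_1}{2}\gamma W_i$. I then transfer this to $I_-$ via Lipschitzness: $\E[W_{i,1}-W_{i,0}\mid \br\cdot\bv_i^\perp=t]$ is a sum of $w(C)$ times Gaussian tail probabilities whose thresholds are affine in $t$ with slopes $\rho/\sqrt{1-\rho^2}=O(1)$, so it is $O(W_i)$-Lipschitz. Moving across a window of width $2\epsilon = o(\gamma)$ costs only $O(\epsilon W_i)$, which the choice of $c$ absorbs. The $I_-$ bound then gives $\Omega(\gamma W_i)$ as well.

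The main obstacle is the uniform bound $|\rho(\theta)|\le 1-\Omega(1)$ that both the unassigned-weight estimate and the Lipschitz estimate rely on. I would obtain it from the reasonableness constants together with the triangle inequalities, or from the computation inside the proof of Lemma~\ref{lem:cov_lower_bound}.
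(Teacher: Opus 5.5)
Your proposal is correct and follows essentially the same route as the paper. You reduce to one half-window, bound $\E[W_{i,\unassigned}]$ by $O(\epsilon W_i)$, split on the conditional mean of $W_{i,0}-W_{i,1}$ on the right half-window, and transfer to the left half-window by Lipschitzness; your threshold $\tfrac{c_1}{2}\gamma W_i$ versus the paper's $W_i d^{-2}\log^{-2} d$ is immaterial.

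The uniform bound $|\rho(\theta)|\le 1-\Omega(1)$ that you flag is used only implicitly in the paper, which writes ``as in Lemma~\ref{lem:2-sat_unassigned}''. It does follow from the computation in the proof of Lemma~\ref{lem:cov_lower_bound}: after negating $\bv_i$ so that $b_{ij}\ge 0.5$, one has $(1-b_j^2)(1-\rho(\theta)^2)=\|\buperp{j}{0,i}\|^2=4\sum_\sigma p_\sigma q_{\sigma,j}(1-q_{\sigma,j})\ge \tfrac{24}{11}\sum_\sigma p_\sigma q_{\sigma,j}=\tfrac{12}{11}(1-b_{ij})\ge \tfrac{6}{55}$.
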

\begin{proof}
As in the proof of Lemma~\ref{lem:2sat_local_gain}, it is sufficient to show that  
    \[\E[\Delta_i \mid \br \cdot \bv_i^\perp = t] = W_i \cdot \Omega(d^{-2}\log^{-3/2}(d))\]
    either for every $t \in (f(b_i) - \epsilon, f(b_i))$ or for every $t \in (f(b_i), f(b_i) + \epsilon)$.

As in Lemma~\ref{lem:2-sat_unassigned}, we have $\E\left[W_{i, \unassigned} \mid \br \cdot \bv_i^\perp = t\right] = \Theta(\epsilon \cdot W_i)$. Assume that $\E[W_{i, 0} - W_{i, 1} \mid \br \cdot \bv_i^\perp \in (f(b_i), f(b_i) + \epsilon)] \geq -W_i \cdot d^{-2}\cdot \log^{-2}(d)$. Then by Lemma~\ref{lem:W0-W1} we have 
    \begin{align*}
        & \E[\Delta_i \mid \br \cdot \bv_i^\perp \in (f(b_i), f(b_i) + \epsilon)] \\
        \geq\, & \E\left[\frac{|W_{i, 0} - W_{i, 1}| + (W_{i, 0} - W_{i, 1})}{2}- W_{i, \unassigned} \mid \br \cdot \bv_i^\perp \in (f(b_i), f(b_i) + \epsilon)\right] \\
        \geq\, & W_i \cdot \left(\Omega(d^{-2}\log^{-3/2}(d)) - d^{-2}\log^{-2}(d) - O(d^{-2}\log^{-2}(d))\right) \\
        =\, & W_i \cdot \Omega(d^{-2}\log^{-3/2}(d)).
    \end{align*}
    Otherwise, we have $\E[W_{i, 0} - W_{i, 1} \mid \br \cdot \bv_i^\perp \in (f(b_i), f(b_i) + \epsilon)] < -W_i \cdot d^{-2}\cdot \log^{-2}(d)$, or equivalently $\E[W_{i, 1} - W_{i, 0} \mid \br \cdot \bv_i^\perp \in (f(b_i), f(b_i) + \epsilon)] > W_i \cdot d^{-2}\cdot \log^{-2}(d)$. Since the function $\E[W_{i, 1} - W_{i, 0} \mid \br \cdot \bv_i^\perp = t]$ is $O(W_i)$-Lipschitz in $t$, we have
    \[
    \E[W_{i, 1} - W_{i, 0} \mid \br \cdot \bv_i^\perp \in (f(b_i) -\epsilon, f(b_i))] > W_i \cdot d^{-2} \log^{-2}(d) - \epsilon \cdot O(W_i) \geq -W_i \cdot O(d^{-2}\log^{-2}(d)).
    \]
    It follows that 
    \begin{align*}
        & \E[\Delta_i \mid \br \cdot \bv_i^\perp \in (f(b_i) - \epsilon, f(b_i))] \\
    \geq\, & \E\left[\frac{|W_{i, 1} - W_{i, 0}| + (W_{i, 1} - W_{i, 0})}{2}- W_{i, \unassigned} \mid \br \cdot \bv_i^\perp \in (f(b_i) - \epsilon, f(b_i))\right] \\
        \geq\, & W_i \cdot \left(\Omega(d^{-2}\log^{-3/2}(d)) - O(d^{-2}\log^{-2}(d)) - O(d^{-2}\log^{-2}(d))\right) \\
        =\, & W_i \cdot \Omega(d^{-2}\log^{-3/2}(d)).
    \end{align*}
    Thus, we conclude $\E[\Delta_i \mid A_\unassigned(i) = \unassigned] = W_i \cdot \Omega(d^{-2}\log^{-3/2}(d))$.
\end{proof}

\begin{proof}[Proof of Theorem~\ref{thm:general}]
    Follows similarly as in the proof of Theorem~\ref{thm:2-SAT-critical}. 
    By Lemma~\ref{lem:val_diff_local_gain}, we have
    \begin{align*}
        \E[\Val(I, A)] & \geq \sound_f(\mathcal{D}) + \sum_{i \in V}\E[\Delta_i] \\
        & \geq \sound_f(\mathcal{D}) + \sum_{i \in V}\E[\Delta_i \mid A_\unassigned(i) = \unassigned] \cdot \Pr[A_\unassigned(i) = \unassigned] \\
        & \geq \sound_f(\mathcal{D}) + \sum_{i \in V}W_i \cdot \Omega(d^{-2}\log^{-3/2}(d)) \cdot \Omega(\epsilon) \\
        & \geq \sound_f(\mathcal{D}) + \Omega(d^{-4}\log^{-7/2}d). \qedhere
    \end{align*}
\end{proof}

\subsection{Proof of Lemma~\ref{lem:cov_lower_bound}}

    We only prove the case where $P_1$ and $P_2$ are aligned since the anti-aligned case can be reduced to the aligned case by negating the variable $k$ as well as $\bv_k$. Assume that $P_1$ and $P_2$ are aligned. Then, since $\mathcal{D}$ is reasonable, we have $\rho(\theta_1)\rho(\theta_2) \geq 0$, $|b_i|, |b_j|, |b_k| \leq 0.4$ and $|b_{ij}|, |b_{ik}| \geq 0.5$. This implies that $|b_{ij}|\geq |b_i||b_j|$, and therefore $b_{ij}$ and $\rho(\theta_1) = \frac{b_{ij} - b_ib_j}{\sqrt{(1 - b_i^2)(1 - b_j^2)}}$ have the same sign. We may then by possibly negating $\bv_i$ assume that $b_{ij}, b_{ik}, \rho(\theta_1), \rho(\theta_2) \geq 0$. 

    For $\ell \in \{j, k\}$, let $\buperp{\ell}{0, i}$ be the component of $\bv_\ell$ parallel to $\bvperp{\ell}{0, i}$, so we have 
    \[
    \buperp{\ell}{0, i} = \sqrt{1 - b_\ell^2}\bv_\ell^\perp - \frac{b_{i\ell} - b_ib_\ell}{\sqrt{1 - b_i^2}}\bv_i^\perp.
    \] 
    It follows that 
    \begin{equation}
        \|\buperp{\ell}{0, i}\|^2 = 1 - b_\ell^2 - \frac{(b_{i\ell} - b_ib_\ell)^2}{1 - b_i^2}
    \end{equation}
    for $\ell \in \{j, k\}$, and
    \begin{equation}
        \buperp{j}{0, i} \cdot \buperp{k}{0, i} = (b_{jk} - b_jb_k) - \frac{(b_{ij} - b_ib_j)(b_{ik} - b_ib_k)}{1 - b_i^2}.
    \end{equation}
    For $\sigma \in \{-1, 1\}$ and $\ell \in \{j, k\}$, let us define 
    \begin{equation}
        p_\sigma = \frac{1 + \sigma b_i}{2}, \quad\text{ and }\quad q_{\sigma, \ell} = \frac{1 + \sigma b_i - \sigma b_\ell - b_{i\ell}}{2(1 + \sigma b_i)}.\footnote{These quantities correspond naturally to certain probabilities in the local distributions given by the configurations. We do not discuss this further since we did not define the local distributions.}
    \end{equation}
    By the triangle inequalities, we have $1 - b_i + b_\ell - b_{i\ell} \geq 0$, so $b_\ell \geq -1 + b_i + b_{i\ell} \geq -0.5 + b_i$. Since $|b_\ell| \leq 0.4$, we have $b_\ell \geq \max(-0.4, -0.5 + b_i)$. So we have 
    \begin{equation}
        q_{+1, \ell} = \frac{1 + b_i - b_\ell - b_{i\ell}}{2(1 + b_i)} \leq \frac{1 + b_i - \max(-0.4, -0.5 + b_i) - 0.5}{2(1 + b_i)}.
    \end{equation}
    When $b_i \geq 0.1$, the right hand side above is $\frac{1}{2(1 + b_i)} \leq \frac{5}{11}$. When $b_i \leq 0.1$, the right hand side above is $\frac{0.9+b_i}{2(1 + b_i)} = \frac{1}{2}-\frac{0.1}{2(1 + b_i)}\leq \frac{5}{11}$. So in either case we obtain $q_{+1, \ell} \leq \frac{5}{11}$. Similarly, for $q_{-1, \ell}$ we use the triangle inequality $1 + b_i - b_\ell - b_{i\ell} \geq 0$, from which we deduce $b_\ell \leq \min(0.4, 0.5+b_i)$, and therefore
    \begin{equation}
        q_{-1, \ell} = \frac{1 - b_i + b_\ell - b_{i\ell}}{2(1 - b_i)} \leq \frac{1 - b_i + \min(0.4, 0.5 + b_i) - 0.5}{2(1 - b_i)}.
    \end{equation}
    When $b_i \leq -0.1$, the right hand side above is $\frac{1}{2(1 - b_i)} \leq \frac{5}{11}$. When $b_i \geq -0.1$, the right hand side above is $\frac{0.9-b_i}{2(1 - b_i)} = \frac{1}{2}-\frac{0.1}{2(1 - b_i)}\leq \frac{5}{11}$. So in either case we obtain $q_{-1, \ell} \leq \frac{5}{11}$.

    Now let's relate the quantities $p_\sigma, q_{\sigma, \ell}$ to the SDP vectors. We have 
    \begin{align*}
        \sum_{\sigma \in \{-1, 1\}}p_\sigma q_{\sigma, \ell}(1 - q_{\sigma, \ell}) & = \sum_{\sigma \in \{-1, 1\}}\frac{1 + \sigma b_i}{2} \cdot \frac{(1 + \sigma b_i)^2 - (\sigma b_\ell + b_{i\ell})^2}{4(1 + \sigma b_i)^2} \\
        & = \frac{1}{4}\left(1 - \frac{(b_\ell + b_{i\ell})^2}{2(1 + b_i)} - \frac{(b_\ell - b_{i\ell})^2}{2(1 - b_i)}\right) \\ 
        & = \frac{1}{4} \|\buperp{\ell}{0, i}\|^2,
    \end{align*}
    as well as 
    \begin{align*}
        \sum_{\sigma \in \{-1, 1\}}p_\sigma q_{\sigma, j}q_{\sigma, k} & = \sum_{\sigma \in \{-1, 1\}}\frac{(1 + \sigma b_i - \sigma b_j - b_{ij})(1 + \sigma b_i - \sigma b_k - b_{ik})}{8(1 + \sigma b_i)} \\
        & = \sum_{\sigma \in \{-1, 1\}}\frac{1}{4}\left(\frac{1 + \sigma b_i - (\sigma b_j + b_{ij} + \sigma b_k + b_{ik})}{2} + \frac{(b_j + \sigma b_{ij})(b_k + \sigma b_{ik})}{2(1 + \sigma b_i)}\right) \\
        & = \frac{1}{4}\left(1 - b_{ij} - b_{ik} + \frac{(b_j + b_{ij})(b_k + b_{ik})}{2(1 + b_i)} + \frac{(b_j - b_{ij})(b_k - b_{ik})}{2(1 - b_i)}\right) \\
        & = \frac{1}{4}\left(1 - b_{ij} - b_{ik} + b_jb_k + \frac{(b_{ij} - b_ib_j)(b_{ik} - b_ib_k)}{1 - b_i^2}\right)\\
        & = \frac{1}{4}\left(1 - b_{ij} - b_{ik} + b_{jk} - \buperp{j}{0, i} \cdot \buperp{k}{0, i}\right) \\
        & \geq -\frac{1}{4}  \buperp{j}{0, i} \cdot \buperp{k}{0, i}.
    \end{align*}
    Here the last inequality is due to the triangle inequalities.

    Observe that since $q_{\sigma, \ell} \leq \alpha$ where $\alpha = \frac{5}{11}$ for every $\sigma \in \{-1, 1\}, \ell \in \{j, k\}$, we have
    \begin{equation}
        q_{\sigma, j}q_{\sigma, k} \leq \frac{\alpha}{1 - \alpha}\sqrt{q_{\sigma, j}(1 - q_{\sigma, j})q_{\sigma, k}(1 - q_{\sigma, k})}
    \end{equation}
    and therefore
    \begin{align*}
        \sum_{\sigma \in \{-1, 1\}} p_\sigma q_{\sigma, j}q_{\sigma, k} & \leq \frac{\alpha}{1 - \alpha} \cdot \sum_{\sigma \in \{-1, 1\}} p_\sigma \sqrt{q_{\sigma, j}(1 - q_{\sigma, j})q_{\sigma, k}(1 - q_{\sigma, k})} \\
        & \leq \frac{\alpha}{1 - \alpha} \cdot \sqrt{\left(\sum_{\sigma \in \{-1, 1\}} p_\sigma q_{\sigma, j}(1 - q_{\sigma, j})\right)}\sqrt{\left(\sum_{\sigma \in \{-1, 1\}} p_\sigma q_{\sigma, k}(1 - q_{\sigma, k})\right)},
    \end{align*}
    where the last inequality is due to Cauchy-Schwarz. Thus, we have 
    \begin{align*}
    \buperp{j}{0, i} \cdot \buperp{k}{0, i} & \geq -4 \sum_{\sigma \in \{-1, 1\}} p_\sigma q_{\sigma, j}q_{\sigma, k} \\
    & \geq -4 \cdot \frac{\alpha}{1 - \alpha} \cdot \sqrt{\left(\sum_{\sigma \in \{-1, 1\}} p_\sigma q_{\sigma, j}(1 - q_{\sigma, j})\right)}\sqrt{\left(\sum_{\sigma \in \{-1, 1\}} p_\sigma q_{\sigma, k}(1 - q_{\sigma, k})\right)} \\
    & = - \frac{\alpha}{1 - \alpha} \cdot \|\buperp{j}{0, i}\| \|\buperp{k}{0, i}\|
    \end{align*}
    By shifting the factors, we obtain
    \begin{equation}
        \bvperp{j}{0, i} \cdot \bvperp{k}{0, i} = \frac{\buperp{j}{0, i} \cdot \buperp{k}{0, i}}{\|\buperp{j}{0, i}\| \|\buperp{k}{0, i}\|} \geq -\frac{\alpha}{1 - \alpha} = -\frac{5}{6}.
    \end{equation}
    This completes the proof.

\bibliography{references}

@inproceedings{hsieh2023approximating,
  title={{Approximating Max-Cut on Bounded Degree Graphs: Tighter Analysis of the FKL Algorithm}},
  author={Hsieh, Jun-Ting and Kothari, Pravesh K.},
  booktitle={International Colloquium on Automata, Languages, and Programming (ICALP)},
  pages={77:1-77:7},
  year={2023}
}

@book{szeg1939orthogonal,
  title={Orthogonal {P}olynomials},
  author={Szegő, Gábor},
  series={Colloquium Publications},
  volume={23},
  year={1939},
  publisher={American Mathematical Society}
}

@book{o2014analysis,
  title={{Analysis of Boolean Functions}},
  author={O'Donnell, Ryan},
  year={2014},
  publisher={Cambridge University Press}
}

@inproceedings{brakensiek2024tight,
  title={{Tight Approximability of MAX 2-SAT and Relatives, Under UGC}},
  author={Brakensiek, Joshua and Huang, Neng and Zwick, Uri},
  booktitle={Symposium on Discrete Algorithms (SODA)},
  pages={1328--1344},
  year={2024}
}

@inproceedings{lewin2002improved,
  title={{Improved Rounding Techniques for the MAX 2-SAT and MAX DI-CUT Problems}},
  author={Lewin, Michael and Livnat, Dror and Zwick, Uri},
  booktitle={International Conference on Integer Programming and Combinatorial Optimization (IPCO)},
  pages={67--82},
  year={2002}
}

@inproceedings{brakensiek2023separating,
  title={{Separating MAX 2-AND, MAX DI-CUT and MAX CUT}},
  author={Brakensiek, Joshua and Huang, Neng and Potechin, Aaron and Zwick, Uri},
  booktitle={Symposium on Foundations of Computer Science (FOCS)},
  pages={234--252},
  year={2023}
}

@article{plackett1954reduction,
  title={{A Reduction Formula for Normal Multivariate Integrals}},
  author={Plackett, R. L.},
  journal={Biometrika},
  volume={41},
  number={3/4},
  pages={351--360},
  year={1954},
  publisher={JSTOR}
}

@inproceedings{austrin2007balanced,
  title={{Balanced MAX 2-SAT Might Not Be the Hardest}},
  author={Austrin, Per},
  booktitle={Symposium on Theory of Computing (STOC)},
  pages={189--197},
  year={2007}
}

@article{KKMO07,
	title={Optimal inapproximability results for MAX-CUT and other 2-variable CSPs?},
	author={Khot, Subhash and Kindler, Guy and Mossel, Elchanan and O’Donnell, Ryan},
	journal={SIAM Journal on Computing (SICOMP)},
	volume={37},
	number={1},
	pages={319--357},
	year={2007},
	publisher={SIAM}
}

@inproceedings{Rag08,
	author    = {Prasad Raghavendra},
	title     = {{Optimal Algorithms and Inapproximability Results for Every CSP?}},
	booktitle = {Symposium on Theory of Computing (STOC)},
	pages     = {245--254},
	year      = {2008},
	url       = {http://doi.acm.org/10.1145/1374376.1374414},
	doi       = {10.1145/1374376.1374414}
}

@inproceedings{Khot02a,
	author    = {Subhash Khot},
	title     = {{On the Power of Unique 2-Prover 1-Round Games}},
	booktitle = {Symposium on Theory of Computing (STOC)},
	pages     = {767--775},
	year      = {2002}
}

@article{Has01,
	title={Some optimal inapproximability results},
	author={H{\aa}stad, Johan},
	journal={Journal of the ACM (JACM)},
	volume={48},
	number={4},
	pages={798--859},
	year={2001},
	publisher={ACM New York, NY, USA}
}

@inproceedings{MM17CSP,
	title={Approximation algorithms for CSPs},
	author={Makarychev, Konstantin and Makarychev, Yury},
	booktitle={Dagstuhl Follow-Ups},
	volume={7},
	year={2017},
	organization={Schloss Dagstuhl-Leibniz-Zentrum fuer Informatik}
}

@inproceedings{BMORRSTVWW15,
	author       = {Boaz Barak and
	Ankur Moitra and
	Ryan O'Donnell and
	Prasad Raghavendra and
	Oded Regev and
	David Steurer and
	Luca Trevisan and
	Aravindan Vijayaraghavan and
	David Witmer and
	John Wright},
	title        = {{Beating the Random Assignment on Constraint Satisfaction Problems
	of Bounded Degree}},
	booktitle    = {Approximation, Randomization, and Combinatorial Optimization. Algorithms
	and Techniques, {APPROX/RANDOM}},
	pages        = {110--123},
	year         = {2015},
}

@misc{Floren16,
	title={Approximation of Max-Cut on Graphs of Bounded Degree},
	author={Flor{\'e}n, Mikael},
	year={2016}
}

@article{Khot05Guest,
	title={Guest column: inapproximability results via long code based PCPs},
	author={Khot, Subhash},
	journal={ACM SIGACT News},
	volume={36},
	number={2},
	pages={25--42},
	year={2005},
	publisher={ACM New York, NY, USA}
}

@inproceedings{KhotUGCSurvey,
	title={On the unique games conjecture (invited survey)},
	author={Khot, Subhash},
	booktitle={2010 IEEE 25th annual conference on computational complexity},
	pages={99--121},
	year={2010},
	organization={IEEE Computer Society}
}

@article{HLZ04,
	title={MAX CUT in cubic graphs},
	author={Halperin, Eran and Livnat, Dror and Zwick, Uri},
	journal={Journal of Algorithms},
	volume={53},
	number={2},
	pages={169--185},
	year={2004},
	publisher={Elsevier}
}

@inproceedings{BK99,
	title={On some tighter inapproximability results},
	author={Berman, Piotr and Karpinski, Marek},
	booktitle={Automata, Languages and Programming: 26th International Colloquium, ICALP’99 Prague, Czech Republic, July 11--15, 1999 Proceedings},
	pages={200--209},
	year={2002},
	organization={Springer}
}

@inproceedings{AKS09,
	title={Inapproximability of vertex cover and independent set in bounded degree graphs},
	author={Austrin, Per and Khot, Subhash and Safra, Muli},
	booktitle={2009 24th Annual IEEE Conference on Computational Complexity},
	pages={74--80},
	year={2009},
	organization={IEEE}
}

@article{Aus10,
	title={Towards sharp inapproximability for any 2-CSP},
	author={Austrin, Per},
	journal={SIAM Journal on Computing},
	volume={39},
	number={6},
	pages={2430--2463},
	year={2010},
	publisher={SIAM}
}

@inproceedings{GW94,
	author    = {Michel X. Goemans and
	David P. Williamson},
	title     = {.879-approximation algorithms for {MAX} {CUT} and {MAX} 2SAT},
	booktitle = {Proceedings of the Twenty-Sixth Annual {ACM} Symposium on Theory of
	Computing, 23-25 May 1994, Montr{\'{e}}al, Qu{\'{e}}bec,
	Canada},
	pages     = {422--431},
	publisher = {{ACM}},
	year      = {1994},
}

@article{FKL02,
  title={Improved approximation of Max-Cut on graphs of bounded degree},
  author={Feige, Uriel and Karpinski, Marek and Langberg, Michael},
  journal={Journal of Algorithms},
  volume={43},
  number={2},
  pages={201--219},
  year={2002},
  publisher={Elsevier}
}

@article{feige2002optimality,
	title={On the optimality of the random hyperplane rounding technique for MAX CUT},
	author={Feige, Uriel and Schechtman, Gideon},
	journal={Random Structures \& Algorithms},
	volume={20},
	number={3},
	pages={403--440},
	year={2002},
	publisher={Wiley Online Library}
}

@article{austrin2016better,
  title={Better balance by being biased: A 0.8776-approximation for max bisection},
  author={Austrin, Per and Benabbas, Siavosh and Georgiou, Konstantinos},
  journal={ACM Transactions on Algorithms (TALG)},
  volume={13},
  number={1},
  pages={1--27},
  year={2016},
  publisher={ACM New York, NY, USA}
}
\bibliographystyle{alpha}

\appendix

\section{Facts about Hermite Polynomials}\label{app:hermite}

In this appendix we collect a few facts about the probabilist's Hermite polynomials, to be used in Appendix~\ref{app:variance}. The $n$-th probabilist's Hermite polynomial, denoted by $\He_n$, is defined by
\begin{equation}
    \He_n(x) = (-1)^n \ee^{x^2/2} \frac{d^n}{dx^n}\ee^{-x^2/2}.
\end{equation}

For $\rho \in [-1, 1]$, we use $g_1, g_2 \sim_\rho \mathcal{N}(0, 1)$ to denote a sample of two standard Gaussian variables $g_1, g_2$ with covariance $\rho$. We use $\Phi_\rho$ and $\varphi_\rho$ to denote the c.d.f. and p.d.f. of this bivariate distribution respectively. The following is a standard fact for the probabilist's Hermite polynomials. 

\begin{lemma}\label{lem:hermite_orthogonality}
    Let $m, n \in \mathbb{N}$ and $\rho \in [-1, 1]$. Then we have
    \begin{equation}
        \E_{g_1, g_2 \sim_\rho \mathcal{N}(0, 1)}[\He_m(g_1)\He_n(g_2)] = \delta_{mn} \cdot m! \cdot \rho^m,
    \end{equation}
    where $\delta_{mn} = \begin{cases}
        1 & \text{if } m = n,\\
        0 & \text{otherwise}.
    \end{cases}$
\end{lemma}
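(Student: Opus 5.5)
The plan is the standard Hermite/Mehler argument, following how Hermite polynomials are usually introduced. Write $g_2 = \rho g_1 + \sqrt{1-\rho^2}\, h$, where $h$ is a standard Gaussian independent of $g_1$; this realizes the pair $g_1, g_2 \sim_\rho \mathcal{N}(0,1)$. The argument then goes through the generating function $\exp(sx - s^2/2) = \sum_{n \ge 0} \He_n(x)\, s^n / n!$, which follows directly from the Rodrigues-type definition $\He_n(x) = (-1)^n \ee^{x^2/2} \frac{d^n}{dx^n}\ee^{-x^2/2}$ by Taylor expanding $\ee^{-(x-s)^2/2}$ in $s$.

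Next, compute the joint generating function
\[
\E\bigl[\exp(s g_1 - s^2/2)\exp(t g_2 - t^2/2)\bigr].
\]
The exponent $s g_1 + t g_2$ is Gaussian with variance $s^2 + t^2 + 2\rho s t$, so its exponential moment is $\exp\bigl((s^2+t^2+2\rho st)/2\bigr)$. Multiplying by $\exp(-s^2/2 - t^2/2)$, the expectation equals $\exp(\rho s t) = \sum_{m} \rho^m s^m t^m / m!$. Expanding the left side as the double series
\[
\sum_{m,n} \frac{s^m t^n}{m!\, n!}\, \E[\He_m(g_1)\He_n(g_2)],
\]
and comparing the coefficients of $s^m t^n$, yields $\E[\He_m(g_1)\He_n(g_2)] = \delta_{mn}\, m!\, \rho^m$.

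The only technical point is justifying the exchange of expectation and summation together with the coefficient identification. Both sides are entire in $(s,t)$, and $\sum_n |\He_n(x)|\, |s|^n / n! \le \exp(|s||x| + \ldots)$ is dominated by an integrable function of the Gaussian, so Fubini/dominated convergence applies on any bounded region of $(s,t)$. The coefficient identification then follows from uniqueness of power series coefficients.

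The boundary cases $\rho = \pm 1$ ($g_2 = \pm g_1$) are covered by the same computation, or alternatively by continuity in $\rho$. I do not expect any real obstacle here; the statement is a standard fact.
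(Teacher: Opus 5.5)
Your proof is correct and is essentially the argument the paper defers to (O'Donnell, Chapter 11.2): compute $\E[\exp(sg_1 - s^2/2)\exp(tg_2 - t^2/2)] = \exp(\rho st)$ and compare coefficients of $s^m t^n$. Your added details also check out, namely deriving the generating function from the Rodrigues definition and the domination step, where the elided term is $s^2/2$ since $\sum_n |\He_n(x)|\,|s|^n/n! \le \exp(|s||x| + s^2/2)$, which is integrable against the Gaussian.
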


The above lemma allows us to define an orthonormal basis using the probabilist's Hermite polynomials, and consequently we have:

\begin{lemma}\label{lem:hermite_inner_product}
    Let $f_1, f_2: \mathbb{R} \to \mathbb{R}$ be two square-integrable functions with respect to the standard Gaussian measure. Then we have
    \begin{equation}
        \E_{g_1, g_2 \sim_\rho \mathcal{N}(0, 1)}[f_1(g_1)f_2(g_2)] = \sum_{k = 0}^\infty \hat{f_1}(k)\hat{f_2}(k)\rho^k,
    \end{equation}
    where $\hat{f}(k) = \frac{1}{\sqrt{k!}}\E_{g \in \mathcal{N}(0, 1)}[f(g)\He_k(g)]$ is the $k$-th Hermite coefficient of $f$.
\end{lemma}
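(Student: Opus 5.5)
This follows by expanding both functions in the orthonormal Hermite basis and using Lemma~\ref{lem:hermite_orthogonality} term by term. By Lemma~\ref{lem:hermite_orthogonality} with $\rho = 1$ (so $g_1 = g_2$), the normalized polynomials $h_k = \He_k/\sqrt{k!}$ satisfy $\E_g[h_m(g)h_n(g)] = \delta_{mn}$, i.e., they are orthonormal in $L^2(\gamma)$, where $\gamma$ is the standard Gaussian measure. I will also use the standard fact that polynomials are dense in $L^2(\gamma)$; equivalently, $\{h_k\}$ is complete. Then every square-integrable $f$ expands as $f = \sum_k \hat f(k) h_k$ in $L^2(\gamma)$, with $\hat f(k) = \E[f(g)h_k(g)]$, matching the definition in the statement, and Parseval gives $\sum_k \hat f(k)^2 = \E[f(g)^2] < \infty$.

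\textbf{Step 1: finite sums.} Let $f_1^{(N)} = \sum_{k\le N}\hat{f_1}(k)h_k$, and define $f_2^{(N)}$ similarly. By bilinearity and Lemma~\ref{lem:hermite_orthogonality},
\[
\E[f_1^{(N)}(g_1)f_2^{(N)}(g_2)] = \sum_{m,n\le N}\hat{f_1}(m)\hat{f_2}(n)\frac{\delta_{mn}\, m!\,\rho^m}{\sqrt{m!\,n!}} = \sum_{k\le N}\hat{f_1}(k)\hat{f_2}(k)\rho^k .
\]

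\textbf{Step 2: passing to the limit.} Consider the bilinear form $B(u,v) = \E_{g_1,g_2\sim_\rho}[u(g_1)v(g_2)]$. Since $g_1$ and $g_2$ are each marginally standard Gaussian, Cauchy--Schwarz gives
\[
|B(u,v)| \le \|u\|_{L^2(\gamma)}\|v\|_{L^2(\gamma)} .
\]
Hence $B$ is jointly continuous on $L^2(\gamma)\times L^2(\gamma)$. Because $f_\ell^{(N)}\to f_\ell$ in $L^2(\gamma)$, we get $B(f_1^{(N)},f_2^{(N)})\to B(f_1,f_2)$. On the other side, the series $\sum_k \hat{f_1}(k)\hat{f_2}(k)\rho^k$ converges absolutely, since $|\rho|\le 1$ and Cauchy--Schwarz on the coefficient sequences bounds it by $\|f_1\|\,\|f_2\|$. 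So the partial sums from Step~1 converge to the full series, which proves the identity.

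The only nontrivial input is completeness of the Hermite system in $L^2(\gamma)$. This is standard (e.g., \cite{szeg1939orthogonal}), and the paper's phrase ``allows us to define an orthonormal basis'' already assumes it. The degenerate cases $\rho = \pm 1$ need no special treatment: Step~2 uses only the marginals, and Lemma~\ref{lem:hermite_orthogonality} covers all $\rho\in[-1,1]$.
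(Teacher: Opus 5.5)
Your argument is correct and is essentially the standard proof the paper defers to (Chapter 11.2 of O'Donnell's book): orthonormality of $\He_k/\sqrt{k!}$ from Lemma~\ref{lem:hermite_orthogonality} at $\rho=1$ plus completeness of the Hermite system, the termwise identity on truncations, and a continuity argument to pass to the limit. Your Cauchy--Schwarz bound $|B(u,v)|\le\|u\|\,\|v\|$, which uses only the Gaussian marginals, plays the role that contractivity of the Gaussian noise operator plays there, and it also covers the endpoint cases $\rho=\pm 1$.
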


We refer the readers to Chapter 11.2 in~\cite{o2014analysis} for proofs of Lemma~\ref{lem:hermite_orthogonality} and Lemma~\ref{lem:hermite_inner_product}. 

\begin{lemma}\label{lem:indicator_hermite}
    Let $s_1, s_2 \in \mathbb{R}$ and $\rho \in [-1, 1]$. We have
    \begin{equation}
        \E_{g_1, g_2 \sim_\rho \mathcal{N}(0, 1)}[\bone[g_1 \geq s_1]\cdot\bone[g_2 \geq s_2]] = (1 - \Phi(s_1))(1 - \Phi(s_2)) + \sum_{k = 1}^\infty \frac{\He_{k-1}(s_1) \varphi(s_1)\He_{k-1}(s_2) \varphi(s_2)}{k!}\cdot\rho^k.
    \end{equation}
\end{lemma}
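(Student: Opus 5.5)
The plan is to expand both indicators in the Hermite basis and then apply Lemma~\ref{lem:hermite_inner_product} with $f_1 = \bone[\,\cdot \geq s_1]$ and $f_2 = \bone[\,\cdot \geq s_2]$. These functions are bounded, hence square-integrable, so the lemma gives
\[
\E_{g_1,g_2\sim_\rho\mathcal N(0,1)}\bigl[\bone[g_1\ge s_1]\,\bone[g_2\ge s_2]\bigr]=\sum_{k=0}^\infty \hat f_1(k)\hat f_2(k)\rho^k .
\]
The whole task then reduces to computing $\hat f(k)$ for $f=\bone[\,\cdot\ge s]$.

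For $k=0$ we have $\He_0=1$, so $\hat f(0)=\Pr[g\ge s]=1-\Phi(s)$. This produces the leading term $(1-\Phi(s_1))(1-\Phi(s_2))$.

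For $k\ge1$ we use the Rodrigues-type definition. Since $\He_k(x)\varphi(x) = (-1)^k \frac{d^k}{dx^k}\varphi(x)$, we get
\[
\int_s^\infty \He_k(x)\varphi(x)\,dx = (-1)^k\Bigl[\tfrac{d^{k-1}}{dx^{k-1}}\varphi(x)\Bigr]_s^\infty = (-1)^{k-1}\tfrac{d^{k-1}}{dx^{k-1}}\varphi(x)\Big|_{x=s} = \He_{k-1}(s)\varphi(s).
\]
Here the boundary term at $\infty$ vanishes because every derivative of $\varphi$ is a polynomial times $\ee^{-x^2/2}$. It follows that $\hat f(k)=\He_{k-1}(s)\varphi(s)/\sqrt{k!}$.

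Multiplying the coefficients for $s_1$ and $s_2$ gives $\He_{k-1}(s_1)\varphi(s_1)\He_{k-1}(s_2)\varphi(s_2)/k!$ as the coefficient of $\rho^k$, which is exactly the claimed identity. The only point that needs care is the sign bookkeeping in the $k\ge1$ computation, which checks out as shown above. Convergence for $|\rho|\le 1$ is already part of Lemma~\ref{lem:hermite_inner_product} via Parseval for $L^2$ functions, so no further obstacle is expected.
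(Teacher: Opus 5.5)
Your proposal is correct and follows essentially the same route as the paper. Both compute the Hermite coefficients of the threshold indicator: $1-\Phi(s)$ at $k=0$, and $\He_{k-1}(s)\varphi(s)/\sqrt{k!}$ for $k\ge1$, via the Rodrigues identity $\He_k\varphi=(-1)^k\varphi^{(k)}$, which is equivalent to the paper's $\frac{\dd}{\dd x}\bigl(\He_{k-1}\varphi\bigr)=-\He_k\varphi$. Both then substitute these coefficients into Lemma~\ref{lem:hermite_inner_product}.
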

\begin{proof}
    Let us first compute the Hermite coefficients of $\bone[x \geq s]$. We have $\E_{g \sim \mathcal{N}(0, 1)}[\bone[g \geq s]\He_0(g)] = \E_{g \sim \mathcal{N}(0, 1)}[\bone[g \geq s]] = 1 - \Phi(s)$. For any $k \geq 1$, we have 
    \begin{equation}
        \E_{g \sim \mathcal{N}(0, 1)}[\bone[g \geq s] \cdot \He_k(g)] = \int_{s}^{\infty} \He_k(x) \varphi(x)\dd x = \He_{k-1}(s) \varphi(s),
    \end{equation}
    where in the last equality we used the fact that $\frac{\dd}{\dd x}\He_{k-1}(x)\varphi(x) = -\He_k(x)\varphi(x)$. The lemma now follows by Lemma~\ref{lem:hermite_inner_product}.
\end{proof}

\begin{lemma}\label{lem:plackett}
    Let $s_1, s_2 \in \mathbb{R}$ and $\rho \in (-1, 1)$.  We have
    \begin{equation}
        \frac{\partial}{\partial \rho}\E_{g_1, g_2 \sim_\rho \mathcal{N}(0, 1)}[\bone[g_1 \geq s_1]\cdot\bone[g_2 \geq s_2]] = \varphi_\rho(s_1, s_2).
    \end{equation}
\end{lemma}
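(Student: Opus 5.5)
We prove Plackett's identity via the Hermite expansion in Lemma~\ref{lem:indicator_hermite}. Differentiate that series term by term in $\rho$, and show that the result equals the Mehler expansion of the bivariate density $\varphi_\rho(s_1,s_2)$.

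\textbf{Step 1: differentiate the series.} By Lemma~\ref{lem:indicator_hermite}, for $\rho\in(-1,1)$,
\[
\E[\bone[g_1\ge s_1]\bone[g_2\ge s_2]] = (1-\Phi(s_1))(1-\Phi(s_2)) + \sum_{k\ge 1}\frac{\He_{k-1}(s_1)\varphi(s_1)\He_{k-1}(s_2)\varphi(s_2)}{k!}\rho^k .
\]
Differentiating termwise gives
\[
\varphi(s_1)\varphi(s_2)\sum_{m\ge 0}\frac{\He_m(s_1)\He_m(s_2)}{m!}\rho^m .
\]
Termwise differentiation is legitimate if the differentiated series converges locally uniformly on $(-1,1)$. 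This follows from the standard bound $|\He_m(x)|\le C_x\sqrt{m!}\,e^{x^2/4}$ (Cram\'er's inequality). With it, the $m$-th term is $O(|\rho|^m)$ for fixed $s_1,s_2$, so the power series has radius of convergence at least $1$.

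\textbf{Step 2: identify the sum.} Mehler's formula states that
\[
\sum_{m\ge 0}\frac{\He_m(x)\He_m(y)}{m!}\rho^m = \frac{\varphi_\rho(x,y)}{\varphi(x)\varphi(y)} .
\]
Multiplying through by $\varphi(s_1)\varphi(s_2)$ yields exactly $\varphi_\rho(s_1,s_2)$, which is the claim.

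If we want to avoid citing Mehler's formula, we can verify it from Lemma~\ref{lem:hermite_inner_product}. For nice test functions $h_1,h_2$, that lemma gives
\[
\E_\rho[h_1(g_1)h_2(g_2)] = \sum_k \hat h_1(k)\hat h_2(k)\rho^k .
\]
Writing both sides as integrals against $\varphi(x)\varphi(y)$ identifies the kernel $\sum_k \He_k(x)\He_k(y)\rho^k/k!$ with the density ratio $\varphi_\rho/(\varphi\varphi)$ almost everywhere, and continuity upgrades this to pointwise equality.

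\textbf{Alternative route.} We can also argue directly. Write $g_2=\rho g_1+\sqrt{1-\rho^2}\,z$, so the probability becomes
\[
\int_{s_1}^\infty \varphi(x)\Big(1-\Phi\big(\tfrac{s_2-\rho x}{\sqrt{1-\rho^2}}\big)\Big)\,dx .
\]
Differentiate under the integral in $\rho$, then integrate by parts using the heat-equation identity $\partial_\rho \varphi_\rho = \partial_{x}\partial_{y}\varphi_\rho$.

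\textbf{Main obstacle.} The only real difficulty is justifying the interchange of derivative and infinite sum (equivalently, the convergence of the Mehler series for $|\rho|<1$). The Hermite growth bound in Step 1 handles this.
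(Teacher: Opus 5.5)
Your proof is correct, but it takes a different route from the paper. The paper disposes of this lemma in one line by citing Plackett's identity: the bivariate normal density satisfies $\partial_\rho \varphi_\rho(x,y) = \partial_x\partial_y\varphi_\rho(x,y)$. Differentiating $\int_{s_1}^\infty\int_{s_2}^\infty\varphi_\rho(x,y)\,\dd x\,\dd y$ under the integral sign and integrating the mixed partial then leaves exactly $\varphi_\rho(s_1,s_2)$.

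You instead start from the Hermite expansion of Lemma~\ref{lem:indicator_hermite}. You note that the orthant probability is a power series in $\rho$ with radius of convergence at least $1$, so termwise differentiation on $(-1,1)$ is automatic. After reindexing with $k/k! = 1/(k-1)!$, you recognise the derivative as $\varphi(s_1)\varphi(s_2)$ times the Mehler kernel, which is indeed $\varphi_\rho(s_1,s_2)/(\varphi(s_1)\varphi(s_2))$. For Step 1 you do not even need Cram\'er's inequality: Corollary~\ref{cor:he_fact_asymptotics} already shows that $\He_m(s)/\sqrt{m!}$ is bounded for fixed $s$.

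What your route buys is that it stays inside the Hermite machinery of Appendix~\ref{app:hermite} and exhibits the orthant probability as an analytic function of $\rho$. The caveat is that, given Lemma~\ref{lem:indicator_hermite}, the lemma is literally equivalent to Mehler's formula at $(s_1,s_2)$. So citing Mehler is no lighter than citing Plackett, unless you carry out the test-function identification you sketch. That sketch does work, provided you use a uniform-on-compacts Hermite bound both to swap sum and integral and to get continuity of the kernel. Plackett's identity, by contrast, is a short calculus fact valid for any multivariate Gaussian and any off-diagonal covariance entry.

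Your alternative route is sound in substance, but the heat-equation identity belongs with the double-integral representation. In the conditional form you simply differentiate $1-\Phi\bigl((s_2-\rho x)/\sqrt{1-\rho^2}\bigr)$ in $\rho$, substitute $v=(x-\rho s_2)/\sqrt{1-\rho^2}$, and use $v\varphi(v) = -\varphi'(v)$.
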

\begin{proof}
    Follows directly from Plackett's identity~\cite{plackett1954reduction}.
\end{proof}

We use the following asymptotic estimate for Hermite polynomials.
\begin{lemma}[See e.g., Eq. 8.22.8 in~\cite{szeg1939orthogonal}]\label{lem:hermite_asymptotics}
    Let $s \in \mathbb{R}$. Then we have
    \begin{equation}
        \frac{\Gamma(k/2 + 1)}{\Gamma(k + 1)} \cdot \ee^{-\frac{s^2}{4}} \cdot \He_k(s)\cdot 2^{\frac{k}{2}} = \cos\left(\sqrt{k} \cdot s - k\pi/2\right) + O(k^{-1/2}).
    \end{equation} 
\end{lemma}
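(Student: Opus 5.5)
We will not redo the asymptotic analysis from scratch. Instead we reduce the statement to the classical Plancherel--Rotach-type estimate for the \emph{physicist's} Hermite polynomials $H_n$ recorded in Szeg\H{o} (Eq.~8.22.8). With Szeg\H{o}'s normalization, that estimate reads
\[
\frac{\Gamma(n/2+1)}{\Gamma(n+1)}\, \ee^{-x^2/2} H_n(x) = \cos\left(\sqrt{2n+1}\, x - n\pi/2\right) + O(n^{-1/2}),
\]
uniformly for $x$ in any bounded interval. The plan is to translate this into the probabilist's normalization used in the paper and then absorb the small discrepancies into the $O(k^{-1/2})$ error term.

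The first step is the conversion between the two families. From the Rodrigues formulas, $H_n(x) = (-1)^n \ee^{x^2}\frac{d^n}{dx^n}\ee^{-x^2}$ and $\He_n$ as defined above. The substitution $x = s/\sqrt{2}$ in the derivative then gives the standard identity $H_n(s/\sqrt2) = 2^{n/2}\He_n(s)$; we will check it on $n=0,1,2$ as a sanity test. Plugging $x = s/\sqrt2$ into Szeg\H{o}'s estimate with $n=k$ turns the left-hand side into
\[
\frac{\Gamma(k/2+1)}{\Gamma(k+1)}\,\ee^{-s^2/4}\,\He_k(s)\,2^{k/2},
\]
which is exactly the left-hand side of the lemma. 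The right-hand side becomes $\cos(\sqrt{k+1/2}\; s - k\pi/2) + O(k^{-1/2})$.

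The second step replaces $\sqrt{k+1/2}$ by $\sqrt k$. For fixed $s$ (or $s$ in a bounded set, which is how the lemma is used in Appendix~\ref{app:variance}, since the thresholds are bounded), we have
\[
|\sqrt{k+1/2}-\sqrt k| = \frac{1/2}{\sqrt{k+1/2}+\sqrt k} = O(k^{-1/2}).
\]
Since cosine is $1$-Lipschitz, the arguments differ by $O(|s|k^{-1/2})$, and this is absorbed into the error term. That yields the claimed formula, with the implied constant depending only on a bound for $|s|$.

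The only real obstacle is bookkeeping of conventions. We must make sure the cited form of 8.22.8 uses the same normalization as the physicist's $H_n$ with leading coefficient $2^n$, and that the prefactor $\Gamma(n/2+1)/\Gamma(n+1)$ is correct rather than a variant such as $\lambda_n$ with an extra constant. We will verify this by comparing against the known exact values at $x=0$: $H_{2m}(0) = (-1)^m (2m)!/m!$, so the left side at $x=0$ equals $(-1)^m$, which matches $\cos(m\pi)$ with no error. Should the citation need to be avoided, the fallback is a direct derivation from the integral representation
\[
\He_k(s) = \E_{g}\left[(s+ig)^k\right]
\]
via a saddle-point analysis, which produces the same leading term. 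We expect this to be unnecessary given the classical reference.
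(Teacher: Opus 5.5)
Your proposal is correct and matches the paper, which offers no argument beyond citing Szeg\H{o}'s Eq.~8.22.8 for the physicist's $H_n$. Your substitution $x = s/\sqrt{2}$ with $H_k(s/\sqrt{2}) = 2^{k/2}\He_k(s)$ turns that estimate into the stated left-hand side, and your replacement of $\sqrt{k+1/2}$ by $\sqrt{k}$ at cost $O(|s|k^{-1/2})$ supplies exactly the translation the paper leaves implicit. The resulting constant is uniform for $s$ in bounded sets, which is what the later corollaries and Lemma~\ref{lem:variance_2} require.
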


\begin{corollary}\label{cor:he_fact_asymptotics}
    Let $s \in \mathbb{R}$. Then we have
    \begin{equation}
        \frac{\He_{k-1}(s)}{\sqrt{k!}} = \ee^{s^2/4} \cdot \left(\frac{2}{\pi}\right)^{1/4} \cdot  k^{-3/4} \cdot \left(\cos\left(\sqrt{k-1} \cdot s- (k-1)\pi/2\right) + o(1)\right).
    \end{equation}
\end{corollary}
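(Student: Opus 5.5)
The plan is to apply Lemma~\ref{lem:hermite_asymptotics} with $k$ replaced by $n \coloneqq k-1$, solve for $\He_{n}(s)$, divide by $\sqrt{k!}$, and simplify the Gamma-function prefactor using Stirling's formula. Throughout, $s$ is fixed and all asymptotics are as $k \to \infty$.

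\textbf{Step 1: solve for the Hermite polynomial.} Rearranging Lemma~\ref{lem:hermite_asymptotics} with index $n$ gives
\[
\He_{n}(s) = \frac{\Gamma(n+1)}{\Gamma(n/2+1)\, 2^{n/2}}\, \ee^{s^2/4}\left(\cos\left(\sqrt{n}\, s - n\pi/2\right) + O(n^{-1/2})\right).
\]
Since $\sqrt{k!} = \sqrt{k}\,\sqrt{n!}$ and $\Gamma(n+1) = n!$, we obtain
\[
\frac{\He_{k-1}(s)}{\sqrt{k!}} = \frac{\sqrt{n!}}{\sqrt{k}\,\Gamma(n/2+1)\,2^{n/2}}\,\ee^{s^2/4}\left(\cos\left(\sqrt{n}\, s - n\pi/2\right) + O(n^{-1/2})\right).
\]

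\textbf{Step 2: Stirling's formula.} Stirling's formula $\Gamma(x+1) = \sqrt{2\pi x}\,(x/\ee)^x(1+o(1))$ gives
\[
\sqrt{n!} = (2\pi n)^{1/4}(n/\ee)^{n/2}(1+o(1))
\qquad\text{and}\qquad
\Gamma(n/2+1) = \sqrt{\pi n}\,\bigl(n/(2\ee)\bigr)^{n/2}(1+o(1)).
\]
In the ratio, the exponential factors cancel exactly:
\[
\frac{(n/\ee)^{n/2}}{\bigl(n/(2\ee)\bigr)^{n/2}\, 2^{n/2}} = 1.
\]
What remains is
\[
\frac{(2\pi n)^{1/4}}{(\pi n)^{1/2}} \cdot \frac{1}{\sqrt{k}} = \left(\frac{2}{\pi}\right)^{1/4} n^{-1/4}k^{-1/2}.
\]
Since $n^{-1/4} = k^{-1/4}(1+o(1))$, the prefactor equals $(2/\pi)^{1/4}\,k^{-3/4}\,(1+o(1))$.

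\textbf{Step 3: absorb the errors.} The expression in parentheses is $\cos(\cdot) + O(n^{-1/2})$, and the cosine is bounded by $1$. Multiplying it by $(1+o(1))$ therefore yields $\cos(\sqrt{k-1}\, s - (k-1)\pi/2) + o(1)$, which is exactly the claimed form.

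There is no real obstacle here. The only care needed is in Step 2: checking that the powers of $2$ and $\ee$ cancel exactly, and noting that the multiplicative error converts to an additive $o(1)$ only because the cosine term is bounded.
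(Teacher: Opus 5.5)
Your proposal is correct and follows essentially the same route as the paper: apply the Hermite asymptotic at index $k-1$, solve for $\He_{k-1}(s)$, and evaluate the prefactor $\Gamma(k)/(\sqrt{k!}\,2^{(k-1)/2}\,\Gamma((k+1)/2))$ via Stirling. Your factorization $\sqrt{k!}=\sqrt{k}\,\sqrt{(k-1)!}$ makes the exact cancellation of the exponential factors a bit more transparent than the paper's direct expansion of all three Gamma functions.
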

\begin{proof}
    By Lemma~\ref{lem:hermite_asymptotics} we have
    \begin{equation}\label{eq:he_factorial_expansion}
        \frac{\He_{k-1}(s)}{\sqrt{k!}} = \frac{\left(\cos\left(\sqrt{k - 1} \cdot s- (k-1)\pi/2\right) + O(k^{-1/2})\right) \cdot \Gamma(k)}{\sqrt{k!} \cdot 2^{(k-1)/2} \cdot \ee^{-s^2/4}\cdot \Gamma((k+1)/2)}.
    \end{equation}
    By Stirling's approximation we have $\Gamma(z) = \sqrt{2\pi}\ee^{-z}z^{z - 1/2} (1 + O(z^{-1}))$ for any $z > 0$, so we have
    \begin{align*}
        & \quad \frac{\Gamma(k)}{\sqrt{k!} \cdot 2^{(k-1)/2} \cdot \Gamma((k+1)/2)} \\
        & = \frac{\sqrt{2\pi}\ee^{-k}k^{k - 1/2} (1 + O(k^{-1}))}{\sqrt{\sqrt{2\pi}\ee^{-k-1}(k+1)^{k + 1/2} (1 + O(k^{-1}))} \cdot 2^{(k-1)/2} \cdot \sqrt{2\pi}\ee^{-(k+1)/2}((k+1)/2)^{k/2} (1 + O(k^{-1}))} \\
        & = \left(\frac{2}{\pi}\right)^{1/4} \cdot k^{-3/4} \cdot (1 + o(1)). 
    \end{align*}
    The corollary follows by plugging this back to~\eqref{eq:he_factorial_expansion}.
\end{proof}

\begin{corollary}\label{cor:hermite_tail_sum}
    Let $s \in \mathbb{R}$ and $m \in \mathbb{N}$. Then we have
    \begin{equation}
        \sum_{k = m}^\infty\frac{\He_{k-1}(s)^2}{k!} = \Theta(m^{-1/2}).
    \end{equation}
\end{corollary}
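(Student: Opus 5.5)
We need to show $\sum_{k\ge m} \He_{k-1}(s)^2/k! = \Theta(m^{-1/2})$ for fixed $s$, with the implied constants allowed to depend on $s$. The plan is to plug in Corollary~\ref{cor:he_fact_asymptotics}. Squaring it gives
\[
\frac{\He_{k-1}(s)^2}{k!} = \ee^{s^2/2}\left(\frac{2}{\pi}\right)^{1/2} k^{-3/2}\left(\cos^2\left(\sqrt{k-1}\, s-(k-1)\pi/2\right)+o(1)\right).
\]

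\textbf{Upper bound.} Since $\cos^2\le 1$, each term is $O(k^{-3/2})$ for all large $k$. The finitely many small $k$ are harmless: for $k$ below any fixed threshold $m$ is bounded, so $m^{-1/2}$ is bounded below. Comparing with $\int_{m-1}^\infty x^{-3/2}\,\dd x$ then gives $O(m^{-1/2})$.

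\textbf{Lower bound.} This is the only real obstacle, because individual terms can be nearly zero when the cosine vanishes. Write $\cos^2(\sqrt{k-1}\,s-(k-1)\pi/2)$ as $\cos^2(\sqrt{k-1}\,s)$ or $\sin^2(\sqrt{k-1}\,s)$ according to the parity of $k-1$.

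The key step is to pair consecutive indices $k,k+1$. The phase $\sqrt{k-1}\,s$ changes by only $O(|s|k^{-1/2})=o(1)$ between them, while the parity alternates. So for large $k$ the sum of the two cosine-squared factors is
\[
\cos^2 x+\sin^2 x+o(1)=1+o(1),
\]
with $x=\sqrt{k-1}\,s$. (If $s=0$ the factors are exactly $1,0,1,0,\dots$, and the pair sum is $1$ directly.)

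Since $k^{-3/2}$ and $(k+1)^{-3/2}$ agree up to a factor $1+o(1)$, each pair contributes at least $c\,k^{-3/2}$ for $k\ge k_0(s)$, for some $c>0$. Summing over pairs $k=m,m+2,\dots$ gives $\Omega(m^{-1/2})$ for $m\ge k_0$. For smaller $m$ the tail sum only increases as $m$ decreases, so it is at least the value at $m=k_0$, which is a positive constant. Hence it is also $\Omega(m^{-1/2})$ in that range, and combining both ranges proves $\Theta(m^{-1/2})$.
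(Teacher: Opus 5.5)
Your proposal is correct and follows essentially the same route as the paper. Both substitute the asymptotic from Corollary~\ref{cor:he_fact_asymptotics}, get the upper bound from $\cos^2\le 1$, and get the lower bound by pairing consecutive indices $k,k+1$, whose phases differ by nearly $\pi/2$ so the two squared cosines sum to roughly $1$. Your write-up is also slightly more careful than the paper's: you keep the $o(1)$ error additive rather than writing $\Theta(k^{-3/2})\cdot\cos^2(\cdot)$, and you handle small $m$ explicitly via monotonicity of the tail.
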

\begin{proof}
    By Corollary~\ref{cor:he_fact_asymptotics} we have
    \[ 
    \frac{\He_{k-1}(s)^2}{k!} = \Theta(k^{-3/2}) \cdot \cos^2(\sqrt{k-1} \cdot s- (k-1)\pi/2).
    \]
    Observe that the difference between the expressions inside cosine term tends to $\pi / 2$, so we have that when $k$ is sufficiently large
    \[
    \cos^2(\sqrt{k-1} \cdot s- (k-1)\pi/2) + \cos^2(\sqrt{k} \cdot s- (k)\pi/2) \geq 1/2.
    \]
    Hence, our corollary follows.
\end{proof}

\section{Variance of Sums of Gaussian Threshold Indicator Functions}\label{app:variance}

In this appendix, we analyze the variance of functions of the form 
\[\sum_{i = 1}^d w_i \bone\{g_i \geq s_i\},\] where $(g_1, \ldots, g_d)$ is a centered Gaussian vector with unit variances, $s_1, \ldots, s_d \in \mathbb{R}$ are some thresholds for the Gaussians, and $w_1, \ldots, w_d \geq 0$ are some nonzero weights. We prove a few different versions with different assumptions on the covariance structure or on the thresholds. They will be used to show that there is sufficient variance in a local neighborhood of a bounded-degree MAX 2-CSP instance for an algorithm to exploit, generalizing Lemma 7 from~\cite{hsieh2023approximating} which served a similar purpose for MAX CUT.

In the first version, there is no assumption on the covariance structure but the thresholds are assumed to be close to some nonzero real number.

\begin{lemma}\label{lem:variance_2sat}
    Let $(g_1, \ldots, g_d)$ be a centered Gaussian vector with unit variances. Let $s \in \mathbb{R}$ be nonzero, $\delta = |s| / 2$, $s_1, \ldots, s_d \in (s - \delta, s+ \delta)$, and $w_1, \ldots, w_d \in \mathbb{R}^{\geq 0}$, then we have
    \begin{equation}
        \Var\left[\sum_{i = 1}^d w_i \cdot \bone\{g_i \geq s_i\}\right] = \left(\sum_{i = 1}^d w_i\right)^2\cdot\Omega(d^{-1}),
    \end{equation}
    where the constant inside $\Omega$ depends only on $s$.
\end{lemma}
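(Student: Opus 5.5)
We expand the variance with the Hermite machinery of Appendix~\ref{app:hermite}. Write $X_i = \bone\{g_i \geq s_i\}$ and $\rho_{ij} = \E[g_ig_j]$. By Lemma~\ref{lem:indicator_hermite},
\[
\Var\Big[\sum_i w_i X_i\Big] = \sum_{k\geq 1} \frac{1}{k!}\sum_{i,j} w_iw_j a_k(s_i)a_k(s_j)\rho_{ij}^k,
\qquad a_k(x) := \He_{k-1}(x)\varphi(x).
\]
The matrix $(\rho_{ij}^k)_{i,j}$ is the $k$-th Hadamard power of a PSD correlation matrix, so it is PSD. Hence every $k$-th term is nonnegative, and we may keep any subset of the $k$'s as a lower bound.

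\textbf{Degree-$k$ term.} Let $u_i = w_ia_k(s_i)$. We lower bound the $k$-th term by $\frac{1}{k!}u^\top R^{\circ k}u$, where $R^{\circ k}$ denotes the $k$-th Hadamard power. We will use \emph{even} $k$, for which $\rho_{ij}^k\geq 0$. The key difficulty is that strong negative correlations might cancel mass. For even $k$ with $\rho^k$ entrywise nonnegative, the cancellation can come only from the signs of the $u_i$. This is where the hypothesis $s_i\in(s-|s|/2,s+|s|/2)$ enters: all $s_i$ lie strictly on one side of $0$, and are bounded away from $0$ by $|s|/2$, so the low-order Hermite values $\He_{k-1}(s_i)$ share a common sign for small $k$. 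Concretely, take $k=1$: $a_1(s_i)=\varphi(s_i)\geq \varphi(3|s|/2)>0$, but $\rho_{ij}$ may be negative. Take $k=2$: $a_2(s_i)=s_i\varphi(s_i)$, which all have the sign of $s$ with magnitude at least $\frac{|s|}{2}\varphi(3|s|/2)=:c_s>0$, and $\rho_{ij}^2\geq0$. Then
\[
\text{term}_2 = \tfrac12\sum_{i,j}|u_i||u_j|\rho_{ij}^2 \;\geq\; \tfrac12 \sum_{i,j} w_iw_j c_s^2\rho_{ij}^2 .
\]

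\textbf{Bounding $\sum_{i,j}w_iw_j\rho_{ij}^2$.} The diagonal alone gives $\sum_i w_i^2\geq (\sum_i w_i)^2/d$ by Cauchy--Schwarz. Therefore
\[
\Var \;\geq\; \tfrac{c_s^2}{2}\cdot\frac{(\sum_i w_i)^2}{d},
\]
which is the claimed bound with a constant depending only on $s$.

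The main obstacle was anticipated to be cancellation from negative correlations in the $k=1$ term. The plan avoids it by discarding that term (legitimate because every Hermite level is PSD) and using the $k=2$ level, where the nonzero-threshold assumption guarantees uniform sign and size of the coefficients. The remaining points to check are the exchange of the double sum with the Hermite series, which holds by absolute convergence via Lemma~\ref{lem:hermite_inner_product}, and the Schur product theorem for PSD-ness of $R^{\circ k}$.
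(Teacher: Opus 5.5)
Your proof is correct and is essentially the paper's argument: expand via Lemma~\ref{lem:indicator_hermite}, use positive semidefiniteness of $(\rho_{ij}^k)$ to keep only the $k=2$ level, observe that the coefficients $s_i\varphi(s_i)$ all share the sign of $s$ so the off-diagonal terms are nonnegative, and finish with the diagonal and Cauchy--Schwarz. Your explicit constant $c_s=\frac{|s|}{2}\varphi(3|s|/2)$ simply makes the paper's unspecified $c_s$ concrete.
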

\begin{proof}

    We have
    \begin{align*}
        \Var\left[\sum_{i = 1}^d w_i \cdot \bone\{g_i \geq s_i\}\right] & = \sum_{i, j = 1}^d w_iw_j \cdot \Cov[\bone\{g_i \geq s_i\}, \bone\{g_j \geq s_j\}] \\
        & = \sum_{i, j = 1}^d w_iw_j \cdot \left(\E[\bone\{g_i \geq s_i\}\bone\{g_j \geq s_j\}] -\E[\bone\{g_i \geq s_i\}]\E[\bone\{g_j \geq s_j\}] \right) \\
        & = \sum_{i, j = 1}^d w_iw_j \cdot \sum_{k = 1}^\infty\frac{\He_{k-1}(s_i) \varphi(s_i)\He_{k-1}(s_j) \varphi(s_j)}{k!}\cdot\rho_{ij}^k.
    \end{align*}
    Here $\rho_{ij}$ is the covariance between $g_i$ and $g_j$. In the last equality we used Lemma~\ref{lem:indicator_hermite}. Observe that for every fixed $k \in \mathbb{N}$ we have
    \begin{equation}
        \sum_{i, j = 1}^d w_iw_j \cdot \frac{\He_{k-1}(s_i) \varphi(s_i)\He_{k-1}(s_j) \varphi(s_j)}{k!}\cdot\rho_{ij}^k \,\geq\, 0,
    \end{equation}
    since $(\rho_{ij}^k)_{1 \leq i, j \leq d}$ is a positive semidefinite matrix, so we have
    \begin{align*}
        \Var\left[\sum_{i = 1}^d w_i \cdot \bone\{g_i \geq s_i\}\right] 
        & \geq \sum_{i, j = 1}^d w_iw_j \cdot \frac{\He_{1}(s_i) \varphi(s_i)\He_{1}(s_j) \varphi(s_j)}{2}\cdot\rho_{ij}^2 \\
        & = \sum_{i, j = 1}^d w_iw_j \cdot \frac{s_i \varphi(s_i)s_j \varphi(s_j)}{2}\cdot\rho_{ij}^2,
    \end{align*}
    where we used $\He_1(x) = x$. By our assumption, we have $\frac{s_i \varphi(s_i)s_j \varphi(s_j)}{2} > 0$, so we may ignore the non-diagonal terms and obtain
    \begin{equation}
    \Var\left[\sum_{i = 1}^d w_i \cdot \bone\{g_i \geq s_i\}\right] \geq \sum_{i = 1}^d w_i^2 \cdot  \frac{s_i^2 \varphi(s_i)^2}{2} \geq c_s \sum_{i = 1}^d w_i^2 \geq c_s \cdot \frac{1}{d} \left(\sum_{i = 1}^d w_i\right)^2,
    \end{equation}
    for some $c_s > 0$ depending only on $s$. 
\end{proof}

In the second version we assume that the covariances are somewhat bounded away from 1 in absolute value, and that the thresholds are also absolutely bounded.

\begin{lemma}\label{lem:variance_2}
    Fix some constants $B > 0$. Let $(g_1, \ldots, g_d)$ be a centered Gaussian vector with unit variances and pairwise covariances in $\left[-1 + \frac{1}{d^2\log^2d}, 1 - \frac{1}{d^2\log^2d}\right]$. Let $s_1, \ldots, s_d \in [-B, B]$ and $w_1, \ldots, w_d \in \mathbb{R}^{\geq 0}$, then we have
    \begin{equation}
        \Var\left[\sum_{i = 1}^d w_i \cdot \bone\{g_i \geq s_i\}\right] = \left(\sum_{i = 1}^d w_i\right)^2\cdot\Omega(d^{-2} \log^{-3/2}d),
    \end{equation}
    where the constant inside $\Omega$ depends only on $B$.
\end{lemma}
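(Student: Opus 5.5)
Start, as in the proof of Lemma~\ref{lem:variance_2sat}, by expanding the variance via Lemma~\ref{lem:indicator_hermite}:
\[
\Var\Big[\sum_i w_i \bone\{g_i \geq s_i\}\Big] = \sum_{k\ge 1} \sum_{i,j} w_i w_j a_{k,i} a_{k,j} \rho_{ij}^k, \qquad a_{k,i} = \frac{\He_{k-1}(s_i)\varphi(s_i)}{\sqrt{k!}}.
\]
Every $k$-th term is nonnegative, because $(\rho_{ij}^k)_{i,j}$ is PSD (Schur powers of a PSD matrix). We may therefore keep only a convenient range of $k$. The thresholds $s_i$ may now be zero, so the single level $k=2$ no longer suffices. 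Instead I will sum the levels $k \in [1, m]$ with $m \approx d^2\log^2 d$, or more precisely $m = c\, d^2 \log^2 d$ chosen so that $\rho^{k}$ with $|\rho| \le 1 - 1/(d^2\log^2 d)$ is tiny for $k \ge m\log(\cdot)$. The covariance bound is used to kill the off-diagonal contributions at high levels, while low levels are controlled only through PSD-ness.

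Concretely, write the level-$k$ quadratic form as $Q_k = \sum_{i,j} w_iw_j a_{k,i}a_{k,j}\rho_{ij}^k = \sum_i w_i^2 a_{k,i}^2 + \sum_{i\ne j}(\cdots)$. For $k \geq K := C d^2\log^2 d \cdot \log d$, we have $|\rho_{ij}|^k \le (1 - \tfrac{1}{d^2\log^2 d})^k \le d^{-C}$. Hence the off-diagonal part is at most $d^{-C}\big(\sum_i w_i |a_{k,i}|\big)^2$, and summed over $k \ge K$ this is at most $d^{-C}\cdot d \sum_i w_i^2 \sum_{k\ge K} a_{k,i}^2$ by Cauchy--Schwarz, which is negligible against the diagonal. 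Summing the diagonal terms for $k\ge K$ and using Corollary~\ref{cor:hermite_tail_sum} (with $\varphi(s_i)$ bounded below since $|s_i|\le B$) gives $\sum_{k \ge K} \sum_i w_i^2 a_{k,i}^2 = \Theta(K^{-1/2}) \sum_i w_i^2$. The constant in the $\Theta$ must be uniform over $s\in[-B,B]$; I would verify this by checking that the asymptotics in Lemma~\ref{lem:hermite_asymptotics} are uniform on compact sets, as Szeg\H{o}'s estimate is. Dropping all levels $k<K$ by PSD-ness, we get
\[
\Var \ge \Omega(K^{-1/2}) \sum_i w_i^2 \ge \Omega\big(d^{-1}\log^{-3/2} d\big)\cdot \frac{1}{d}\Big(\sum_i w_i\Big)^2,
\]
which is exactly $\Omega(d^{-2}\log^{-3/2}d)(\sum w_i)^2$. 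This accounts for the exponent: $K^{-1/2} = d^{-1}\log^{-3/2}d$ when $K = d^2\log^3 d$.

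The main obstacle is the tail-level bookkeeping. We need $(1-\eta)^K \le d^{-C}$ with $\eta = 1/(d^2\log^2 d)$, which forces $K = \Theta(d^2\log^3 d)$ and is exactly what produces $\log^{-3/2}$. The off-diagonal error also has to be handled uniformly in $k$: bounding $\sum_{k\ge K}(\sum_i w_i|a_{k,i}|)^2 \le d\sum_i w_i^2\sum_{k\ge K}a_{k,i}^2$ needs $\sum_{k\ge K} a_{k,i}^2 = O(K^{-1/2})$ uniformly, which is again the upper half of Corollary~\ref{cor:hermite_tail_sum}. Choosing $C\ge 2$ makes the error $d^{1-C}$ times the main term, so it is absorbed. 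The remaining care is the uniform lower bound in the tail-sum estimate, namely that the two consecutive $\cos^2$ terms sum to at least $1/2$ for all $k\ge K$, uniformly in $|s|\le B$. This holds once $K$ exceeds a constant depending on $B$.
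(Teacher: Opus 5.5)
Your argument is correct and is essentially the paper's proof: drop all levels $k<K$ using positive semidefiniteness of $(\rho_{ij}^k)_{i,j}$, lower-bound the diagonal tail via Corollary~\ref{cor:hermite_tail_sum} and Cauchy--Schwarz, and make the off-diagonal tail negligible using $(1-\frac{1}{d^2\log^2 d})^K \le d^{-C}$ with $K=\Theta(d^2\log^3 d)$, which is exactly where the $\log^{-3/2} d$ comes from. Delete your opening remark about summing the levels $k\in[1,m]$, since it contradicts what you actually do; your comparison of the off-diagonal error against $\sum_i w_i^2$ rather than $(\sum_i w_i)^2/d$, and your attention to uniformity of the tail estimate over $s\in[-B,B]$, are slightly more careful than the paper's write-up.
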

\begin{proof}
    As in the proof of Lemma~\ref{lem:variance_2sat}, for any $m \geq 1$ we have
    \begin{equation}\label{eq:sum_truncated}
        \Var\left[\sum_{i = 1}^d w_i \cdot \bone\{g_i \geq s_i\}\right] \geq \sum_{i, j = 1}^d w_iw_j \cdot \sum_{k = m}^\infty\frac{\He_{k-1}(s_i) \varphi(s_i)\He_{k-1}(s_j) \varphi(s_j)}{k!}\cdot\rho_{ij}^k.
    \end{equation}
    We break the sum on the right hand side of Equation~\eqref{eq:sum_truncated} into two parts: diagonal terms and non-diagonal terms. For diagonal terms we have by Corollary~\ref{cor:hermite_tail_sum} and Cauchy-Schwarz:
    \begin{equation}
        \sum_{i = 1}^d w_i^2 \cdot \sum_{k = m}^\infty\frac{(\He_{k-1}(s_i) \varphi(s_i))^2}{k!} \geq \frac{\left(\sum_{i = 1}^d w_i\right)^2}{d} \cdot \Omega(m^{-1/2}).
    \end{equation}

    For the off-diagonal terms, we have
    \begin{align*}
        & \,\left|\sum_{i \neq j} w_iw_j\cdot \sum_{k = m}^\infty\frac{\He_{k-1}(s_i) \varphi(s_i)\He_{k-1}(s_j) \varphi(s_j)}{k!}\cdot\rho_{ij}^k \right|\\
        \leq \, & \,\left(\sum_{i = 1}^d w_i\right)^2\cdot \sum_{k = m}^\infty\left|\frac{\He_{k-1}(s_i) \varphi(s_i)\He_{k-1}(s_j) \varphi(s_j)}{k!}\right|\cdot\left(1 - \frac{1}{d^2\log^2d}\right)^m \\
        = \, & \,\left(\sum_{i = 1}^d w_i\right)^2\cdot O(m^{-1/2})\cdot\left(1 - \frac{1}{d^2\log^2d}\right)^m
    \end{align*}
    where in the last step we again used Corollary~\ref{cor:hermite_tail_sum}. Now we take $m = c d^2 \log^3d$ for some sufficiently large constant $c > 0$ depending only on $B$, and it follows that the sum on the right hand side of Equation~\eqref{eq:sum_truncated} is $\left(\sum_{i = 1}^d w_i\right)^2\cdot\Omega(d^{-2} \log^{-3/2}d)$.
\end{proof}

In the last version, we assume that the covariances are bounded away from -1 by a constant, and again the thresholds are bounded in absolute value.

\begin{lemma}\label{lem:variance_main}
    Fix some constants $\eta, B > 0$. Let $(g_1, \ldots, g_d)$ be a centered Gaussian vector with unit variances and pairwise covariances at least $-1 + \eta$. Let $s_1, \ldots, s_d \in [-B, B]$ and $w_1, \ldots, w_d \in \mathbb{R}^{\geq 0}$, then we have
    \begin{equation}
        \Var\left[\sum_{i = 1}^d w_i \cdot \bone\{g_i \geq s_i\}\right] = \left(\sum_{i = 1}^d w_i\right)^2\cdot\Omega(d^{-2} \log^{-3/2}d),
    \end{equation}
    where the constant inside $\Omega$ depends only on $\eta$ and $B$.
\end{lemma}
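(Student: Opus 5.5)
The plan is to reduce to Lemma~\ref{lem:variance_2}. The only thing that lemma needs beyond our hypotheses is that pairwise covariances stay away from $+1$, since the hypothesis $\rho_{ij} \geq -1+\eta$ already keeps them away from $-1$. So the work is to handle groups of Gaussians that are nearly identical. Write $g_i = \br\cdot\bu_i$ for unit vectors $\bu_i$, $W = \sum_i w_i$, and $X = \sum_i w_i \bone\{g_i \geq s_i\}$.

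\textbf{Step 1 (clustering).} Fix a scale $\tau$ to be chosen later, roughly $\tau \approx d^{-2}\log^{-2} d$ up to constants. Form the clusters $K_1,\ldots,K_r$ as the connected components of the graph with edges $\{i,j\}$ whenever $\rho_{ij} > 1-\tau$, i.e.\ $\|\bu_i-\bu_j\| < \sqrt{2\tau}$.
\begin{itemize}
\item Any two vectors in different clusters then have $\rho \leq 1-\tau$.
\item Inside a cluster, the Euclidean diameter is at most $d\sqrt{2\tau} = O(\log^{-1} d)$, which is small.
\item In particular, all pairs inside a cluster have $\rho \geq 1-O(\log^{-2}d) \geq 0$.
\end{itemize}

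\textbf{Step 2 (sign of covariances).} By Lemma~\ref{lem:plackett}, $\rho \mapsto \Pr[g_i\geq s_i, g_j \geq s_j]$ is nondecreasing. At $\rho=0$ the indicators are independent, so $\Cov[\bone\{g_i\geq s_i\},\bone\{g_j\geq s_j\}] \geq 0$ whenever $\rho_{ij}\geq 0$. Hence every intra-cluster covariance is nonnegative.

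\textbf{Step 3 (reduce to one Gaussian per cluster).} For each cluster $K$, pick a representative $c(K)$ and replace each $g_i$ with $i\in K$ by $g_{c(K)}$. This gives $Y = \sum_K \sum_{i\in K} w_i \bone\{g_{c(K)} \geq s_i\}$. Each summand of $Y$ differs from the corresponding summand of $X$ with probability $O(\|\bu_i-\bu_{c(K)}\|)$, because the thresholds are bounded by $B$ and so the densities there are bounded. Consequently $\|X-Y\|_2$ is at most $W\cdot O(\sqrt{d\sqrt\tau})$. This estimate is too crude if used directly, and I return to it in Step 5.

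\textbf{Step 4 (Hermite bound for $Y$).} The representatives have pairwise covariances in $[-1+\eta,\,1-\tau]$. I would run the proof of Lemma~\ref{lem:variance_2} on $Y$, treating each cluster's sum $F_K(x)=\sum_{i\in K} w_i\bone\{x\geq s_i\}$ as a single function of one Gaussian. Its Hermite coefficients are $\widehat{F_K}(k)=\sum_{i\in K} w_i \He_{k-1}(s_i)\varphi(s_i)/\sqrt{k!}$.
\begin{itemize}
\item For the off-diagonal (inter-cluster) terms with $k\geq m$, the factor $(1-\min(\eta,\tau))^m$ kills them once $m = C\tau^{-1}\log d$.
\item For the diagonal term, I need $\sum_{k\geq m}\widehat{F_K}(k)^2 \gtrsim m^{-1/2}\sum_{i\in K} w_i^2$.
\end{itemize}

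\textbf{Step 5 (the obstacle: diagonal term with different thresholds).} The diagonal claim in Step 4 is where I expect the real difficulty. The thresholds $s_i$ within a cluster may differ, and the oscillating asymptotics of Corollary~\ref{cor:he_fact_asymptotics} could cause cancellation in $\widehat{F_K}(k)$.

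To avoid this, I would instead bound the cluster's contribution directly, without Hermite expansion. By Step 2, $\Var[\sum_{i\in K}w_i\bone\{g_i\geq s_i\}] \geq \sum_{i\in K} w_i^2\Phi(s_i)(1-\Phi(s_i)) = \Omega(\sum_{i\in K} w_i^2)$. The remaining task is to show that cross-cluster interaction cannot cancel this. Here I would use the structure of the Hermite tail argument: low-degree parts are discarded (they only add, by the PSD observation in Lemma~\ref{lem:variance_2sat}), and cross-cluster high-degree parts are exponentially small. This yields
\[
\Var[X] \geq \Omega(m^{-1/2})\sum_i w_i^2 \geq \Omega\bigl(W^2 d^{-1}\cdot d^{-1}\log^{-3/2} d\bigr)
\]
for $m \approx d^2\log^3 d$, which is the claimed bound. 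The $\eta$ dependence enters only through the decay rate of the inter-cluster negative correlations, and the $B$ dependence through $\varphi(s_i)$ and the constants in Corollary~\ref{cor:hermite_tail_sum}.
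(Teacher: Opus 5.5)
There is a genuine gap in Step 5. Steps 1--2 and the inter-cluster estimate in Step 4 are fine, and you abandon Step 3, so the proof rests entirely on Step 5. That step does not hold together.

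The positive-semidefiniteness observation from Lemma~\ref{lem:variance_2sat} lets you drop an entire degree-$k$ layer $V_k=\sum_{i,j}w_iw_ja_k(s_i)a_k(s_j)\rho_{ij}^k$, where $a_k(s)=\He_{k-1}(s)\varphi(s)/\sqrt{k!}$. It does not let you drop the cross-cluster part of a layer alone, because those parts are not sign-definite. Already at $k=1$ the $(i,j)$ term is $w_iw_j\varphi(s_i)\varphi(s_j)\rho_{ij}$, which is negative whenever $\rho_{ij}<0$. For example, take three unit vectors at $120^\circ$ with equal weights and thresholds, so $\eta=1/2$ and every cluster is a singleton. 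There the degree-$1$ cross-cluster part exactly cancels the degree-$1$ diagonal part.

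So Step 5 has only two consistent readings, and neither works.
\begin{itemize}
\item If you keep all degrees, your bound $\Var[\sum_{i\in K}w_i\bone\{g_i\geq s_i\}]\geq\Omega(\sum_{i\in K}w_i^2)$ is exposed to the negative low-degree cross-cluster covariances, which you never bound.
\item If you drop all degrees $k<m$, you need $\sum_{k\geq m}\sum_{i,j\in K}w_iw_ja_k(s_i)a_k(s_j)\rho_{ij}^k\gtrsim m^{-1/2}\sum_{i\in K}w_i^2$. That is precisely the high-degree within-cluster bound you flagged as the obstacle. A lower bound on a cluster's full variance says nothing about its high-degree part.
\end{itemize}
The factor $m^{-1/2}$ in your final display therefore has no justification.

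The missing idea is to smooth instead of cluster. Your Step 2 monotonicity is exactly the right tool, applied to the wrong comparison.
\begin{itemize}
\item With $\tau=d^{-2}\log^{-2}d$, set $g_i'=\sqrt{1-\tau}\,g_i+\sqrt{\tau}\,g_i''$ with fresh independent standard Gaussians $g_i''$. Each $g_i'$ is standard, and $\Cov[g_i',g_j']=(1-\tau)\rho_{ij}\in[-1+\tau,1-\tau]$ for $i\neq j$.
\item Lemma~\ref{lem:variance_2} then applies directly to $X'=\sum_iw_i\bone\{g_i'\geq s_i\}$, with no near-duplicate Gaussians to handle.
\item Compare $\Var[X]$ with $\Var[X']$ pair by pair using Lemma~\ref{lem:plackett}. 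Diagonal terms coincide.
\item If $\rho_{ij}\geq 0$, smoothing only shrinks the correlation, so the indicator covariance for $g$ is at least that for $g'$.
\item If $\rho_{ij}<0$, the indicator covariance for $g$ falls short of that for $g'$ by at most $\tau|\rho_{ij}|\sup_{\rho\in[-1+\eta,0]}\varphi_\rho(s_i,s_j)\leq\tau/(2\pi\sqrt{\eta})$, since $1-\rho^2\geq\eta$ on that range.
\end{itemize}
Summing, $\Var[X]\geq\Var[X']-O_\eta(W^2d^{-2}\log^{-2}d)$. This loss is lower order than the $W^2\,\Omega(d^{-2}\log^{-3/2}d)$ supplied by Lemma~\ref{lem:variance_2}. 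This is also where $\eta$ really enters: through the bounded bivariate density on the negative side, not through decay of cross-cluster correlations.
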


\begin{proof}
    Consider a new centered Gaussian vector $(g_1', \ldots, g_d')$ such that 
    \[
    g_i' = \sqrt{1 - d^{-2}\log^{-2}d}\cdot g_i + (d^{-1}\log^{-1}d) \cdot g_i'',
    \]
    where $g_1'', \ldots, g_d''$ are standard Gaussian variables independent from all other variables. Then $g_i'$ is also a standard Gaussian and $\Cov[g_i', g_j'] = (1 - d^{-2}\log^{-2}d)\cdot \Cov[g_i, g_j]$ for every distinct $i,j$. We may therefore apply Lemma~\ref{lem:variance_2} to $g_1', \ldots, g_d'$ and obtain
    \begin{equation}\label{eq:g_iprime}
        \Var\left[\sum_{i = 1}^d w_i \cdot \bone\{g_i' \geq s_i\}\right] = \left(\sum_{i = 1}^d w_i\right)^2\cdot\Omega(d^{-2} \log^{-3/2}d).
    \end{equation}
    We now compare $\Var\left[\sum_{i = 1}^d w_i \cdot \bone\{g_i' \geq s_i\}\right]$ and $\Var\left[\sum_{i = 1}^d w_i \cdot \bone\{g_i \geq s_i\}\right]$. For every pair $i, j$ such that $\Cov[g_i, g_j] \geq 0$, we have
    \begin{equation}
        \Cov[\bone\{g_i \geq s_i\}, \bone\{g_j \geq s_j\}] \geq \Cov[\bone\{g_i' \geq s_i\}, \bone\{g_j' \geq s_j\}]
    \end{equation}
    by Lemma~\ref{lem:plackett} since $\Cov[g_i, g_j] \geq \Cov[g_i'
    , g_j']$. For every pair $i, j$ such that $\Cov[g_i, g_j] < 0$, again by Lemma~\ref{lem:plackett} we have
    \begin{align*}
        & \Cov[\bone\{g_i' \geq s_i\}, \bone\{g_j' \geq s_j\}] - \Cov[\bone\{g_i \geq s_i\}, \bone\{g_j \geq s_j\}] \\
        \leq\, & (\Cov[g_i', g_j']-\Cov[g_i, g_j])  \cdot c(\eta, B)\\
        \leq\,& d^{-2}\log^{-2}d \cdot c(\eta, B),
    \end{align*}
    where $c(\eta, B) = \max_{\rho \in [-1 + \eta, 0], t_1, t_2 \in [-B, B]} \varphi_\rho(t_1, t_2)$. Thus, we have
    \begin{equation}
          \Var\left[\sum_{i = 1}^d w_i \cdot \bone\{g_i \geq s_i\}\right]- \Var\left[\sum_{i = 1}^d w_i \cdot \bone\{g_i' \geq s_i\}\right] 
         \geq \,  -\sum_{i, j = 1}^d w_i w_j \cdot d^{-2}\log^{-2}d \cdot c(\eta, B).
    \end{equation}
    Combined with Equation~\eqref{eq:g_iprime} we obtain $\Var\left[\sum_{i = 1}^d w_i \cdot \bone\{g_i \geq s_i\}\right] = \left(\sum_{i = 1}^d w_i\right)^2\cdot\Omega(d^{-2} \log^{-3/2}d)$ as desired.
\end{proof}

\end{document}